\newtheorem{theorem}{Theorem}
\newtheorem{lemma}[theorem]{Lemma}
\newtheorem{remark}[theorem]{Remark}
\newtheorem{definition}[theorem]{Definition}
\newcommand{\R}{\mathbb{R}}
\newcommand{\Z}{\mathbb{Z}}
\newcommand{\C}{\mathbb{C}}
\newcommand{\N}{\mathbb{N}}
\newcommand{\bP}{\mathbb{P}}
\newcommand{\dps}{\displaystyle}
\newcommand{\ii}{\infty}
\newcommand\1{{\ensuremath {\mathds 1} }}
\renewcommand\phi{\varphi}
\newcommand{\wto}{\rightharpoonup}
\newcommand{\cQ}{\mathcal{Q}}
\newcommand{\cR}{\mathcal{R}}
\newcommand{\cK}{\mathcal{K}}
\newcommand{\cE}{\mathcal{E}}
\newcommand\pscal[1]{{\ensuremath{\left\langle #1 \right\rangle}}}
\newcommand{\norm}[1]{ \left\| #1 \right\|}
\newcommand{\tr}{{\rm Tr}\,}
\renewcommand{\geq}{\geqslant}
\renewcommand{\leq}{\leqslant}
\renewcommand{\tilde}{\widetilde}
\newcommand{\eps}{\varepsilon}
\newcommand{\nn}{\nonumber}
\newcommand{\bx}{\mathbf{x}}
\newcommand{\bX}{\mathbf{X}}
\newcommand{\by}{\mathbf{y}}
\newcommand{\br}{\mathbf{r}}
\newcommand{\bk}{\mathbf{k}}
\newcommand{\bv}{\mathbf{v}}
\newcommand{\rd}{\mathrm{d}}
\title{Universal Functionals in Density Functional Theory}
\author[M. Lewin]{Mathieu Lewin}
\address{CNRS \& CEREMADE, Universit\'e Paris-Dauphine, PSL University, 75016 Paris, France}
\email{mathieu.lewin@math.cnrs.fr}
\author[E.H. Lieb]{Elliott H. Lieb}
\address{Departments of Mathematics and Physics, Jadwin Hall, Princeton University, Washington Rd., Princeton, NJ 08544, USA}
\email{lieb@princeton.edu}
\author[R. Seiringer]{Robert Seiringer}
\address{IST Austria (Institute of Science and Technology Austria), Am Campus 1, 3400 Klosterneuburg, Austria}
\email{robert.seiringer@ist.ac.at}
\date{\today}
\begin{document}

 \begin{abstract}
In this chapter we first review the Levy-Lieb functional, which gives the lowest kinetic and interaction energy that can be reached with all possible quantum states having a given density. We discuss two possible convex generalizations of this functional, corresponding to using mixed canonical and grand-canonical states, respectively. We present some recent works about the local density approximation, in which the functionals get replaced by purely local functionals constructed using the uniform electron gas energy per unit volume. We then review the known upper and lower bounds on the Levy-Lieb functionals. We start with the kinetic energy alone, then turn to the classical interaction alone, before we are able to put everything together. An appendix is devoted to the Hohenberg-Kohn theorem and the role of many-body unique continuation in its proof.

\bigskip

\noindent \sl \copyright~2019 by the authors. This paper may be reproduced, in its entirety, for non-commercial purposes.
 \end{abstract}

 \maketitle

 \tableofcontents

%%%%%%%%%%%%%%%%%%%%%%%%%%%%%%%%%%%%%%%%%%%%%%%%%%%%%%%%%
%%%%%%%%%%%%%%%%%%%%%%%%%%%%%%%%%%%%%%%%%%%%%%%%%%%%%%%%%
\section{Introduction}
%%%%%%%%%%%%%%%%%%%%%%%%%%%%%%%%%%%%%%%%%%%%%%%%%%%%%%%%%
%%%%%%%%%%%%%%%%%%%%%%%%%%%%%%%%%%%%%%%%%%%%%%%%%%%%%%%%%

Density Functional Theory (DFT) attempts to describe all the relevant information about a many-body quantum system at or near its ground state in terms of its one-body density $\rho(\br)$. In its orbital-free variational formulation by Levy~\cite{Levy-79} and Lieb~\cite{Lieb-83b}, DFT relies completely on a universal functional $\rho\mapsto F_{\rm LL}[\rho]$ which gives the lowest (kinetic plus interaction) energy that can be reached with all possible quantum states having a given density function $\br\mapsto\rho(\br)$. This functional is exact and it is able to describe interacting quantum Coulomb systems in their ground states. It is of course not known explicitly and one of the main purpose of DFT is to find suitable approximations. In this chapter we review known upper and lower bounds on this functional and discuss some regimes in which it simplifies. In particular we focus on the \emph{Local Density Approximation (LDA)} which becomes exact in the regime where the density is very flat on sufficiently large regions of space, as was recently proved in~\cite{LewLieSei-18,LewLieSei-19}. A simpler older approximation is the Thomas-Fermi functional~\cite{Thomas-27,Fermi-27} which is surprisingly accurate for heavy atoms~\cite{LieSim-73,LieSim-77b} and is reviewed in~\cite{Lieb-81b}.
%{\color{red}Chapter ?}.

It turns out that there are several possible Levy-Lieb-type functionals. Instead of considering $N$-particle wavefunctions one can as well work with mixed states~\cite{Lieb-83b}, or even with grand-canonical states for which only the average number of particles is fixed. The latter has not been thoroughly discussed in the literature. We consider the three possibilities in this chapter. When we discuss bounds, it is useful to consider the kinetic and interaction energies separately. This naturally brings in Lieb-Thirring~\cite{LieThi-75,LieThi-76} and Lieb-Oxford~\cite{Lieb-79,LieOxf-80} inequalities, which provide lower bounds on these two functionals.

In Appendix~\ref{sec:HK} we recall the Hohenberg-Kohn theorem, which is another important abstract result in DFT. It turns out that its proof relies on some unique continuation problems for $N$-particle systems, which are not yet completely understood.

\subsubsection*{Acknowledgments.} This project has received funding from the European Research Council (ERC) under the European Union's Horizon 2020 research and innovation programme (grant agreements MDFT No 725528 of M.L. and AQUAMS No 694227 of R.S.).

\subsubsection*{Notation.}  Everywhere in the chapter $\bx=(\br,\sigma)$ denotes both the space variable $\br\in\R^d$ and the spin variable $\sigma\in\Z_q$. Although the physical case corresponds to $d=3$ and $q=2$, it is sometimes useful to keep $d$ and $q$ general to better emphasize the role of the dimension and spin. To simplify our writing we use the convention that $\int_{\R^d\times\Z_q}f(\bx)\,\rd\bx=\sum_{\sigma\in\Z_q}\int_{\R^d}f(\br,\sigma)\,\rd\br$. For $N$ particles we use the notation $\bX=(\bx_1,...,\bx_N)\in(\R^d\times\Z_q)^N$ with a similar convention for the integral $\int_{(\R^d\times\Z_q)^N}$.

We recall that the one-body density $\rho_\Psi$ of an $N$-particle fermionic (normalized) wavefunction
$$\Psi\in \bigwedge_1^NL^2(\R^d\times\Z_q,\C)$$
is defined by
$$\rho_\Psi(\br):=N\sum_{\sigma_1,...,\sigma_N\in\Z_q}\int_{\R^d}\cdots \int_{\R^d}|\Psi(\br,\sigma_1,\br_2,\sigma_2,\dots,\br_N,\sigma_N)|^2\,\rd\br_2\cdots \rd\br_N$$
The interpretation of $\rho_\Psi$ is that it provides the average number of particles in space, without taking their spin into account.

%%%%%%%%%%%%%%%%%%%%%%%%%%%%%%%%%%%%%%%%%%%%%%%%%%%%%%%%%
%%%%%%%%%%%%%%%%%%%%%%%%%%%%%%%%%%%%%%%%%%%%%%%%%%%%%%%%%
\section{Universal functionals in Density Functional Theory}
%%%%%%%%%%%%%%%%%%%%%%%%%%%%%%%%%%%%%%%%%%%%%%%%%%%%%%%%%
%%%%%%%%%%%%%%%%%%%%%%%%%%%%%%%%%%%%%%%%%%%%%%%%%%%%%%%%%

Following~\cite{Levy-79,Lieb-83b} we express the ground state energy as a variational problem involving only the density, and we discuss some (known and unknown) mathematical properties of the associated  functionals.

\subsection{The Levy-Lieb universal functional}
For completeness we consider general interaction potentials $w$ in any space dimension $d$. This is useful to understand the particularities of the physical case at work in DFT, namely the Coulomb interaction $w(\br)=|\br|^{-1}$ in dimension $d=3$.

Let
$$v_+\in L^1_{\rm loc}(\R^d,\R),\qquad v_-,w\in L^p(\R^d,\R)+L^\ii(\R^d),$$
with $w$ even, with $v_\pm \geq 0$, with $v := v_{+} - v_{-}$ and with
\begin{equation}
p\begin{cases}
=1&\text{when $d=1$,}\\
>1&\text{when $d=2$,}\\
=\frac{d}{2}&\text{when $d\geq3$.}
    \end{cases}
    \label{eq:hyp_p}
\end{equation}
Under the assumption~\eqref{eq:hyp_p}, the $N$-body potential
$$W_N^{v,w}(\br_1,...,\br_N):=\sum_{j=1}^Nv(\br_j)+\sum_{1\leq j< k\leq N}w(\br_j-\br_k)$$
is infinitesimally $(-\Delta)$--form bounded from below, which means that
$$\int_{(\R^d\times\Z_q)^N}W_N^{v,w}|\Psi|^2\geq -\eps\int_{(\R^d\times\Z_q)^N}|\nabla\Psi|^2-C_{N,\eps}\int_{(\R^d\times\Z_q)^N}|\Psi|^2$$
for all $\Psi$ and all $\eps>0$, with $C_{N,\eps}$ a constant depending only on $\eps$ and $N$. From this we can deduce that the quadratic form associated with the symmetric $N$-body operator
\begin{equation}\label{eq:H_Nvw}
H^{v,w}_N:=\sum_{j=1}^N\left(-\frac{\Delta_{\br_j}}2+v(\br_j)\right)+\sum_{1\leq j< k\leq N}w(\br_j-\br_k)
\end{equation}
is closed on the energy space
\begin{multline}
\cQ(H_N^{v,w}):=\bigg\{\Psi\in \bigwedge_1^NL^2(\R^d\times\Z_q)\ :\\ \int_{(\R^d\times\Z_q)^N}|\nabla\Psi(\bX)|^2\,\rd\bX+\int_{\R^d}v_+(\br)\,\rho_\Psi(\br)\,\rd\br<\ii\bigg\}.
\label{eq:form_domain}
\end{multline}
This allows us to work with the associated \emph{Friedrichs self-adjoint realization} of $H_N^{v,w}$, see~\cite[Sec.~VIII.6]{ReeSim1} and~\cite[Sec.~X.3]{ReeSim2}.
Everywhere we work with fermions, that is, the wavefunction $\Psi$ is assumed to be anti-symmetric with respect to exchanges of its variables. Note that the bosonic case is obtained when the number of spin states $q$ is equal to $N$, see~\cite[Sec.~3.1.2--3.1.3]{LieSei-09}.

The Hamiltonian~\eqref{eq:H_Nvw} describes $N$ fermionic particles evolving in $\R^d$, with $q$ spin states, submitted to an external potential $v$ and interacting via the pair potential $w$. An important physical quantity is the \emph{ground state energy} of the system which is obtained by minimizing the corresponding energy
$$\boxed{E_N[v]:=\inf_{\substack{\Psi\in \cQ(H^{v,w}_N)\\ \int_{(\R^d\times\Z_q)^N}|\Psi|^2=1}}\pscal{\Psi,H^{v,w}_N\Psi}}$$
where $\pscal{\Psi,H^{v,w}_N\Psi}$ is understood in the sense of quadratic forms. By the variational principle, this is just the bottom of the spectrum of the operator~$H^{v,w}_N$:
$$E_N[v]:=\min\sigma\left(H^{v,w}_N\right)$$
in the fermionic subspace. We do not emphasize the interaction $w$ in our notation since it will usually be fixed. When needed we will instead use the notation $E_N^w[v]$.

The main idea of~\cite{Levy-79,Lieb-83b} is to replace the infimum over $\Psi$ by a two-step minimization
$$\inf_\Psi\;(\cdots)= \inf_{\rho}\inf_{\substack{\Psi\\ \rho_\Psi=\rho}}(\cdots)$$
where we first minimize over the density $\rho$ and then over all the wavefunctions having this prescribed density. This procedure requires to identify the set of \emph{$N$-representable densities}, that is, those arising from a $\Psi$ in the form domain of $H_N^{v,w}$, a question that we now address.

The Hoffmann-Ostenhof inequality~\cite{Hof-77} states that
\begin{equation}
 \sum_{j=1}^N\int_{(\R^d\times\Z_q)^N}|\nabla_j \Psi(\bX)|^2\,\rd\bX\geq \int_{\R^d}\left|\nabla\sqrt{\rho_\Psi}(\br)\right|^2\,\rd\br
 \label{eq:Hoffmann-Ostenhof}
\end{equation}
for all (bosonic or fermionic) wavefunctions (see also Theorem~\ref{thm:HO} below). This inequality implies that we should restrict ourselves to densities such that
$$\sqrt{\rho}\in H^1(\R^d)=\left\{f\in L^2(\R^d)\ :\ \nabla f\in L^2(\R^d)\right\}.$$
This turns out to be the optimal condition.

\begin{theorem}[Representability of the one-particle density~\cite{Lieb-83b}]\label{thm:representability}
Let $\rho\in L^1(\R^d,\R_+)$ be such that $\sqrt\rho\in H^1(\R^d)$ and $\int_{\R^d}\rho(\br)\,\rd\br=N\in\mathbb{N}$. Then there exists one normalized antisymmetric wavefunction $\Psi\in \bigwedge_1^NL^2(\R^d\times\Z_q,\C)$ of finite kinetic energy, $\int_{(\R^d\times\Z_q)^N}|\nabla\Psi|^2<\ii$, such that $\rho=\rho_\Psi$.
\end{theorem}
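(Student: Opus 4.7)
I would construct an explicit Slater determinant realizing the given density. Concretely, the goal is to find $N$ orthonormal orbitals $\phi_1,\ldots,\phi_N \in H^1(\R^d\times\Z_q)$ whose densities sum to $\rho$, i.e. $\sum_{k=1}^N\sum_{\sigma\in\Z_q}|\phi_k(\br,\sigma)|^2 = \rho(\br)$. Then $\Psi := (N!)^{-1/2}\det(\phi_k(\bx_j))$ is antisymmetric, normalized, has $\rho_\Psi=\rho$, and its kinetic energy equals $\sum_k\int|\nabla\phi_k|^2$.

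\textbf{Spin-rich case ($N\leq q$).} I would first dispose of the easy regime where spin space is at least as large as $N$, by setting $\phi_k(\br,\sigma) := \sqrt{\rho(\br)/N}\,\delta_{\sigma,k}$ for $k=1,\ldots,N$. Orthonormality is automatic because the orbitals live in pairwise distinct spin sectors; the total density is $\rho$; and the kinetic energy is $\int_{\R^d}|\nabla\sqrt\rho|^2<\ii$ by hypothesis.

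\textbf{Harriman-type construction for $N>q$.} When $N$ exceeds $q$, some orbitals must share a spin label, so phase factors are needed to secure orthogonality. In dimension $d=1$, I would use Harriman's device: put $G(x) := N^{-1}\int_{-\infty}^x\rho(t)\,dt$, a monotone map of $\R$ onto $[0,1)$, and set $\phi_k(x):=\sqrt{\rho(x)/N}\,e^{2\pi i k G(x)}$ in a fixed spin sector. The substitution $u=G(x)$, $du=\rho(x)/N\,dx$ turns $\int\overline{\phi_k}\phi_l\,dx$ into $\int_0^1 e^{2\pi i(l-k)u}du=\delta_{kl}$, while $\sum_k|\phi_k|^2=\rho$ by construction. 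In dimension $d\geq 2$ I would replace $G$ by the anisotropic cumulative distribution $G(\br) := A(r_\perp)^{-1}\int_{-\infty}^{r_1}\rho(t,r_\perp)\,dt$ with $A(r_\perp):=\int_\R\rho(t,r_\perp)\,dt$; the same substitution in $r_1$ (now producing a Jacobian factor $A(r_\perp)/N$) followed by integration in $r_\perp$ still yields $\delta_{kl}$, because $\int_{\R^{d-1}}A(r_\perp)\,dr_\perp=N$.

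\textbf{Main obstacle.} The genuinely delicate point is showing that each $\nabla\phi_k\in L^2$. A direct computation gives
$$|\nabla\phi_k|^2 \;=\; |\nabla\sqrt{\rho/N}|^2 \;+\; \frac{4\pi^2 k^2}{N}\,\rho\,|\nabla G|^2,$$
so one needs $\sqrt\rho\,|\nabla G|\in L^2(\R^d)$. The longitudinal derivative $\partial_{r_1}G=\rho/A$ contributes $\rho^{3/2}/A$, while the transverse derivatives involve $\nabla_\perp A/A$ and $\nabla_\perp\int_{-\infty}^{r_1}\rho$; neither is controlled in $L^2$ from $\sqrt\rho\in H^1$ alone. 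My strategy to overcome this is a density argument: prove the theorem first for smooth, strictly positive, compactly supported $\rho$ (where all the above quantities are classically bounded), then approximate a general $\rho$ by such $\rho_n$ with $\sqrt{\rho_n}\to\sqrt\rho$ in $H^1$ and $\int\rho_n=N$, obtain $\Psi_n$ with uniformly bounded kinetic energy, extract a weak $H^1$-limit $\Psi_n\wto\Psi$, and use local strong convergence of the induced one-body densities together with $\rho_n\to\rho$ in $L^1(\R^d)$ to conclude $\rho_\Psi=\rho$ (and in particular $\|\Psi\|_{L^2}=1$). To control the kinetic energy uniformly in $n$ it helps to distribute the $N$ orbitals across the $q$ spin sectors, which replaces the maximal phase index $N$ in the estimate above by $\lceil N/q\rceil$.
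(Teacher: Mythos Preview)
Your overall strategy---build a Slater determinant from orbitals $\sqrt{\rho/N}\,e^{i\theta_n}$ with phases chosen to enforce orthogonality---is exactly the paper's (Harriman--Lieb) construction. The difference is in the choice of phase in dimension $d\geq 2$, and that difference is where your difficulty comes from.

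You take the \emph{conditional} cumulative distribution
\[
G(\br)=A(r_\perp)^{-1}\int_{-\infty}^{r_1}\rho(t,r_\perp)\,dt,\qquad A(r_\perp)=\int_\R\rho(t,r_\perp)\,dt,
\]
which depends on all coordinates. The paper instead uses the \emph{marginal} cumulative distribution
\[
G(\br)=\frac1N\int_{-\infty}^{r_1}\!\int_{\R^{d-1}}\rho(t,\br')\,d\br'\,dt,
\]
which depends on $r_1$ alone. Both choices give orthonormality by the same substitution trick, but with the paper's choice the transverse derivatives of $G$ vanish identically, so your ``main obstacle'' simply disappears. One is left with
\[
\int_{\R^d}\rho\,|\nabla G|^2=\frac1{N^2}\int_\R A(r_1)^3\,dr_1,\qquad A(r_1):=\int_{\R^{d-1}}\rho(r_1,r_\perp)\,dr_\perp,
\]
and this is finite directly: a Cauchy--Schwarz (Hoffmann--Ostenhof--type) argument gives $\int_\R|(\sqrt A)'|^2\leq\int_{\R^d}|\nabla\sqrt\rho|^2$, so $\sqrt A\in H^1(\R)\subset L^\infty(\R)$, whence $A\in L^1\cap L^\infty\subset L^3$. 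No approximation is needed.

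Your density/limit argument, by contrast, has a real gap: you assert that the $\Psi_n$ have \emph{uniformly} bounded kinetic energy, but you never bound $\int\rho_n|\nabla G_n|^2$ in terms of $\|\sqrt{\rho_n}\|_{H^1}$ alone. With your conditional $G$, the transverse part involves $\nabla_\perp A/A$ and there is no reason this should be controlled uniformly as $\rho_n\to\rho$; likewise the longitudinal part $\int\rho_n^3/A_n^2$ is not obviously bounded. Distributing orbitals over spin sectors only affects the $N$-dependence of the constant, not its $\rho$-dependence. The cleanest fix is to switch to the marginal phase and drop the approximation step entirely.
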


The proof of the theorem is much easier for $q\geq N$ where the antisymmetry can be put entirely in the spin variables. One can just take
$$\Psi(\bX)=\prod_{j=1}^N\sqrt{\frac{\rho(\br_j)}{N}}\;\frac{\det(\delta_j(\sigma_k))_{1\leq j,k\leq N}}{\sqrt{N!}}.$$
When $q<N$ the proof in~\cite{Lieb-83b}, inspired of~\cite{MarYou-58,Harriman-81}, consists of taking a Slater determinant
$$\Psi(\bX)=\frac1{\sqrt{N!}}\det\{\phi_j(\bx_k)\}$$
with the orbitals
\begin{equation}
 \phi_j(\bx)=\sqrt{\frac{\rho(\br)}{N}}e^{i\theta_j(\br)}\delta_0(\sigma)
 \label{eq:phases_pre}
\end{equation}
where the phases $\theta_j$ are chosen to make the $\phi_j$ orthonormal. Although one knows that there exist such phases (an explicit example will be given later in~\eqref{eq:phases}), one has a very bad control of their behavior in $N$. This will be discussed later in Section~\ref{sec:kinetic} when we consider the kinetic energy cost of introducing such phases.

At this point we have found the set of $N$-representable densities
\begin{equation}
 \left\{\rho\in L^1(\R^d,\R_+)\ :\ \int_{\R^d}\rho=N,\ \int_{\R^d}|\nabla\sqrt\rho|^2<\ii\right\}.
 \label{eq:N-representable}
\end{equation}
Note that this is a convex set since $\rho\mapsto\int_{\R^d}|\nabla\sqrt\rho|^2$ is convex by~\cite[Thm.~7.8]{LieLos-01}. With Theorem~\ref{thm:representability} at hand, we can rewrite the ground state energy as a minimization principle over $\rho$ in this convex set:
\begin{equation}
\boxed{E_N[v]:=\inf_{\substack{\sqrt\rho\in H^1(\R^d)\\ \int_{\R^d}\rho=N\\ \int_{\R^d}v_+\rho<\ii}}\left\{F_{\rm LL}[\rho]+\int_{\R^d}v(\br)\,\rho(\br)\,\rd\br\right\} }
\label{eq:relation_E_F_LL}
\end{equation}
where
\begin{multline}
F_{\rm LL}[\rho]:=\inf_{\substack{\Psi\in \bigwedge_1^NL^2(\R^d\times\Z_q,\C)\\ \norm{\Psi}_{L^2}=1\\ \rho_\Psi=\rho}}\Bigg\{\frac12\sum_{j=1}^N\int_{(\R^d\times\Z_q)^N}|\nabla_j\Psi(\bX)|^2\,\rd\bX\\
+\sum_{1\leq j<k\leq N}\int_{(\R^d\times\Z_q)^N} |\Psi(\bX)|^2w(\br_j-\br_k)\,\rd\bX\Bigg\}
\label{eq:F_LL}
\end{multline}
is called the \emph{Levy-Lieb functional}. It is the lowest  possible (kinetic plus interaction) energy of a quantum system having the prescribed density $\rho$. This universal functional is the central object of DFT, since knowing it would allow one to compute the ground state energy of a system with any external potential $v$, by~\eqref{eq:relation_E_F_LL}. In this chapter we will review what is known about $F_{\rm LL}$. But first we need to introduce two other universal functionals, which are obtained after convexifying $F_{\rm LL}$ in two different ways.

\subsection{Lieb's universal functional}
Note that $v\mapsto E_N[v]$ is concave (as seen from the variational principle, it is a minimization over $\Psi$ of affine functions in $v$). More precisely, from~\eqref{eq:relation_E_F_LL} we see that $E_N$ is the Legendre transform of $F_{\rm LL}$ on the convex set of $N$-representable densities. This naturally raises the question of whether $F_{\rm LL}$ is, conversely, the Legendre transform of $E_N$. This turns out to be \emph{wrong} since $F_{\rm LL}$ is \emph{not convex}~\cite{Lieb-83b}. It is therefore convenient to look at the convex hull (also called lower convex envelope)
$$F_{\rm L}:={\rm Conv}(F_{\rm LL}),$$
which is the Legendre transform of $E_N$.
Here the convex hull means that it is the largest convex function below $F_{\rm LL}$. We always assume that $\int_{\R^d}\rho=N$. Another kind of convex hull with respect to $N$ will be considered later.
As proved in~\cite{Lieb-83b}, the function $F_{\rm L}$ is explicit and given by a similar definition as in~\eqref{eq:F_LL} but with mixed states instead of pure states:
\begin{equation}
\boxed{F_{\rm L}[\rho]:=\inf_{\substack{\Gamma=\Gamma^*\geq0\\ \tr\Gamma=1\\ \tr(-\Delta)\Gamma<\ii\\ \rho_\Gamma=\rho}}\tr\left(H^{0,w}_N\Gamma\right).}
\label{eq:F_L}
\end{equation}
We recall that the density $\rho_\Gamma$ of a mixed state $\Gamma$ (a non-negative self-adjoint operator satisfying $\tr(\Gamma)=1$), diagonalized in the form
$$\Gamma=\sum_j \alpha_j|\Psi_j\rangle\langle \Psi_j|$$
with $\alpha_j\geq0$ and $\sum_j \alpha_j=1$, is defined by
$$\rho_ \Gamma:=\sum_j \alpha_j\, \rho_{\Psi_j}.$$

It is useful to know that the infimum is attained in~\eqref{eq:F_L}, as well as for the Levy-Lieb functional in~\eqref{eq:F_LL}.

\begin{theorem}[Existence of optimal (pure and mixed) states~\cite{Lieb-83b}]\label{thm:exists_min}
Let $\rho\in L^1(\R^d,\R_+)$ be such that $\int_{\R^d}\rho=N\in\N$ and $\sqrt\rho\in H^1(\R^d)$. Then the infima in~\eqref{eq:F_LL} and~\eqref{eq:F_L} are attained.
\end{theorem}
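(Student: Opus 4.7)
The plan is to apply the direct method of the calculus of variations, treating $F_{\rm LL}$ and $F_{\rm L}$ in parallel. Fix $\rho$ as in the statement and take a normalized minimizing sequence $\Psi_n$ for $F_{\rm LL}[\rho]$ with $\rho_{\Psi_n}=\rho$. Since $w\in L^p(\R^d)+L^\infty(\R^d)$ with $p$ as in~\eqref{eq:hyp_p} is infinitesimally $(-\Delta)$-form bounded, the objective in~\eqref{eq:F_LL} controls the kinetic energy from below, giving a uniform bound on $\sum_j\int|\nabla_j\Psi_n|^2$. After extraction, $\Psi_n \wto \Psi$ weakly in $H^1((\R^d\times\Z_q)^N)$.

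The crucial step is to upgrade this to strong $L^2$-convergence, which is what preserves both the normalization $\|\Psi\|_{L^2}=1$ and the density constraint $\rho_\Psi=\rho$. Since $\Psi_n$ is antisymmetric, $|\Psi_n|^2$ is symmetric in its $N$ space-spin variables, so each of its one-particle marginals (summed over spin) equals $\rho/N$. A union bound then gives the uniform tightness
\begin{equation*}
\int_{\max_j|\br_j|>R}|\Psi_n(\bX)|^2\,d\bX \;\leq\; \int_{|\br|>R}\rho(\br)\,d\br \;\longrightarrow\; 0 \quad\text{as } R\to\infty,
\end{equation*}
which combined with local Rellich-Kondrachov compactness yields $\Psi_n\to\Psi$ strongly in $L^2((\R^d\times\Z_q)^N)$, hence $\|\Psi\|_{L^2}=1$ and $\rho_\Psi=\rho$.

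For the lower semi-continuity of the objective, the kinetic part is lsc under weak $H^1$-convergence, while the interaction passes to the limit by combining strong $L^2$-convergence, the $H^1$ bound, and Sobolev embedding; the point is that assumption~\eqref{eq:hyp_p} is precisely what makes $\sum_{j<k}w(\br_j-\br_k)$ a relatively form-bounded, and in fact relatively form-compact, perturbation of $-\Delta$ (split $w$ into a compactly supported $L^\infty$ piece plus a small $L^p$ tail). Hence $\Psi$ attains $F_{\rm LL}[\rho]$. For $F_{\rm L}[\rho]$ the same scheme applies to a minimizing sequence of mixed states $\Gamma_n=\sum_i\lambda_i^n|\Psi_i^n\rangle\langle\Psi_i^n|$: the tightness bound above holds for the diagonal density $\sum_i\lambda_i^n|\Psi_i^n|^2$, whose one-particle marginals still equal $\rho/N$; weak-$*$ compactness in the trace class combined with lower semi-continuity of the energy then produces an optimal $\Gamma$ with $\tr\Gamma=1$ and $\rho_\Gamma=\rho$.

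The main obstacle is precisely the compactness step: weak $H^1$-convergence alone does not prevent mass from escaping to infinity, so one must borrow tightness from the prescribed density $\rho\in L^1$. The antisymmetry of $\Psi_n$ plays an essential role here, as it is what allows the bound on the first-variable marginal to propagate to all $N$ variables through a simple union bound; without it one would only control one marginal and the argument would break down. Everything else is standard semi-continuity and Sobolev theory.
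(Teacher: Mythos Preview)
Your proposal is correct and follows exactly the approach the paper indicates: the paper's proof is a one-sentence sketch (``a minimizing sequence\dots is necessarily compact\dots since the density is fixed, hence the sequence is tight''), and you have faithfully reconstructed the standard details behind that sentence---kinetic energy bound from form-boundedness of $w$, tightness from the fixed $L^1$ density via the union bound, strong $L^2$ convergence via Rellich plus tightness, and lower semi-continuity of the energy. One minor remark: what your tightness argument actually uses is the \emph{symmetry of $|\Psi_n|^2$}, which holds for bosons as well as fermions, so the claim that antisymmetry is ``essential'' and that ``without it\dots the argument would break down'' is slightly overstated.
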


The proof uses the fact that a minimizing sequence $\Psi_j$ (resp.~$\Gamma_j$) is necessarily compact in $L^2(\R^d\times\Z_q,\C)^N$ (resp. in the trace class), since the density is fixed, hence the sequence is tight.

Let us consider a density $\rho$ and a corresponding minimizing $N$-particle mixed state $\Gamma$. Diagonalizing $\Gamma$ in the form $\Gamma=\sum_{j}\alpha_j|\Psi_j\rangle\langle \Psi_j|$, we see that
$$F_{\rm L}[\rho]=\sum_{j}\alpha_j\pscal{\Psi_j,H_N^{0,w}\Psi_j}\geq \sum_{j}\alpha_j\,F_{\rm LL}[\rho_{\Psi_j}]$$
and since the upper bound is obvious, we conclude that~\eqref{eq:F_L} can also be written in the form
\begin{equation}
 F_{\rm L}[\rho]=\min_{\substack{\rho=\sum_j\alpha_j\,\rho_j\\\sum_{j}\alpha_j=1\\  \sqrt{\rho_j}\in H^1(\R^d)\\ \int_{\R^d}\rho_j=N}}\sum_j\alpha_j\,F_{\rm LL}[\rho_j].
 \label{eq:convex_hull_L}
\end{equation}
This is the claimed convex hull  of $F_{\rm LL}$.

Since an affine function always attains its minimum at an extreme point of a convex set, the ground state energy $E_N[v]$ is given by the same formula
\begin{align*}
E_N[v]&=\inf_{\substack{\Gamma=\Gamma^*\geq0\\ \tr\Gamma=1\\ \tr(-\Delta)\Gamma<\ii\\ \rho_\Gamma=\rho}}\tr\left(H^{v,w}_N\Gamma\right)\\
&=\inf_{\substack{\sqrt\rho\in H^1(\R^d)\\ \int_{\R^d}\rho=N\\ \int_{\R^d}v_+\rho<\ii}}\left\{F_{\rm L}[\rho]+\int_{\R^d}v(\br)\,\rho(\br)\,\rd\br\right\}
% \label{eq:relation_E_F_L}
\end{align*}
as we had in~\eqref{eq:relation_E_F_LL} for pure states. From this discussion it seems more natural to work with the convex Lieb functional $F_{\rm L}$, instead of $F_{\rm LL}$. The following duality principle holds.

\begin{theorem}[Duality~\cite{Lieb-83b}]\label{thm:dual}
We have
\begin{align}
 F_{\rm L}[\rho]&=\sup_{v\in L^p(\R^d)+L^\ii(\R^d)}\left\{E_N[v]-\int_{\R^d}v(\br)\rho(\br)\,\rd\br\right\}\nn\\
 &=\sup_{\substack{v\in L^p(\R^d)+L^\ii(\R^d)\\ H^{v,w}_N\geq0}} \left\{-\int_{\R^d}v(\br)\rho(\br)\,\rd\br\right\}\label{eq:duality}
\end{align}
with $p$ as in~\eqref{eq:hyp_p}.
\end{theorem}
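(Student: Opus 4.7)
My plan. I would dispose of the second equality first, since it follows from the first by a constant shift: given any admissible $v$, set $\tilde v := v - N^{-1}E_N[v]$, so that $H_N^{\tilde v,w} = H_N^{v,w} - E_N[v] \geq 0$ and, using $\int \rho = N$, $-\int \tilde v\,\rho = E_N[v] - \int v\,\rho$. Conversely any $v$ with $H_N^{v,w}\geq 0$ gives $E_N[v]\geq0$, so the two suprema coincide. This reduces everything to the first equality.

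One direction of the first equality is immediate from the variational principle: for any admissible mixed state $\Gamma$ with $\rho_\Gamma = \rho$ and any admissible $v$,
\[
\tr(H_N^{0,w}\Gamma) \;=\; \tr(H_N^{v,w}\Gamma) - \int v\,\rho \;\geq\; E_N[v] - \int v\,\rho,
\]
so taking the infimum over $\Gamma$ and then the supremum over $v$ yields $F_{\rm L}[\rho] \geq \sup_v\{E_N[v]-\int v\,\rho\}$.

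The heart of the proof is the converse inequality, which I would handle by convex duality. Set $\phi(\rho) := \sup_v\{E_N[v]-\int v\,\rho\}$; as a supremum of continuous affine functionals on the pre-dual of $L^p+L^\infty$, $\phi$ is automatically convex and lower semicontinuous. Substituting $E_N[v]=\inf_\rho\{F_{\rm LL}[\rho]+\int v\,\rho\}$ from \eqref{eq:relation_E_F_LL} and making the change of variable $v\mapsto -v$ identifies $\phi$ as the Fenchel biconjugate of $F_{\rm LL}$. By the Fenchel--Moreau theorem, $\phi$ is therefore the largest convex lsc minorant of $F_{\rm LL}$. Since $F_{\rm L}$ is convex with $F_{\rm L}\leq F_{\rm LL}$ (clear from \eqref{eq:convex_hull_L}), the inequality $\phi \leq F_{\rm L}$ is automatic, and the remaining inequality $F_{\rm L}\leq \phi$ reduces to showing that $F_{\rm L}$ itself is lower semicontinuous in the duality topology.

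This lsc step is the main obstacle. I would argue as follows: given $\rho_n \to \rho$ with $F_{\rm L}[\rho_n]$ bounded, invoke Theorem~\ref{thm:exists_min} to pick minimizers $\Gamma_n$ with $\rho_{\Gamma_n}=\rho_n$. Infinitesimal $(-\Delta)$-form boundedness of $w$, guaranteed by assumption \eqref{eq:hyp_p}, combined with the uniform bound on $\tr(H_N^{0,w}\Gamma_n)$, gives a uniform kinetic bound $\tr((-\Delta)\Gamma_n)\leq C$; together with the tightness inherited from $L^1$-convergence of $\rho_n$, this makes $\{\Gamma_n\}$ pre-compact in the trace class. Extract a subsequence $\Gamma_n\to \Gamma$ with $\rho_\Gamma=\rho$, and use lower semicontinuity of both the kinetic and interaction parts of $\tr(H_N^{0,w}\cdot)$ to conclude
\[
F_{\rm L}[\rho] \;\leq\; \tr(H_N^{0,w}\Gamma) \;\leq\; \liminf_{n\to\ii} F_{\rm L}[\rho_n].
\]
The delicate point is passing to the limit in the two-body interaction term, which is only form-bounded rather than bounded; this is precisely where condition \eqref{eq:hyp_p} on $p$ is essential, since it guarantees enough integrability to control $w$ by the Sobolev-type norms controlled by the kinetic bound.
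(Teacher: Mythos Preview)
The paper does not include a proof of Theorem~\ref{thm:dual}; it is stated with a citation to~\cite{Lieb-83b}, accompanied only by the one-line remark that the second equality follows from the first by absorbing the constant $E_N[v]$ into $v$. Your proposal is correct and reproduces precisely the standard argument from~\cite{Lieb-83b}: the constant-shift reduction matches the paper's remark, the inequality $F_{\rm L}[\rho]\geq\sup_v\{\cdots\}$ is the variational principle, and the converse is Fenchel--Moreau combined with lower semicontinuity of $F_{\rm L}$, which you establish by extracting trace-class limits of minimizing mixed states via the uniform kinetic bound and tightness.

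One small clarification on your phrasing: the inequality $\phi\leq F_{\rm L}$ does not follow merely from $F_{\rm L}$ being convex with $F_{\rm L}\leq F_{\rm LL}$, as your sentence suggests. It follows because $F_{\rm L}$ is the \emph{convex hull} of $F_{\rm LL}$, i.e.\ the largest convex minorant; since $\phi$ is convex and $\phi\leq F_{\rm LL}$, it must lie below $F_{\rm L}$. This is what~\eqref{eq:convex_hull_L} actually gives you, so the logic is sound, but the justification as written is a non-sequitur. (Alternatively, $\phi\leq F_{\rm L}$ is already contained in the ``easy direction'' you proved just above, so you could simply invoke that.)
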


In the second line the constant $E_N[v]$ has been included in $v$, hence the constraint that $H^{v,w}_N\geq0$ in the operator sense.

We have seen in Theorem~\ref{thm:exists_min} that~\eqref{eq:F_LL} and~\eqref{eq:F_L} are attained. On the other hand, the supremum in~\eqref{eq:duality} will \emph{not} be attained for most densities. Indeed, $\rho$ would then be the density of a mixed ground state for the corresponding $E_N[v]$ but this set is believed to be very small. For instance, if $\rho$ vanishes on a set of positive measure, the supremum cannot be attained with a $v$ for which unique continuation holds on the whole space. This is discussed in Appendix~\ref{sec:HK}.

The importance in applications of the convex formulation of DFT based on the functionals $F_{\rm LL}$ and $F_{\rm L}$ is reviewed in~\cite{HelTea-22}.

\subsection{Grand canonical universal functional}

At this step we have defined two universal functionals $F_{\rm LL}$ and $F_{\rm L}$ of the density $\rho$, which provide the same ground state energy $E_N[v]$ in presence of an external potential $v$. Since Lieb's functional $F_{\rm L}$ is convex, it is to be preferred over $F_{\rm LL}$. The convexity implies the dual formula stated in Theorem~\ref{thm:dual}.

In spite of its convexity, the functional $F_{\rm L}$ does not behave well with respect to the weak topology of $H^1(\R^d)$. This might lead to some difficulties in processes where some electrons are lost, e.g. for scattering. Indeed, if we have a sequence $\rho_n$ such that $\sqrt{\rho_n}\wto\sqrt{\rho}$ weakly but not strongly, then we may have $\int_{\R^d}\rho(\br)\,\rd\br<N$. The integral does not even need to be an integer, in which case $F_{\rm L}[\rho]$ would not make any sense. For this reason it is natural to introduce a functional allowing for non-integer values of the particle number. The following grand-canonical version did not appear in~\cite{Lieb-83b}, but  was mentioned in~\cite{PerParLevBal-82} and was recently studied in~\cite{LewLieSei-18,LewLieSei-19}.

For us, a grand-canonical state (commuting with the particle number) will be a collection $\Gamma=(\Gamma_n)_{n\geq0}$ of non-negative self-adjoint operators, each of them acting on the $n$-particle space $\bigwedge_1^nL^2(\R^d\times\Z_q)$, and such that
$$\Gamma_0+\sum_{n\geq1}\tr(\Gamma_n)=1.$$
Here $\Gamma_0$ is a number in $[0,1]$ which gives the probability that there is no particle at all.
The corresponding density is the sum
$$\rho_\Gamma=\sum_{n\geq1}\rho_{\Gamma_n}$$
so that the average number of particles in the system is given by $\int_{\R^d}\rho_\Gamma=\sum_{n\geq1}n\tr(\Gamma_n)$. We define the grand-canonical universal functional as
\begin{equation}
\boxed{F_{\rm GC}[\rho]:=\inf_{\substack{\sum_{n\geq1}\tr\Gamma_n\leq1\\ \sum_{n\geq1}\tr(-\Delta)\Gamma_n<\ii\\ \sum_{n\geq1}\rho_{\Gamma_n}=\rho}}\sum_{n\geq1}\tr\left(H^{0,w}_n\Gamma_n\right).}
\label{eq:F_GC}
\end{equation}
In order to guarantee that the problem is well posed, we need some more assumptions on the interaction potential $w$. We assume that the system is \emph{stable of the second kind}~\cite{Ruelle}, that is, there exists a constant $C$ such that
\begin{equation}
\forall n\geq1,\qquad H_n^{0,w}\geq -Cn.
\label{eq:stability}
\end{equation}
Inserting this in~\eqref{eq:F_GC}  implies
$$F_{\rm GC}[\rho]\geq -C\int_{\R^d}\rho(\br)\,\rd\br$$
hence the infimum in~\eqref{eq:F_GC} is finite. A typical example is that of a non-negative interaction potential $w\geq0$ such as Coulomb, or more generally a potential $w$ which is classically stable of the second-kind~\cite{Ruelle}, that is, which satisfies the same assumption as~\eqref{eq:stability} with the kinetic energy removed:
\begin{equation}
\forall n\geq 2,\qquad \sum_{1\leq j<k\leq n}w(\br_j-\br_k)\geq -Cn \qquad\text{a.e. on $(\R^d)^n$.}
\label{eq:w_class_stable}
\end{equation}

For a density with integer particle number $\int_{\R^d}\rho(\br)\,\rd\br=N\in\mathbb{N}$ the grand canonical functional is the lowest of the three universal functionals:
$$F_{\rm GC}[\rho]\leq F_{\rm L}[\rho]\leq F_{\rm LL}[\rho].$$
The following can be shown similarly as for Theorem~\ref{thm:exists_min}, using the stability assumption~\eqref{eq:stability}.

\begin{theorem}[Existence of optimal grand-canonical states]\label{thm:exists_min_GC}
Assume that the system is stable of the second kind as in~\eqref{eq:stability}. Let $\rho\in L^1(\R^d,\R_+)$ be such that $\sqrt\rho\in H^1(\R^d)$. Then the infimum in~\eqref{eq:F_GC} is attained.
\end{theorem}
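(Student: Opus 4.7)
The plan is to follow the compactness strategy of Theorem~\ref{thm:exists_min}, adapted to the fact that infinitely many particle-number sectors must be handled simultaneously. Let $\Gamma^{(k)}=(\Gamma_n^{(k)})_{n\geq 0}$ be a minimizing sequence for $F_{\rm GC}[\rho]$. Combining the stability assumption~\eqref{eq:stability} with the infinitesimal $(-\Delta)$-form boundedness of the interaction that follows from $w\in L^p(\R^d)+L^\infty(\R^d)$ with $p$ as in~\eqref{eq:hyp_p}, one obtains on each $n$-particle sector an inequality of the form
\begin{equation*}
H_n^{0,w}\geq c\sum_{j=1}^n(-\Delta_{\br_j})-C\,n
\end{equation*}
for constants $c>0$ and $C$ independent of $n$. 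Tested against $\Gamma_n^{(k)}$ and summed over $n$, together with $\sum_{n\geq 1}n\,\tr(\Gamma_n^{(k)})=\int_{\R^d}\rho<\infty$, this yields a uniform upper bound on the total kinetic energy $\sum_{n\geq 1}\tr((-\Delta)\Gamma_n^{(k)})$.

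For each fixed $n$, the sequence $(\Gamma_n^{(k)})_k$ then lives in a bounded set of non-negative trace-class operators with uniformly bounded kinetic energy. A standard diagonal extraction provides a subsequence along which $\Gamma_n^{(k)}\wto \Gamma_n$ in the weak-$*$ topology of the trace class for every $n$ at once, with $\Gamma_n\geq 0$ and $\sum_{n\geq 1}\tr(\Gamma_n)\leq 1$ by Fatou's lemma. To identify the density of the limit I would combine two ingredients: the pointwise bound $\rho_{\Gamma_n^{(k)}}\leq \rho\in L^1(\R^d)$, which provides tightness at infinity; and the extension to mixed states (by convexity of $\rho\mapsto\int|\nabla\sqrt\rho|^2$) of the Hoffmann-Ostenhof inequality~\eqref{eq:Hoffmann-Ostenhof}, which bounds $\int_{\R^d}|\nabla\sqrt{\rho_{\Gamma_n^{(k)}}}|^2$ by the kinetic energy and gives local compactness via Rellich-Kondrachov. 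Together these produce strong convergence $\rho_{\Gamma_n^{(k)}}\to \rho_{\Gamma_n}$ in $L^1(\R^d)$ for each $n$, and the dominated convergence theorem in the $n$-sum, with envelope $\rho$, then gives $\sum_{n\geq 1}\rho_{\Gamma_n}=\rho$.

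For the energy, the shifted operators $H_n^{0,w}+Cn$ are non-negative, so the map $\Gamma\mapsto \tr((H_n^{0,w}+Cn)\Gamma)$ is lower semicontinuous under weak-$*$ convergence of non-negative trace-class operators with bounded kinetic energy. Applying this sector by sector and then Fatou's lemma in $n$ (using the identity $\sum_n n\,\tr(\Gamma_n^{(k)})=\int\rho$ to absorb the $Cn$ shifts) yields
\begin{equation*}
\sum_{n\geq 1}\tr\bigl(H_n^{0,w}\Gamma_n\bigr)\leq \liminf_{k\to\infty}\sum_{n\geq 1}\tr\bigl(H_n^{0,w}\Gamma_n^{(k)}\bigr)=F_{\rm GC}[\rho],
\end{equation*}
so that $\Gamma=(\Gamma_n)_{n\geq 0}$ is admissible and optimal.

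The main obstacle is the simultaneous control of the two limits $n\to\infty$ and $|\br|\to\infty$: a priori, mass could escape either into space or into high-particle-number sectors. Both escape routes are ruled out by the fixed-density constraint via the pointwise bound $\rho_{\Gamma_n^{(k)}}\leq\rho\in L^1(\R^d)$ and the weighted sum rule $\sum_n n\,\tr(\Gamma_n^{(k)})=\int\rho$, but care is needed to legitimately exchange the $n$-sum with the $k$-limit in both the density identification and the lower semicontinuity of the total energy.
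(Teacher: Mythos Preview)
Your overall strategy is the natural one and matches what the paper sketches, but the first step contains a genuine gap. The operator inequality
\[
H_n^{0,w}\geq c\sum_{j=1}^n(-\Delta_{\br_j})-Cn
\]
with $c>0$ and $C$ \emph{independent of $n$} does not follow from stability~\eqref{eq:stability} together with infinitesimal form-boundedness of $w$. Form-boundedness only yields
\[
\sum_{1\leq j<k\leq n}|w(\br_j-\br_k)|\leq \eps(n-1)\sum_{j=1}^n(-\Delta_{\br_j})+C_\eps\binom{n}{2},
\]
so the coefficient in front of the kinetic energy in the resulting lower bound degenerates as $n\to\infty$, and the additive constant grows like $n^2$ rather than $n$. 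Stability~\eqref{eq:stability} by itself gives $H_n^{0,w}\geq -Cn$ but no positive fraction of the kinetic energy. To obtain your inequality one would need, for instance, stability of $H_n^{0,\lambda w}$ for some $\lambda>1$ (since then $W\geq -\tfrac{1}{2\lambda}\sum_j(-\Delta_{\br_j})-\tfrac{C'}{\lambda}n$ and hence $H_n^{0,w}\geq \tfrac{\lambda-1}{2\lambda}\sum_j(-\Delta_{\br_j})-\tfrac{C'}{\lambda}n$), which is a strictly stronger hypothesis than~\eqref{eq:stability}.

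The simplest repair is to invoke the classical stability~\eqref{eq:w_class_stable} that the paper singles out as the typical case and which covers Coulomb: then $\sum_{j<k}w(\br_j-\br_k)\geq -Cn$ pointwise, hence $H_n^{0,w}\geq \tfrac12\sum_j(-\Delta_{\br_j})-Cn$ with $n$-independent constants, and the rest of your argument goes through verbatim. The remaining steps---diagonal extraction, tightness in space from $\rho_{\Gamma_n^{(k)}}\leq\rho$, local strong trace-class convergence from the kinetic energy bound, dominated convergence in $n$ for the density, and Fatou for the shifted non-negative operators $H_n^{0,w}+Cn$---are correct and are precisely the compactness mechanism the paper alludes to.
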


%Proof: take a minimizing sequence of GC states. Then we have strong CV due to the fact that $\rho$ is fixed. Also, we have bounds and weak CV in each $n$-particle space. Use then Fatou with $\tr(H_n+Cn)\Gamma_n$, adding the constant $C\int\rho$ to the functional.

From the existence of a minimizer we deduce as before that
$$F_{\rm GC}[\rho]=\min_{\substack{\rho=\sum_n\alpha_n\,\rho_n\\ \sum_{n}\alpha_n=1\\  \sqrt{\rho_n}\in H^1(\R^d)\\ \int_{\R^d}\rho_n=n}}\sum_n\alpha_n\,F_{\rm L}[\rho_n]=\min_{\substack{\rho=\sum_j\beta_j\,\rho_j\\ \sum_{j}\beta_j=1\\  \sqrt{\rho_j}\in H^1(\R^d)\\ \int_{\R^d}\rho_j\in\N}}\sum_j\beta_j\,F_{\rm LL}[\rho_j].$$
In other words, the grand canonical functional is also a convex hull of the original Levy-Lieb functional $F_{\rm LL}$, but convex combinations $\rho=\sum_j\beta_j\,\rho_j$ are considered with the $\rho_j$ having an arbitrary number of particles. It is not required that all the $\rho_j$ have the fixed number $N$ of particles like for $F_{\rm L}$.

An interesting question is to determine whether an optimal grand canonical state $(\Gamma_n)_{n\geq0}$ corresponding to a given $\rho$ always satisfies $\Gamma_n\equiv0$ for $n\geq n_{\rm max}$. In this case we say that $\Gamma=(\Gamma_n)_{n\geq0}$ has a compact support in $n$.
No result of this sort seems to have appeared in the literature up to now.

The following theorem asserts that $F_{\rm GC}$ is the weak-$\ast$ lower semi-continuous envelope of $F_{\rm L}$, in an appropriate sense.

\begin{theorem}[Weak lower semi-continuity]\label{thm:wlsc_F_GC}
We assume that
$$0\leq w\in L^p(\R^d)+L^\ii(\R^d)$$
with $p$ as in~\eqref{eq:hyp_p} and that $w$ tends to 0 at infinity. The functional $F_{\rm GC}$ is the weak-$\ast$ lower semi-continuous closure of $F_{\rm L}$, in the following sense:

\smallskip

\noindent $(i)$ For any sequence $(\sqrt{\rho_j})_{j\geq1}\subset H^1(\R^d)$ converging weakly in $\dot{H}^1(\R^d)$ to $\sqrt\rho\in H^1(\R^d)$, we have
\begin{equation}
 F_{\rm GC}[\rho]\leq \liminf_{j\to\ii} F_{\rm GC}[\rho_j].
 \label{eq:wlsc}
\end{equation}

\smallskip

\noindent $(ii)$ For any $\sqrt\rho\in H^1(\R^d)$, there exists a sequence $(\sqrt{\rho_j})_{j\geq1}\subset H^1(\R^d)$ converging strongly to $\sqrt\rho$ in $\dot{H}^1(\R^d)\cap L^{p}(\R^d)$ for all $2<p<p^*$, such that
\begin{equation}
 F_{\rm GC}[\rho]=\lim_{j\to\ii}F_{\rm L}[\rho_j].
 \label{eq:approx_wlsc}
\end{equation}
\end{theorem}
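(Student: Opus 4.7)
The plan is to handle the two parts by different geometric arguments. For (i), I would use a concentration-compactness / Fock-space localization of grand-canonical states that allows some mass to escape to infinity. For (ii), I would construct an admissible $N$-particle mixed state from a near-optimal grand-canonical state by appending auxiliary fermions placed far away in a dilute cloud.

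For (i), pick minimizers $\Gamma^j=(\Gamma_n^j)_{n\geq 0}$ of $F_{\rm GC}[\rho_j]$, which exist by Theorem~\ref{thm:exists_min_GC}. We may assume $\liminf F_{\rm GC}[\rho_j]<\ii$; since $w\geq 0$, both the kinetic and interaction parts are non-negative and individually uniformly bounded along a subsequence. Weak convergence $\sqrt{\rho_j}\wto\sqrt\rho$ in $\dot{H}^1$ combined with Rellich's compact embedding yields $\rho_j\to\rho$ strongly in $L^1_{\rm loc}(\R^d)$, so the only possible loss of mass occurs at infinity. Pick a smooth partition of unity $\chi_R^2+\eta_R^2=1$ with $\chi_R=1$ on $B_R$ and supported in $B_{2R}$, and lift it to the natural Fock-space localization operation, producing grand-canonical states $\Gamma^{j,{\rm in}}$ and $\Gamma^{j,{\rm out}}$ with densities $\chi_{R_j}^2\rho_j$ and $\eta_{R_j}^2\rho_j$. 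For $R_j\to\ii$ chosen slowly enough that $\rho_j\to\rho$ in $L^1(B_{2R_j})$, the one-body IMS correction $O(R_j^{-2})\int\rho_j$ is $o(1)$, while the bulk-tail cross interaction is bounded by $\sup_{|r|\geq R_j/2}|w(r)|$ times the product of the two masses, also $o(1)$ since $w$ vanishes at infinity. The ``outside'' contribution is non-negative because $w\geq 0$. A weak-$\ast$ limit $\Gamma^\infty=(\Gamma^\infty_n)$ of $\Gamma^{j,{\rm in}}$ exists by diagonal extraction (each sector is bounded in the trace class), and strong $L^1$-convergence of $\chi_{R_j}^2\rho_j\to\rho$ forces $\rho_{\Gamma^\infty}=\rho$. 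Weak lower semi-continuity of the non-negative quadratic forms $H_n^{0,w}$ on the trace class then delivers $F_{\rm GC}[\rho]\leq\sum_n\tr(H_n^{0,w}\Gamma^\infty_n)\leq\liminf F_{\rm GC}[\rho_j]$.

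For (ii), let $\Gamma=(\Gamma_n)_{n\leq N}$ be a near-minimizer for $F_{\rm GC}[\rho]$ truncated to finitely many sectors (admissible because the tail costs $o(1)$), and write $\Gamma_n=p_n\tilde\Gamma_n$ with $\tr\tilde\Gamma_n=1$, so that $\sum_np_n=1$, $\rho=\sum_np_n\rho_{\tilde\Gamma_n}$, and $\int\rho=\sum_nnp_n=:m$. For each $n<N$ I construct a dilute Slater determinant $\Omega_n^j$ of $N-n$ orthonormal orbitals supported in a ball $B(z_j,L_j)$ with $|z_j|\gg L_j\to\ii$ (e.g.\ smoothed orthonormal plane waves in a box), whose density $\eta_n^j$ has mass $N-n$ and total kinetic energy $O((N-n)^{1+2/d}L_j^{-2})=o(1)$. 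The antisymmetrized product $\tilde\Gamma_n\otimes_a\Omega_n^j$ is an $N$-particle fermionic state with density $\rho_{\tilde\Gamma_n}+\eta_n^j$ (the supports being essentially disjoint), and the convex combination $\Gamma^j:=\sum_np_n\tilde\Gamma_n\otimes_a\Omega_n^j$ is an $N$-particle mixed state with density $\rho_j:=\rho+\eta_j$, $\eta_j:=\sum_np_n\eta_n^j$, of integer mass $N$. Dilution yields $\sqrt{\eta_j}\to 0$ in $\dot{H}^1\cap L^p$ for $2<p<p^*$ (but not in $L^2$, consistent with the statement). The energy of $\Gamma^j$ decomposes into the bulk piece $\sum_np_n\tr(H_n^{0,w}\Gamma_n)$, tail contributions of order $o(1)$, and bulk-tail cross interaction bounded by $\sup_{|r|\geq|z_j|-L_j}|w(r)|$ times masses, also $o(1)$. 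Hence $F_{\rm L}[\rho_j]\leq F_{\rm GC}[\rho]+o(1)$; the matching lower bound follows from (i) together with $F_{\rm GC}\leq F_{\rm L}$.

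The main technical obstacle is the Fock-space localization in (i): one must lift the one-body partition of unity to a decomposition of a grand-canonical density matrix and correctly split both the (one-body) kinetic energy (via IMS) and the (two-body) interaction energy. The interaction splitting uses crucially the decay of $w$ at infinity, together with a uniform bound on the pair density (available from the kinetic bound via the Hoffmann-Ostenhof inequality~\eqref{eq:Hoffmann-Ostenhof} and Sobolev). Construction (ii) is conceptually direct; the subtlety is choosing phase-orbitals as in~\eqref{eq:phases_pre} to realize a dilute fermionic tail state whose kinetic energy vanishes as $L_j\to\ii$.
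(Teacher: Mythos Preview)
Your overall strategy matches the paper's: for (i) a Fock-space localization/compactness argument, and for (ii) the same construction of appending far-away dilute Slater determinants to each sector of a truncated near-minimizer. However, there is a genuine gap in your treatment of (ii).

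You write that truncating the grand-canonical minimizer to finitely many sectors is ``admissible because the tail costs $o(1)$'', but this only controls the \emph{energy}. The theorem also demands that $\sqrt{\rho_j}\to\sqrt\rho$ strongly in $\dot H^1(\R^d)$, and after truncation the bulk density becomes $\rho':=\sum_{n\leq K}\rho_{\Gamma_n}\leq\rho$, not $\rho$ itself. One does get $\nabla\sqrt{\rho-\rho'}\to 0$ in $L^2$ from the convexity of the Hoffmann--Ostenhof functional, but this does \emph{not} imply $\nabla\sqrt{\rho'}\to\nabla\sqrt\rho$ in $L^2$: the map $\rho\mapsto\sqrt\rho$ behaves badly where $\rho$ is small, and a priori $\rho'/\rho$ may vanish on parts of $\{\rho>0\}$. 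The paper closes this with a preparatory step (mixing in a small multiple of a state whose $n_0$-particle component has density proportional to $\rho$, so that $\rho_{\Gamma_{n_0}}\geq\alpha\rho$ for some $\alpha>0$) followed by Lemma~\ref{lem:CV_gradients}, which shows that if $\alpha\rho\leq\rho'\leq\rho$, $\rho'\to\rho$ a.e., and $\nabla\sqrt{\rho-\rho'}\to 0$, then indeed $\nabla\sqrt{\rho'}\to\nabla\sqrt\rho$. This is not cosmetic; without the lower bound $\rho'\geq\alpha\rho$ the $\dot H^1$ convergence you claim does not follow from your construction.

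A smaller issue in (i): the sentence ``strong $L^1$-convergence of $\chi_{R_j}^2\rho_j\to\rho$ forces $\rho_{\Gamma^\infty}=\rho$'' hides real work. Sector-by-sector weak-$\ast$ convergence in the trace class does not by itself imply convergence of the densities $\rho_{\Gamma_n^j}$ (the partial trace is not tested against a compact operator), nor that $\sum_n\rho_{\Gamma_n^\infty}$ equals the limit of $\sum_n\rho_{\Gamma_n^j}$. The paper sidesteps this by working directly with the reduced density matrices $\Gamma^{(1)},\Gamma^{(2)}$ (the only objects entering the energy), extracting their local limits via the kinetic bound and Yang's inequality, applying Fatou for the energy, and then invoking a CAR-algebra argument from~\cite{Lewin-11} to reconstruct a grand-canonical state with those density matrices and density $\rho$. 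Your localization route can presumably be completed, but it needs this kind of care.
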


Here
$$p^*=\begin{cases}
\ii&\text{in dimensions $d=1,2$,}\\
\frac{2d}{d-2}&\text{in dimensions $d\geq3$,}
\end{cases}$$
is the critical Sobolev exponent. Since we are not aware that the proof of Theorem~\ref{thm:wlsc_F_GC} was explicitly written anywhere, we provide the full argument later in Appendix~\ref{app:wlsc_F_GC}. It is inspired by~\cite{Lewin-11}.

When the liminf on the right of~\eqref{eq:wlsc} is finite (which we can always assume, otherwise the statement is void), then the Hoffmann-Ostenhof inequality~\eqref{eq:Hoffmann-Ostenhof} implies that $\sqrt{\rho_j}$ is bounded in the homogeneous Sobolev space $\dot{H}^1(\R^d)$. However $\int_{\R^d}\rho_j$ need not be bounded in general and this is why only the weak convergence in $\dot{H}^1(\R^d)$ was assumed. This plays an important role in~$(ii)$. Consider a $\rho$ and an associated optimal grand-canonical state $\Gamma=(\Gamma_n)_{n\geq0}$ for $F_{\rm GC}[\rho]$. If we have $\Gamma_{n_k}\neq0$ for a sequence $n_k\to\ii$, then this means that infinitely many particles are needed to properly represent $\rho$ grand-canonically. Although this is very unlikely to happen in practical situations, this can probably not be avoided for a general interaction $w$ and a general $\rho$. Then we need a diverging number of particles
$$\int_{\R^d}\rho_j=N_j\to+\ii$$
in our canonical state associated with $\rho_j$, even if it has a bounded energy. On the other hand, if there exists one minimizer $(\Gamma_n)_{n\geq0}$ for $F_{\rm GC}[\rho]$ which has a compact support in $n$, then $(ii)$ holds with a sequence converging weakly in $H^1(\R^d)$, as will be clear from our proof. This is one reason why it is important to understand whether optimal states always have a compact support in $n$, as  we have mentioned previously.

Next we discuss the dual formulation of $F_{\rm GC}$. For $v=v_+-v_-$ with $v_-\in L^p(\R^d)+L^\ii(\R^d)$ and $v_+\in L^1_{\rm loc}(\R^d)$, we find that the Legendre transform of $F_{\rm GC}$ is given by
\begin{equation}
E^{\rm GC}_\lambda[v]:=\inf_{\substack{\sqrt\rho\in H^1(\R^d)\\ \int_{\R^d}\rho=\lambda\\ \int_{\R^d}v_+\rho<\ii}}\left\{F_{\rm GC}[\rho]+\int_{\R^d}v\rho\right\}=\inf_{\substack{\sum_n\alpha_n=1\\ \sum_n n\alpha_n=\lambda}}\sum_n\alpha_nE_n[v].
\label{eq:relation_E_F_GC}
\end{equation}
For $\lambda=N\in\N$, we have $E^{\rm GC}_N[v]\leq E_N[v]$ but equality will in general not hold. If the function $n\mapsto E_n[v]$ is convex in the discrete sense, that is,
\begin{equation}
 E_n[v]-E_{n-1}[v]\leq E_{n+1}[v]-E_n[v],\qquad \forall n\geq1,
 \label{eq:conjecture_convexity_E_n}
\end{equation}
then it follows that
$$E^{\rm GC}_{N+\theta}[v]=(1-\theta)E_N[v]+\theta E_{N+1}[v]$$
for all $N\in\N$ and $\theta\in[0,1)$~\cite{PerParLevBal-82}. In particular $E_N^{\rm GC}[v]=E_N[v]$.
Therefore,  in case~\eqref{eq:conjecture_convexity_E_n} holds, the grand canonical functional $F_{\rm GC}[\rho]$ provides the same ground state energy in an external potential $v$ as the canonical ones $F_{\rm LL}[\rho]$ and $F_{\rm L}[\rho]$.
In physical terms the condition~\eqref{eq:conjecture_convexity_E_n} means that the electron ionization energy is greater than or equal to the electron affinity. It is a famous conjecture that~\eqref{eq:conjecture_convexity_E_n} holds for the Coulomb potential in dimension $d=3$, for atomic or molecular external potentials $v$~\cite{PerParLevBal-82,BacDel-14}. A counterexample is provided in~\cite{Lieb-83b} for a different $w$. Note that~\eqref{eq:conjecture_convexity_E_n} always holds for $w\equiv0$.

In case~\eqref{eq:conjecture_convexity_E_n} does \emph{not} hold, then $E^{\rm GC}_\lambda[v]$ is equal to the convex hull of $n\mapsto E_n[v]$. This amounts to considering the set $\cK\subset\N$ of the points $k$ such that
$$E_k[v]-E_{k-1}[v]=\max_{n=2,...,k} \left( E_n[v]-E_{n-1}[v]\right) .$$
One obtains
$$E^{\rm GC}_N[v]= E_{n_1}[v]+\frac{E_{n_1}[v]-E_{n_2}[v]}{n_1-n_2} (N-n_1)$$
where $n_1<N<n_2$ are the two closest points in $\cK$ on the left and right of $N$.

\subsection{Kohn-Sham exchange correlation}\label{sec:KS}

In the previous sections we have explained the Levy-Lieb variational formulation of the ground state energy of the $N$-particle problem in terms of the density only, which is really in the spirit of DFT. Practitioners prefer to use an auxiliary  set of $N$ orthonormal functions $\Phi=(\phi_1,...,\phi_N)$, which describe $N$ fictitious uncorrelated electrons, to build the desired density through the formula
$$\rho_\Phi(\br)=\sum_{n=1}^N\sum_{\sigma\in\Z_q}|\phi_n(\br,\sigma)|^2=\rho_\Psi(\br)$$
with the \emph{Slater determinant}
\begin{equation}
\Psi(\bx_1,\dots,\bx_N)=\frac{\det(\phi_j(\bx_k))}{\sqrt{N!}}.
 \label{eq:Slater_det}
\end{equation}
This method provides a better representation of the kinetic energy, but it is much more costly from a computational point of view. We quickly explain this approach due to Kohn-Sham~\cite{KohSha-65} here.

For a density $\rho$ with $\int_{\R^d}\rho(\br)d\br=N\in\N$, we introduce the lowest kinetic energy of Slater determinants
\begin{equation}
T_{\rm S}[\rho]:=\min_{\substack{\phi_1,\dots,\phi_N\in H^1(\R^d\times\Z_q,\C)\\ \pscal{\phi_i,\phi_j}=\delta_{ij}\\\rho_\Phi=\rho}}\frac12\sum_{j=1}^N\int_{\R^d\times\Z_q}|\nabla\phi_j(\bx)|^2\,\rd\bx
\label{def:Slater_kinetic_energy}
\end{equation}
(the min is attained for the same reason as in Theorem~\ref{thm:exists_min}). We then add and subtract $T_{\rm S}$ from $F_{\rm LL}$, which allows rewriting the $N$-particle ground state using $N$ orbitals as
\begin{multline}
E_N[v]:=\inf_{\substack{\phi_1,\dots,\phi_N\in H^1(\R^d\times\Z_q,\C)\\ \pscal{\phi_i,\phi_j}=\delta_{ij}\\ \int_{\R^d}\rho_\Phi\,v_+<\ii}}\bigg\{\frac12\sum_{j=1}^N\int_{\R^d\times\Z_q}|\nabla\phi_j(\bx)|^2\,\rd\bx+\int_{\R^d}v(\br)\rho_\Phi(\br)\,\rd\br\\
+\frac12\int_{\R^d}\int_{\R^d}w(\br-\br')\rho_\Phi(\br)\rho_\Phi(\br')\,\rd\br\,\rd\br'+E_{\rm xc}[\rho_\Phi]\bigg\},
\end{multline}
where
\begin{equation}
 E_{\rm xc}[\rho]:=F_{\rm LL}[\rho]-T_{\rm S}[\rho]-\frac12\int_{\R^d}\int_{\R^d}w(\br-\br')\rho(\br)\rho(\br')\,\rd\br\,\rd\br'
 \label{def:xc}
\end{equation}
is called the \emph{exchange-correlation energy}. From a mathematical point of view, the Kohn-Sham approach \emph{a priori} requires to study both $F_{\rm LL}[\rho]$ and $T_{\rm S}[\rho]$ as separated functionals. It is an interesting question to find a way to study $E_{\rm xc}[\rho]$ directly, without interpreting it as a difference.
In chemistry one often relies on the \emph{adiabatic connection formula} (see Remark~\ref{rmk:adiabatic_conn} below), which however involves another kinetic energy functional $T[\rho]$ discussed later in Section~\ref{sec:kinetic}.

Instead of using $N$ uncorrelated electrons as main variable, one may also use a one-particle density matrix $\gamma$ as main variable, which is often called the \emph{Kohn-Sham method with fractional occupations}. The method is similar but one has to subtract the lowest kinetic energy $T_{\rm GC}[\rho]$ of all possible one-particle density matrices, which is defined later in Section~\ref{sec:kinetic}.

%%%%%%%%%%%%%%%%%%%%%%%%%%%%%%%%%%%%%%%%%%%%%%%%%%%%%%%%%
%%%%%%%%%%%%%%%%%%%%%%%%%%%%%%%%%%%%%%%%%%%%%%%%%%%%%%%%%
\section{The Uniform Electron Gas and the Local Density Approximation}\label{sec:LDA}
%%%%%%%%%%%%%%%%%%%%%%%%%%%%%%%%%%%%%%%%%%%%%%%%%%%%%%%%%
%%%%%%%%%%%%%%%%%%%%%%%%%%%%%%%%%%%%%%%%%%%%%%%%%%%%%%%%%

The universal functionals $F_{\rm LL}$, $F_{\rm L}$ and $F_{\rm GC}$ defined in the previous section allow in principle to describe any fermionic system interacting via the potential $w$. But these functionals are of course not known exactly and finding them is essentially the same as solving the $N$-particle problem. One of the main purpose of DFT is to find reliable and efficient approximations. Here we discuss the most widely used of these approximations, called the \emph{Local Density Approximation (LDA)}~\cite{HohKoh-64,KohSha-65,LunMar-83,DreGro-90,ParYan-94,PerKur-03}, where they are replaced by purely local ones. The LDA is often considered as \emph{``the mother of all approximations''}~\cite{PerSch-01} and it yields surprisingly good results, even in cases where the density is not at all slowly varying~\cite{LunMar-83,ParYan-94}. Its successors involving gradient corrections are even better and have become the standard in DFT calculations. In this section we only consider the Coulomb case in dimension $d=3$ but we expect similar results for other potentials in all dimensions.

Of course, the functionals $F_{\rm LL}$, $F_{\rm L}$ and $F_{\rm GC}$ are not local at all. Two electrons at different places are always entangled and, furthermore, the Coulomb potential has a very long range so that electrons interact even when they are far apart. In the LDA one makes the assumption that the only non-local part is the Hartree term (the classical Coulomb energy of the density $\rho$) and one approximates the rest by a local function of $\rho$, that is, the integral of a function $f$ depending only on the value $\rho(\br)$ at $\br$:
\begin{equation}
F_\text{LL,L,GC}[\rho]\ \ \approx\ \underbrace{\frac12\iint_{\R^3\times \R^3}\frac{\rho(\br)\,\rho(\br')}{|\br-\br'|}\,\rd\br\,\rd\br'}_{\substack{\text{\bf non local}\\ \text{classical Coulomb energy of $\rho$}}} \ \ +\underbrace{\int_{\R^3}f\big(\rho(\br)\big)\,\rd\br.}_{\substack{\text{\bf local}\\ \text{$f$ = energy per unit vol.}\\ \text{of uniform electron gas}}}
\end{equation}
The function $f$ is chosen to be the energy per unit volume of an infinite gas of constant density $\rho$, called the \emph{Uniform Electron Gas} (UEG), so that the approximation becomes exact when $\rho$ is constant over a very large domain. Because the UEG is an infinite system it should not depend whether it is defined canonically or grand-canonically. Hence the function $f$ must be the same for the three functionals $F_{\rm LL}$, $F_{\rm L}$ and $F_{\rm GC}$.

The idea behind the LDA is as depicted in Figure~\ref{fig:LDA}. After subtraction of the Hartree term, one splits the space into small boxes (of volume $d\br$) and assumes that the remaining energy is the sum of the local energies. In each little box, one replaces the density by a constant. One does not use the energy of the constant function in the small box, but rather the energy per unit volume of an infinite system having the corresponding uniform density, multiplied by the volume $d\br$ of the small box.

In the LDA the complicated Levy-Lieb functionals therefore get replaced by a new universal function $f:\R_+\to\R$, which is much simpler since it only depends on one real parameter. But the function $f$ is also not known exactly, and we will see that it displays a very rich structure.

In this section we report on the results in~\cite{LewLieSei-19} where the LDA was rigorously justified for the first time. The proper regime is that of slowly varying densities, that is, densities $\rho$ which are very flat on sufficiently large domains such that $f(\rho)$ becomes a good local approximation. To this end we start by defining the function $f$.

\begin{figure}[t]
\centering
\includegraphics[width=9cm]{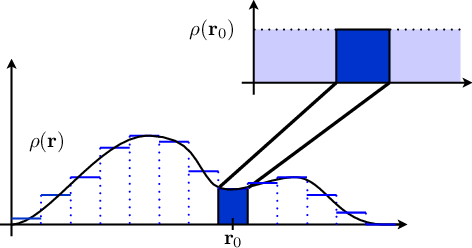}
\caption{Main idea of the Local Density Approximation in DFT. The Levy-Lieb energy (with the Hartree term subtracted) is replaced by the sum of the energies per unit volume of an infinite uniform gas with the local density $\rho(\br_0)$, times the volume $d\br$.\label{fig:LDA}}
\end{figure}

\subsection{The Uniform Electron Gas}\label{sec:UEG}

The uniform electron gas was rigorously defined in~\cite{LewLieSei-18,LewLieSei-19} and it is obtained by assuming that the density is exactly constant over a large domain which grows such as to cover the whole space. The result is the following.

\begin{theorem}[Uniform Electron Gas~\cite{LewLieSei-18,LewLieSei-19}]\label{thm:UEG}
Let $w(\br)=|\br|^{-1}$ in dimension $d=3$ and $q\geq1$ the number of spin states. Let $\rho_0>0$. Let $\Omega$ be a fixed open convex set of unit volume $|\Omega|=1$. Let $\chi\in L^1(\R^3)$ be a radial non-negative function of compact support such that $\int_{\R^3}\chi=1$ and  $\int_{\R^3}|\nabla\sqrt\chi|^2<\ii$. Then the following thermodynamic limit exists
\begin{multline}
f\big(\rho_0\big):=
\lim_{L\to\ii}L^{-3}\bigg(F_{\rm GC}\big[\rho_0\1_{L\Omega}\ast\chi\big]\\-\frac{\rho_0^2}2\iint_{\R^3\times\R^3}\frac{(\1_{L\Omega}\ast\chi)(\br)(\1_{L\Omega}\ast\chi)(\br')}{|\br-\br'|}\,\rd\br\,\rd\br'\bigg)
 \label{eq:thermodynamic_limit_quantum}
\end{multline}
and does not depend on $\Omega$ and~$\chi$.
\end{theorem}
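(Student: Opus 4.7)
This is a thermodynamic limit result for a Coulomb system, and the natural approach is the standard two-sided (upper/lower bound) argument in the spirit of Lieb--Lebowitz and Lieb--Narnhofer. The essential point is that $\rho_0^2 |\br-\br'|^{-1}$ is not an integrable interaction; the Hartree self-energy of the plateau $\rho_0\1_{L\Omega}\ast\chi$ grows like $L^5$ in $d=3$, and only after subtracting it does the remaining quantity become extensive, of order $L^3$. The role of the Hartree subtraction is thus the analogue of the neutralizing background of jellium: it converts the long-range Coulomb interaction between well-separated regions into an effectively short-range, charge-neutral interaction. My plan is to (i) prove the limit exists when $\Omega$ is a cube and $\chi$ is a fixed regulator, (ii) show independence of $\chi$, and (iii) pass to general convex $\Omega$ by a Van Hove type argument.

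\textbf{Upper bound by tiling.} For cubes $C_L=[0,L]^3$ with $L=k\ell$, I tile $L\Omega$ by $k^3$ translates of $C_\ell$ and use the crucial property of $F_{\rm GC}$ that allows superposing independent grand-canonical states in disjoint regions without any constraint on the particle number in each. Placing an optimizer for $F_{\rm GC}[\rho_0\1_{C_\ell}\ast\chi]$ in each sub-cube (modulo a boundary correction to match the smoothed profile $\rho_0\1_{C_L}\ast\chi$, whose cost is $O(L^2)$ because $\chi$ has compact support) gives
\[
F_{\rm GC}[\rho_0\1_{C_L}\ast\chi]\leq k^3 F_{\rm GC}[\rho_0\1_{C_\ell}\ast\chi]+\text{(Coulomb cross-terms)}+O(L^2).
\]
The subtracted Hartree term, written as a double sum over pairs of cubes, exactly cancels the \emph{mean} Coulomb cross-interaction between sub-cubes, leaving only the fluctuation part. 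After dividing by $L^3$ and taking $L\to\ii$ with $\ell$ fixed this yields approximate subadditivity of $L^{-3}G_L$ and hence convergence along cubes by a Fekete-type argument.

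\textbf{Lower bound via screening.} For the matching lower bound I would use a Graf--Schenker type sliding-cube inequality, or equivalently the Onsager/Baxter electrostatic bound, to decompose the Coulomb interaction $\sum_{i<j}|\br_i-\br_j|^{-1}$ plus the Hartree subtraction into a sum of localized, charge-neutral contributions supported in small cubes, plus a positive error. Applied inside the grand-canonical minimization, this gives
\[
F_{\rm GC}[\rho_0\1_{L\Omega}\ast\chi]-\tfrac12\iint\tfrac{\rho\,\rho'}{|\br-\br'|}\geq \sum_{Q\subset L\Omega}F_{\rm GC}[\rho_0\1_Q\ast\chi]-(\text{Hartree in }Q) - O(L^2),
\]
where the sum is over a tiling of $L\Omega$ by cubes of a fixed side $\ell$. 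Together with step~2 this forces the limit $\lim_{L\to\ii}L^{-3}G_L$ to exist and equal a common value $f(\rho_0)$.

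\textbf{Independence and obstacle.} Once the limit exists for cubes, independence of the convex set $\Omega$ follows by sandwiching $L\Omega$ between inner and outer unions of cubes of side $\ell\to\ii$ with $\ell/L\to 0$; the discrepancy has volume $O(L^2)\cdot\ell$, and the associated energy and Hartree corrections are both $o(L^3)$. Independence of the regulator $\chi$ (among radial non-negative $\chi$ with $\int\chi=1$ and $\sqrt\chi\in H^1$) follows because two choices $\chi_1,\chi_2$ differ only in a boundary layer of $L\Omega$ of width independent of $L$, so both the quantum energy (bounded by the extra kinetic energy times the boundary volume) and the Hartree subtraction differ by $O(L^2)$. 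The main obstacle is the Coulomb cross-term control in the lower bound: without a neutrality mechanism the $1/|\br|$ tails between distant cubes would contribute at order $L^5$. This is precisely where the Hartree subtraction, combined with an electrostatic inequality of Graf--Schenker/Onsager type, must be used very carefully, and where the grand-canonical nature of $F_{\rm GC}$ is essential, as it allows the particle number inside each cube to fluctuate without constraint.
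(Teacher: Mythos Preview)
The paper does not contain its own proof of this theorem; it is a survey result quoted from \cite{LewLieSei-18,LewLieSei-19}, so there is nothing in the present paper to compare against line by line. Your outline is, in broad strokes, the correct strategy and is indeed the one used in those references: subadditivity by tiling for the upper bound, a Graf--Schenker electrostatic inequality for the lower bound, with the subtracted Hartree term playing the role of the neutralizing background.

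Two technical points in your sketch deserve more care. First, in the upper bound you write that placing an optimizer for $F_{\rm GC}[\rho_0\1_{C_\ell}\ast\chi]$ in each sub-cube reproduces $\rho_0\1_{C_L}\ast\chi$ ``modulo a boundary correction''. Note that while $\1_{C_L}\ast\chi=\sum_a \1_{C_\ell+a}\ast\chi$ exactly as functions, the individual pieces $\1_{C_\ell+a}\ast\chi$ have \emph{overlapping} supports near the tile interfaces (within the support radius of $\chi$), so one cannot simply tensor independent grand-canonical states and read off the density as a sum. The actual construction in \cite{LewLieSei-19} handles this either by introducing corridors between tiles or by a more careful density-matching argument; this is not an $O(L^2)$ triviality and should be flagged as a genuine step. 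Second, your statement that the Coulomb cross-terms between sub-cubes ``exactly cancel'' the Hartree cross-terms is correct for a product of independent states (the two-point function factorizes), which is the right observation, but only once the overlap issue above is resolved.

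For the lower bound, Onsager's argument alone is too weak here; it is the Graf--Schenker inequality (an average over tilings by simplices/cubes, yielding a positive-definite localized decomposition of the Coulomb potential) that gives the needed screening with controllable error, as in \cite{GraSol-94,HaiLewSol_2-09}. Your identification of this as the key tool is correct; just be aware that the inequality must be applied to the \emph{pair} (electrons, subtracted Hartree density) as a neutral system, not to the electrons alone.
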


Electrons have $q=2$ spin states but we wrote the result with a general $q$ for convenience. It is expected that the exact same result holds for the other two functionals $F_{\rm LL}$ and $F_{\rm L}$, with the same limit $f(\rho_0)$. This has not yet been proved, except in the classical case where the kinetic energy is dropped (see Section~\ref{sec:classical}).

The number $f(\rho_0)$ is the energy per unit volume (with the Hartree energy subtracted) of an infinite gas submitted to the constraint that its density is exactly constant over $\R^3$, $\rho(\br)\equiv\rho_0$. In the literature, $f(\rho_0)$ is often confused with the corresponding \emph{Jellium energy}. In Jellium there is no constraint on $\rho$ but one adds instead a uniform background of density $\rho_0$ which compensates the long range of the Coulomb potential~\cite{LieNar-75,Lewin-22}. That the two models coincide has only been shown in the classical case~\cite{CotPet-19b,LewLieSei-19b} so far, as we will mention in Section~\ref{sec:classical} below. In the quantum case the same result should hold, but the proof has not been written yet.

Here are some rigorously known properties of the function $f$.

\begin{theorem}[Properties of $f$~\cite{LieNar-75,GraSol-94,LewLieSei-18,LewLieSei-19}]\label{thm:prop_f}
The function $f$ is locally Lipschitz: There exists a constant $C$ such that
\begin{equation}
 \left|f(\rho_1)-f(\rho_2)\right|\leq C\left(\max(\rho_1,\rho_2)^{\frac13}+\max(\rho_1,\rho_2)^{\frac23}\right)|\rho_1-\rho_2|,
 \label{eq:Lipschitz}
\end{equation}
for all $\rho_1,\rho_2\geq0$. The function $f$ satisfies the uniform bound
\begin{equation}
 c_{\rm TF}(3)q^{-\frac23}\rho^{\frac53}-\frac35 \left(\frac{9\pi}{2}\right)^{\frac13}\rho^{\frac43}\leq f(\rho)\leq c_{\rm TF}(3)q^{-\frac23}\rho^{\frac53}-c_{\rm D}(1,3)q^{-\frac13}\rho^{\frac43}
 \label{eq:f_uniform_bound}
\end{equation}
for all $\rho\geq0$, where $c_{\rm TF}(3)=(3/10)(6\pi^2)^{2/3}$ and $c_{\rm D}(1,3)=(3/4)\left(6/\pi\right)^{1/3}$ are respectively the Thomas-Fermi and Dirac constants, discussed later in Sections~\ref{sec:kinetic} and~\ref{sec:classical}.
It behaves at small densities like
\begin{equation}
 f(\rho)=c_{\rm UEG}(1,3)\rho^{\frac43}+o\left(\rho^{\frac43}\right)_{\rho\to0^+}
\label{eq:f_small_rho}
 \end{equation}
where
$$ -1.4508\simeq -\frac35 \left(\frac{9\pi}{2}\right)^{\frac13} \leq c_{\rm UEG}(1,3)\leq -1.4442$$
is the classical UEG energy discussed later in Section~\ref{sec:classical}, and at large densities like
\begin{equation}
 f(\rho)= c_{\rm TF}(3)q^{-\frac23}\rho^{\frac53}-c_{\rm D}(1,3)q^{-\frac13}\rho^{\frac43}+o\left(\rho^{\frac43}\right)_{\rho\to\ii}.
 \label{eq:f_large_rho}
\end{equation}
\end{theorem}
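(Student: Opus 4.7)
The plan is to establish the four assertions in order: the two-sided uniform bound \eqref{eq:f_uniform_bound}, then the Lipschitz bound \eqref{eq:Lipschitz}, then the two asymptotics \eqref{eq:f_small_rho}--\eqref{eq:f_large_rho}, each building on the previous. The essential inputs are (i) the explicit Hartree--Fock energy of a free Fermi gas, (ii) the Lieb--Oxford inequality for the Coulomb part, and (iii) for the sharp Dirac coefficient in the large-$\rho$ asymptotic, the Graf--Solovej theorem \cite{GraSol-94}.

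For the upper bound in \eqref{eq:f_uniform_bound}, I would test $F_{\rm GC}[\rho_0\1_{L\Omega}\ast\chi]$ on a free Fermi gas trial state: a Slater determinant obtained by filling every plane wave below the Fermi momentum $k_F=(6\pi^2\rho_0/q)^{1/3}$ in a periodic box containing $L\Omega$, then modified locally so that the density matches the prescribed $\rho_0\1_{L\Omega}\ast\chi$. A direct computation in momentum space yields, per unit volume, kinetic energy $c_{\rm TF}(3)q^{-2/3}\rho_0^{5/3}$ and exchange (interaction minus Hartree) equal to $-c_{\rm D}(1,3)q^{-1/3}\rho_0^{4/3}$, with boundary errors of order $L^{-1}$ that drop out in the thermodynamic limit of Theorem~\ref{thm:UEG}. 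For the matching lower bound, I would bound the kinetic part from below by Weyl-type semiclassical asymptotics for the free Fermi gas, which give $T\geq c_{\rm TF}(3)q^{-2/3}\rho_0^{5/3}|L\Omega|+o(L^3)$ at fixed uniform density, and bound the interaction from below by the Lieb--Oxford inequality \cite{LieOxf-80}, producing exactly the Hartree term minus $\tfrac{3}{5}(9\pi/2)^{1/3}\int\rho^{4/3}$.

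The Lipschitz bound \eqref{eq:Lipschitz} follows from a trial-state comparison: given an almost-optimal grand-canonical state for $f(\rho_2)$, one constructs an admissible state for $f(\rho_1)$ by appropriately adding or removing particles, and then controls the energy change using \eqref{eq:f_uniform_bound}. The prefactors $\max(\rho_1,\rho_2)^{1/3}$ and $\max(\rho_1,\rho_2)^{2/3}$ correspond to the derivatives of the two envelopes $\rho^{4/3}$ and $\rho^{5/3}$ appearing in the uniform bound. The small-density asymptotic \eqref{eq:f_small_rho} then falls out because at density $\rho\to 0^+$, localizing each particle in a bump of diameter $\sim\rho^{-1/3}$ costs only $O(\rho^{5/3})=o(\rho^{4/3})$ kinetic energy per unit volume; the remaining $\rho^{4/3}$ part is purely interaction and identifies with the classical UEG constant $c_{\rm UEG}(1,3)$, treated in the classical analogue of Theorem~\ref{thm:UEG} (Section~\ref{sec:classical}).

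The large-density asymptotic \eqref{eq:f_large_rho} is the main obstacle. The upper bound is again provided by the free Fermi gas trial state already used in \eqref{eq:f_uniform_bound}. The matching lower bound is much harder: it requires the \emph{sharp} Dirac constant $c_{\rm D}(1,3)$ in the $\rho^{4/3}$ correction, which is strictly stronger than what Lieb--Oxford gives. This is precisely the content of the Graf--Solovej theorem \cite{GraSol-94}, which shows that in the high-density (equivalently semiclassical) regime the exchange correction to the Thomas--Fermi energy saturates the Dirac value. Combining this with the sharp Weyl-type asymptotic for the kinetic energy in the same regime yields \eqref{eq:f_large_rho} and completes the proof.
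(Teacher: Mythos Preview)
Your overall strategy is sound and matches the ingredients the paper assembles (Li--Yau for the sharp kinetic lower bound on constant densities, the free Fermi gas for the upper bound, the classical UEG limit at low density, and Graf--Solovej for the high-density exchange term). There is, however, one concrete error in your lower bound for~\eqref{eq:f_uniform_bound}.

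You write that the Lieb--Oxford inequality~\cite{LieOxf-80} produces the Hartree term minus $\tfrac{3}{5}(9\pi/2)^{1/3}\int\rho^{4/3}$. This is incorrect: the Lieb--Oxford constant $c_{\rm LO}(1,3)$ is only known to satisfy $1.4442\leq c_{\rm LO}(1,3)\leq 1.6358$ (see~\eqref{eq:estim_c_LO}), and the general Lieb--Oxford inequality would give you at best the weaker bound with $1.6358$. The constant $\tfrac{3}{5}(9\pi/2)^{1/3}\simeq 1.4508$ that actually appears in~\eqref{eq:f_uniform_bound} comes from the \emph{Lieb--Narnhofer} inequality~\cite{LieNar-75}, stated in the paper as~\eqref{eq:LieNar}, which is specific to \emph{constant} densities $\rho=\1_\Omega$ and is proved by an Onsager-type argument with the background smeared into balls. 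This is precisely why the theorem cites~\cite{LieNar-75} rather than~\cite{LieOxf-80}. Once you replace Lieb--Oxford by Lieb--Narnhofer in that step, your argument goes through; the sharp Thomas--Fermi constant on the kinetic side indeed comes from the Li--Yau inequality~\eqref{eq:LiYau} applied to $\rho_0\1_{L\Omega}\ast\chi$, as the paper points out just below~\eqref{eq:LiYau}.

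One smaller remark: for the Lipschitz bound~\eqref{eq:Lipschitz}, the actual argument in~\cite{LewLieSei-19} does not quite proceed by ``adding or removing particles'' from an optimal state, but rather by a direct comparison of the thermodynamic limits at two nearby densities using the subadditivity estimates developed there. Your heuristic explanation of the exponents $1/3$ and $2/3$ via the derivatives of $\rho^{4/3}$ and $\rho^{5/3}$ is morally correct, but the rigorous proof routes through those subadditivity bounds rather than through~\eqref{eq:f_uniform_bound} alone.
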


The statement involves several constants that will be introduced in the next sections. It is believed that $f$ is smooth except at finitely many points corresponding to phase transitions. In the case of spin-$1/2$ particles like electrons ($q=2$), numerical simulations in~\cite{BruSahTel-66,PolHan-73,AldCep-80,JonCep-96,ZonLinCep-02,DruRadTraTowNee-04,HolMor-20,AzaDru-22_ppt} indicate that there might be one or two such points. For a long time it was believed that the system can be a ferromagnetic Wigner crystal, a ferromagnetic fluid and a paramagnetic fluid. Recent results indicate that the ferromagnetic fluid phase might not exist~\cite{HolMor-20,AzaDru-22_ppt}, however.
More transitions could occur in the solid phase (for instance an anti-ferromagnetic crystal).
In spite of the clear numerical evidence that there are phase transitions, proving it remains a very challenging open problem~\cite{BlaLew-15}.
Several approximate formulas for the function $f$ are used in DFT, including for instance the celebrated Perdew-Wang (PW92) functional~\cite{PerWan-92}.

\subsection{The Local Density Approximation of $F_{\rm GC}$}

We now state a result from~\cite{LewLieSei-19} where the LDA was justified for the first time in the quantum case.

\begin{theorem}[LDA for $F_{\rm GC}$~\cite{LewLieSei-19}]\label{thm:LDA}
Let $w(\br)=|\br|^{-1}$ in dimension $d=3$ and $q\geq1$ the number of spin states. Then there exists a constant $C=C(q)$ such that
\begin{multline}
 \left|F_{\rm GC}[\rho]- \frac12 \iint_{\R^3\times\R^3}\frac{\rho(\br)\rho(\br')}{|\br-\br'|}\,\rd\br\,\rd\br'-\int_{\R^3}f\big(\rho(\br)\big)\,\rd\br\right|\\
 \leq \eps \int_{\R^3}\big(\rho(\br)+\rho(\br)^2\big)\,\rd\br
 +\frac{C(1+\eps)}{\eps}\int_{\R^3}|\nabla\sqrt\rho(\br)|^2\,\rd\br+\frac{C}{\eps^{15}}\int_{\R^3}|\nabla\sqrt\rho(\br)|^4\,\rd\br
 \label{eq:LDA_main_estim}
\end{multline}
for every $\eps>0$ and every non-negative density $\rho\in L^1(\R^3)\cap L^2(\R^3)$ such that $\nabla \sqrt{\rho}\in L^2\cap L^4(\R^3)$. Here $f$ is the function defined in Theorem~\ref{thm:UEG}.
\end{theorem}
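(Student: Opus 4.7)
The plan is to prove matching upper and lower bounds on $F_{\rm GC}[\rho] - \frac12 \iint \rho(\br)\rho(\br')|\br-\br'|^{-1}\,d\br\,d\br'$, each controlled by the right-hand side of the stated estimate. My first step, in both directions, would be to introduce a length scale $\ell=\ell(\eps)$, partition $\R^3$ into cubes $Q$ of side $\ell$, and approximate $\rho$ in each cube by its average $\rho_Q$. The scale $\ell$ will be chosen at the end to balance the competing errors; the peculiar polynomial dependence on $1/\eps$ (and in particular the factor $\eps^{-15}$ in front of $\int|\nabla\sqrt\rho|^4$) will come out of this optimization.

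For the upper bound I would use Theorem~\ref{thm:UEG}: by definition of $f$ as a thermodynamic limit, for each cube $Q$ there exists a grand-canonical trial state $\Gamma_Q$ supported near $Q$ with density close to $\rho_Q\unn_Q\ast\chi$ and whose energy is $|Q|f(\rho_Q)+\frac12\rho_Q^2\iint_{Q\times Q}|\br-\br'|^{-1}+o(|Q|)$. One then tensorises these states into a global grand canonical trial state; the Coulomb interaction naturally splits into the intra-cube pieces (which reconstruct $\sum_Q |Q|f(\rho_Q)$) plus the inter-cube pieces (which combine with the intra-cube Hartree terms to reproduce $\frac12\iint\rho\rho/|\br-\br'|$ up to small errors, since inter-cube interactions are essentially classical). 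The Lipschitz estimate on $f$ from Theorem~\ref{thm:prop_f} converts the fluctuation of $\rho$ inside each cube into errors of the form $\ell\int_Q|\nabla\sqrt\rho|^2$ and higher-order variants, which after summing yield terms controlled by $\int|\nabla\sqrt\rho|^2$ and $\int|\nabla\sqrt\rho|^4$ plus the $\eps(\rho+\rho^2)$ mass term.

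For the lower bound I would use a sliding/Graf--Schenker type inequality for the Coulomb kernel to decompose $\sum_{j<k}|\br_j-\br_k|^{-1}$ into the classical Hartree energy $\frac12\iint\rho\rho/|\br-\br'|$ plus a sum of localized interactions over translates of a bounded region. Given any admissible grand canonical state with density $\rho$, an IMS-type localization by a smooth partition of unity associated with the cube decomposition produces local states whose densities are (close to) $\rho \chi_Q$, with a kinetic cost bounded by $\ell^{-2}\int|\nabla\sqrt\rho|^2$-type terms. In each cube, the local energy is then bounded below using the thermodynamic limit definition of $f(\rho_Q)$, provided the cube is large enough for the convergence to have kicked in. Summing over cubes and combining with the Hartree contribution yields the desired inequality; the Lipschitz bound on $f$ again converts $\rho_Q$ versus $\rho$ discrepancies into gradient terms.

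The main obstacle is the long range of the Coulomb interaction. For a short-range pair potential, both bounds would reduce to essentially independent cube-by-cube estimates and the argument would be routine. Here instead one must carefully match, at both ends of the argument, the exact Hartree energy $\frac12\iint\rho(\br)\rho(\br')/|\br-\br'|$ against the sum of contributions from all pairs of cubes, including diagonal ones, and control the error in the finite-volume UEG approximation at a rate that beats the algebraic blow-up introduced by the inverse powers of $\ell$. Tuning $\ell$ as a power of $\eps$ so that the kinetic localization cost, the Lipschitz-$f$ error, and the UEG finite-size error all collapse into the three terms on the right-hand side is where the somewhat baroque $\eps^{-15}$ exponent is expected to be forced on us.
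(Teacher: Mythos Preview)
The paper you are working from is a survey and does not actually contain a proof of Theorem~\ref{thm:LDA}; the result is only stated there, with the proof deferred to the cited article~\cite{LewLieSei-19}. So there is no ``paper's own proof'' to compare your proposal against in this text.

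That said, your outline is broadly in line with the strategy of~\cite{LewLieSei-19}: a Graf--Schenker-type decomposition of the Coulomb interaction for the lower bound, trial states built from local pieces of the uniform electron gas for the upper bound, and an optimization over a length scale that produces the (admittedly non-optimal) powers of~$\eps$. One point where your sketch is a bit loose is the upper bound: simply tensorising cube-by-cube UEG trial states does not directly give a state with the \emph{exact} density~$\rho$, and in the grand-canonical setting one has to work harder to hit the density constraint while keeping the energy under control; the paper hints at the relevant tool in Theorem~\ref{thm:upper_bound_T} and the trial state~\eqref{eq:trial_state}. Similarly, for the lower bound the IMS localization of a grand-canonical state and the matching of localized densities to nearly constant ones require the machinery of localized states in Fock space (as used in Appendix~\ref{app:wlsc_F_GC}). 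These are the places where the actual work lies, and your proposal correctly identifies the long range of the Coulomb potential as the main source of difficulty, but you should be aware that the density constraint is the second serious obstacle and is not handled by a naive cube decomposition.
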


It is expected that the exact same result holds for the canonical functionals $F_{\rm LL}$ and $F_{\rm L}$, with of course the additional constraint that $\int_{\R^3}\rho\in\N$.

The last gradient term $\eps^{-15}|\nabla\sqrt\rho|^4$ was chosen for simplicity but the same result actually holds with  $\eps^{1-4p}|\nabla\rho^\theta|^p$ instead, under the conditions that $p>3$, $0<\theta<1$ and $2\leq p\theta\leq 1+{p}/{2}$. The constant $C$ then depends on the chosen $p$ and $\theta$.

In addition to the large power of $\eps$, which is an artifact of the proof in~\cite{LewLieSei-19}, the form of the error term is probably not optimal. It is reasonable to expect that the right side of~\eqref{eq:LDA_main_estim} should only involve quantities like $\rho^{5/3}$, $\rho^{4/3}$, $|\nabla\sqrt\rho|^2$ and $|\nabla\rho^{1/3}|^2$ or perhaps $|\nabla\rho|$ which have the same scaling as the kinetic and Coulomb terms.

The inequality~\eqref{eq:LDA_main_estim} holds for every density but it is useful only when the two gradient terms are much smaller than the
first term,
$$ \int_{\R^3}\left(|\nabla\sqrt{\rho}(\br)|^2+|\nabla\sqrt{\rho}(\br)|^4\right)\,\rd\br \ll \int_{\R^3}\left(\rho(\br)+\rho(\br)^2\right)d\br,$$
so that after optimizing over $\eps$ one gets a small term on the right side of~\eqref{eq:LDA_main_estim}. One interesting case is when the density is given in terms of a fixed function $\rho$ with $\int_{\R^3}\rho=1$, which is rescaled in the manner
$$\rho_N(\br)=\rho(N^{-1/3}\br).$$
After optimizing over $\eps$ we obtain the following expansion  of the grand-canonical Levy-Lieb energy:
\begin{multline}
 F_{\rm GC}[\rho_N]= \frac{N^{\frac53}}2 \iint_{\R^3\times\R^3}\frac{\rho(\br)\rho(\br')}{|\br-\br'|}\,\rd\br\,\rd\br'+N\int_{\R^3}f\big(\rho(\br)\big)\,\rd\br +O\left(N^{\frac{11}{12}}\right).
 \label{eq:LDA_main_rescaled_density}
\end{multline}
The first term is the trivial non-local Coulomb term, whereas the next term in the expansion is the LDA. It is an interesting open question to determine the next order correction, which is believed to be also local, of order $N^{1/3}$ and to involve gradients. The exact same result as~\eqref{eq:LDA_main_rescaled_density} is expected for $F_{\rm LL}$ and $F_{\rm L}$.

\begin{remark}[LDA for the exchange-correlation energy]\rm
Theorem~\ref{thm:LDA_kinetic} below is a result similar to Theorem~\ref{thm:LDA} for the grand-canonical kinetic energy  $T_{\rm GC}$ alone, and implies a corresponding bound for the difference of the two functionals. These two bounds justifies the LDA for the (grand-canonical) exchange-correlation energy, as was defined in Section~\ref{sec:KS}.
\end{remark}

In the next two sections we study separately the kinetic energy functional and the classical interaction functional. We discuss known upper and lower bounds and derive the LDA for these functionals in a similar (but simpler) manner as for the full Levy-Lieb functional $F_{\rm GC}$. Although the minimum of a sum is in general not the sum of the two minima, understanding the kinetic and interaction energies separately will give us useful information on the full functional, as explained in Section~\ref{sec:TFD_is_a_lower_bound} below.

%%%%%%%%%%%%%%%%%%%%%%%%%%%%%%%%%%%%%%%%%%%%%%%%%%%%%%%%%
%%%%%%%%%%%%%%%%%%%%%%%%%%%%%%%%%%%%%%%%%%%%%%%%%%%%%%%%%
\section{Kinetic energy and Lieb-Thirring inequalities}\label{sec:kinetic}
%%%%%%%%%%%%%%%%%%%%%%%%%%%%%%%%%%%%%%%%%%%%%%%%%%%%%%%%%
%%%%%%%%%%%%%%%%%%%%%%%%%%%%%%%%%%%%%%%%%%%%%%%%%%%%%%%%%

\subsection{Three kinetic energy functionals}
We have introduced in~\eqref{def:Slater_kinetic_energy} the lowest kinetic energy $T_{\rm S}[\rho]$ that can be reached with Slater determinants, for a given density $\rho$ with $\int_{\R^3}\rho(\br)\,\rd\br=N$. We can define in a similar manner the lowest kinetic energy that can be reached with all possible wave functions
\begin{equation}
\boxed{T[\rho]:=\min_{\substack{\Psi\in \bigwedge_1^NH^1(\R^d\times\Z_q,\C)\\ \norm{\Psi}_{L^2}=1\\ \rho_\Psi=\rho}}\frac12\sum_{j=1}^N\int_{(\R^d\times\Z_q)^N}|\nabla_{j}\Psi(\bX)|^2\,\rd\bX.}
\label{def:kinetic_energy}
\end{equation}
This is nothing but $F_{\rm LL}^0[\rho]$, the Levy-Lieb functional with interaction $w\equiv0$. Recall that $T$ and $T_{\rm S}$ depend on the number of spin states $q$.

Since this is a non-interacting problem, one may think at first sight that minimizers will always be Slater determinants, that is, $T[\rho]$ and $T_{\rm S}[\rho]$ should coincide. But this is not true in general~\cite{Lieb-83b} and the best one can say for a general $\rho$ is that $T_{\rm S}[\rho]\geq T[\rho]$.

There are two other natural kinetic functionals corresponding to $F_{\rm L}^0[\rho]$ and $F_{\rm GC}^0[\rho]$, respectively. For the first one the minimization is extended to mixed canonical states and for the second one to grand-canonical states. It turns out that these two are equal:
$$F_{\rm L}^0[\rho]=F_{\rm GC}^0[\rho].$$
The reason is that the kinetic energy can be expressed in terms of the one-particle density matrix $\gamma$ and that the set of such matrices which are $N$-representable by a mixed state coincides with those which are representable by a grand-canonical state~\cite{Coleman-63,ColYuk-00}. By duality, this also follows from the fact that the inequality~\eqref{eq:conjecture_convexity_E_n} always holds in the non-interacting case.

In order to explain all this in detail, we first recall that the one-particle density matrix $\gamma_\Psi$ of a wavefunction $\Psi$ is the self-adjoint operator acting on the one-particle space $L^2(\R^d\times\Z_q)$ with integral kernel
$$\gamma_\Psi(\bx,\by)=N\int_{(\R^d\times \Z_q)^{N-1}}\Psi(\bx,\bX)\overline{\Psi(\by,\bX)}\,\rd\bX.$$
This gives
$$\frac12\sum_{j=1}^N\int_{(\R^d\times\Z_q)^N}|\nabla_{j}\Psi(\bX)|^2\,\rd\bX=\tr\left(\frac{-\Delta}{2}\right)\gamma_\Psi$$
with the trace interpreted in the quadratic form sense. Every density matrix of an antisymmetric $\Psi$ satisfies $0\leq \gamma_\Psi=(\gamma_\Psi)^*\leq1$ and $\tr(\gamma_\Psi)=N$. When we consider mixed $N$-particle states we obtain the convex hull of the set of $N$-representable density matrices. This convex hull is definitely contained in the convex set
$$\left\{\gamma=\gamma^*\ :\ 0\leq\gamma\leq 1,\ \tr(-\Delta)\gamma<\ii,\ \tr(\gamma)=N\right\}.$$
But the extreme points of this set are the rank--$N$ orthogonal projections with finite kinetic energy. Those are exactly the one-particle density matrices of the Slater determinants. Hence we must have equality of the two convex sets. By considering grand-canonical states the set will not increase further. See Theorem~\ref{thm:upper_bound_F_L} for a related result.

This discussion leads us towards introducing the following kinetic energy functional
\begin{equation}
\boxed{T_{\rm GC}[\rho]:=\min_{\substack{0\leq\gamma=\gamma^*\leq1\\ \tr(-\Delta)\gamma<\ii\\ \rho_\gamma=\rho}}\tr\left(\frac{-\Delta}{2}\right)\gamma.}
\label{def:kinetic_GC}
\end{equation}
We call it ``grand-canonical'' since $\int_{\R^d}\rho$ can now take any positive value. But for $\int_{\R^d}\rho\in\N$, this is just Lieb's canonical energy $T_{\rm GC}[\rho]=F_{\rm L}^0[\rho]$. In this case we also have
$$T_{\rm GC}[\rho]\leq T[\rho]\leq T_{\rm S}[\rho],\qquad \text{when $\int_{\R^d}\rho\in\N$}.$$
The functional $T_{\rm GC}[\rho]$ is convex and it is the convex hull and the weak-$\ast$ semi-continuous closure of both $T[\rho]$ and $T_{\rm S}[\rho]$, similarly as in Theorems~\ref{thm:wlsc_F_GC}. Minimizers exist for these three functionals, as in Theorem~\ref{thm:exists_min}. It suffices to take $w\equiv0$ in all those theorems.

\begin{remark}[$N$-representability of the one-particle density matrix]\rm
There are complicated constraints on a one-particle density matrix $\gamma$ to ensure that it arises from an $N$-particle wavefunction $\Psi$. For instance, when $N=2$ then all the eigenvalues of $\gamma$ must be of even multiplicity. Another example is that no $\gamma$ of rank $N+1$ is $N$-representable. See~\cite{ColYuk-00,Friesecke-03,Lewin-04a} for these two examples and~\cite{Ruskai-70,BorDen-72,Klyachko-06,AltKly-08} for more advanced results when $N\geq3$.
\end{remark}

Now we mention some known upper and lower bounds on the kinetic energy functionals. Lower bounds naturally involve the lowest functional $T_{\rm GC}[\rho]$. Upper bound should ideally involve $T_{\rm S}[\rho]$ but we will see that much more is known on $T_{\rm GC}[\rho]$.

\subsection{Lower bounds: Hoffmann-Ostenhof and Lieb-Thirring inequalities}\label{sec:LT}

The first lower bound is the Hoffmann-Ostenhof inequality mentioned previously in~\eqref{eq:Hoffmann-Ostenhof} and which holds for the grand-canonical kinetic energy as well.

\begin{theorem}[Hoffmann-Ostenhof inequality~\cite{Hof-77}]\label{thm:HO}
For every $\rho\geq0$ such that $\sqrt\rho\in H^1(\R^d)$, we have
\begin{equation}
T_{\rm GC}[\rho]\geq  \frac12 \int_{\R^d}\left|\nabla\sqrt{\rho}(\br)\right|^2\,\rd\br.
 \label{eq:Hoffmann-Ostenhof_GC}
\end{equation}
\end{theorem}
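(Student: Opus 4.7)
The plan is to reduce the statement to the ``scalar'' Hoffmann--Ostenhof inequality \eqref{eq:Hoffmann-Ostenhof} by diagonalizing any admissible density matrix and then combining two pointwise ingredients: the diamagnetic inequality and a Cauchy--Schwarz step that turns a sum of squared gradients into the squared gradient of the square root of the density. Since $T_{\rm GC}$ is defined as a minimum over one--body density matrices $\gamma$, it suffices to prove the inequality $\tr(-\Delta)\gamma \geq \int_{\R^d}|\nabla\sqrt{\rho_\gamma}|^2\,d\br$ for every admissible $\gamma$ and then take the infimum.

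First I would diagonalize such a $\gamma$, writing $\gamma=\sum_{i} n_i|\phi_i\rangle\langle\phi_i|$ with $0\leq n_i\leq 1$, $\phi_i\in H^1(\R^d\times\Z_q)$ orthonormal, and $\sum_i n_i\int|\nabla\phi_i|^2<\ii$. Relabeling over the pair $(i,\sigma)$ and setting $u_{i,\sigma}(\br):=\sqrt{n_i}\,|\phi_i(\br,\sigma)|$, one has
\begin{equation*}
\rho_\gamma(\br)=\sum_{i,\sigma} u_{i,\sigma}(\br)^2,\qquad \tr(-\Delta)\gamma=\sum_{i,\sigma} n_i\int_{\R^d}|\nabla\phi_i(\br,\sigma)|^2\,d\br.
\end{equation*}
The pointwise diamagnetic inequality $|\nabla\phi_i(\br,\sigma)|^2\geq |\nabla|\phi_i(\br,\sigma)||^2$ (valid for $H^1$ functions, see e.g.\ \cite{LieLos-01}) then yields
\begin{equation*}
\tr(-\Delta)\gamma\geq \sum_{i,\sigma}\int_{\R^d}|\nabla u_{i,\sigma}(\br)|^2\,d\br.
\end{equation*}

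Next I would bound the right--hand side from below by $\int|\nabla\sqrt{\rho_\gamma}|^2$. On the set $\{\rho_\gamma>0\}$ one has $\nabla\rho_\gamma=2\sum_{i,\sigma} u_{i,\sigma}\nabla u_{i,\sigma}$, so the pointwise Cauchy--Schwarz inequality gives
\begin{equation*}
|\nabla\rho_\gamma(\br)|^2\leq 4\,\rho_\gamma(\br)\sum_{i,\sigma}|\nabla u_{i,\sigma}(\br)|^2,
\end{equation*}
hence $|\nabla\sqrt{\rho_\gamma}(\br)|^2=\frac{|\nabla\rho_\gamma(\br)|^2}{4\rho_\gamma(\br)}\leq \sum_{i,\sigma}|\nabla u_{i,\sigma}(\br)|^2$. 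On the set $\{\rho_\gamma=0\}$ one has $\nabla\sqrt{\rho_\gamma}=0$ a.e., so the inequality is trivial there. Integrating and dividing by $2$ gives the claim.

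The only genuine technical point is to justify that $u_{i,\sigma}=\sqrt{n_i}|\phi_i(\cdot,\sigma)|\in H^1(\R^d)$ and that the identity $\nabla\rho_\gamma=2\sum u_{i,\sigma}\nabla u_{i,\sigma}$ holds almost everywhere; both follow from the standard fact that $|f|\in H^1$ whenever $f\in H^1$ with $\nabla|f|=\mathrm{Re}(\overline{\mathrm{sgn}(f)}\,\nabla f)$, and from the absolute convergence of the series guaranteed by $\tr(-\Delta)\gamma<\ii$. Once these measure--theoretic subtleties are handled (for instance by a standard truncation $\sqrt{\rho_\gamma+\eps}$ and passage to the limit $\eps\to 0$), the inequality follows and, being independent of the choice of $\gamma$, passes to the infimum in~\eqref{def:kinetic_GC}.
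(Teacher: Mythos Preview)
Your proof is correct and is the standard argument for the Hoffmann--Ostenhof inequality in the density--matrix setting: diagonalize $\gamma$, apply the diamagnetic inequality $|\nabla\phi|\geq|\nabla|\phi||$ to each eigenfunction, and use Cauchy--Schwarz on the identity $\nabla\rho_\gamma=2\sum u_{i,\sigma}\nabla u_{i,\sigma}$. The paper does not give its own proof of Theorem~\ref{thm:HO}; it only states the result with the reference~\cite{Hof-77}, so there is nothing to compare against.
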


Using the Gagliardo-Nirenberg inequality~\cite{LieLos-01}
\begin{equation}
\norm{u}_{L^2(\R^d)}^{\frac4d}\int_{\R^d}|\nabla u(\br)|^2\,\rd\br\geq c_{\rm GN}(d)\int_{\R^d}|u(\br)|^{2+\frac4d}\,\rd\br,
\label{eq:Gagliardo-Nirenberg}
\end{equation}
for $u=\sqrt\rho$, we obtain
$$T_{\rm GC}[\rho]\geq \frac{c_{\rm GN}(d)}{2N^{\frac2d}}\int_{\R^d}\rho(\br)^{1+\frac2d}\,\rd\br$$
and this is optimal for bosons. But for fermions this is not optimal at all. The Lieb-Thirring inequality states that one can replace the $N$-dependent prefactor by an $N$-independent one (or, rather, by a $q$-dependent constant where $q$ is the number of spin states).

\begin{theorem}[Lieb-Thirring{~\cite{LieThi-75,LieThi-76,LieSei-09}}]
There exists a positive constant $c_{\rm LT}(d)>0$ such that
\begin{equation}
T_{\rm GC}[\rho]\geq q^{-\frac2d}c_{\rm LT}(d)\int_{\R^d} \rho(\br)^{1+\frac2d}\,\rd\br
\label{eq:Lieb-Thirring}
\end{equation}
for all $\rho\geq0$ such that $\sqrt\rho\in H^1(\R^d)$.
\end{theorem}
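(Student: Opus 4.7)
My plan is to deduce the inequality from the Lieb-Thirring bound on sums of negative eigenvalues of one-body Schr\"odinger operators, by a Legendre-type optimization over an auxiliary attractive potential. First I would fix any admissible $\gamma$ with $0\le\gamma\le 1$, $\tr(-\Delta)\gamma<\infty$ and $\rho_\gamma=\rho$, and introduce a spin-independent test potential $V\in L^{1+d/2}(\R^d)$ with $V\le 0$. The one-body operator $H:=-\Delta/2+V$ acts on $L^2(\R^d\times\Z_q)$, so each of its spatial eigenvalues is $q$-fold degenerate. Splitting $H=H_+-H_-$ and using $0\le\gamma\le 1$ gives $\tr(H_-\gamma)\le\tr(H_-)$, hence $\tr(H\gamma)\ge -\tr(H_-)$. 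Together with $\tr(V\gamma)=\int_{\R^d}V\rho$, this yields the dual inequality
\begin{equation*}
\tr(-\Delta/2)\gamma\;\ge\;-\tr\bigl[(-\Delta/2+V)_-\bigr]-\int_{\R^d}V(\br)\,\rho(\br)\,d\br,
\end{equation*}
valid for every such $V$. This is the point at which the Pauli principle, encoded in $\gamma\le 1$, intervenes.

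The next step, and the main analytic obstacle, is the one-body Lieb-Thirring eigenvalue bound
\begin{equation*}
\tr\bigl[(-\Delta/2+V)_-\bigr]\;\le\;q\,L_{1,d}\int_{\R^d}V_-(\br)^{1+d/2}\,d\br,
\end{equation*}
the factor $q$ being the spin multiplicity. I would not re-prove this here but cite it from~\cite{LieThi-75,LieThi-76,LieSei-09}; standard proofs proceed via the Birman-Schwinger principle or a coherent-state / Berezin-Lieb argument, and any finite constant $L_{1,d}$ suffices for what follows (the sharp value in $d=3$ is still open, but that does not affect the present scheme).

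Finally I would combine the two displays and optimize over $V$ with the scale-balanced ansatz $V(\br)=-c\,\rho(\br)^{2/d}$, $c>0$. Both $\int V\rho$ and $\int V_-^{1+d/2}$ then reduce to multiples of $\int\rho^{1+2/d}$, giving
\begin{equation*}
\tr(-\Delta/2)\gamma\;\ge\;\bigl(c-qL_{1,d}\,c^{1+d/2}\bigr)\int_{\R^d}\rho(\br)^{1+2/d}\,d\br.
\end{equation*}
Maximizing in $c>0$ produces the unique critical point $c^{d/2}=[qL_{1,d}(1+d/2)]^{-1}$, whose $q^{-2/d}$ scaling matches the statement, and yields
\begin{equation*}
c_{\rm LT}(d)=\frac{d}{d+2}\,\bigl[L_{1,d}(1+d/2)\bigr]^{-2/d}.
\end{equation*}
Since the lower bound is independent of $\gamma$, the infimum defining $T_{\rm GC}[\rho]$ inherits the inequality, which is exactly the claim.
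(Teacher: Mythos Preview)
Your argument is correct and is precisely the duality derivation the paper alludes to: it does not give a self-contained proof of the theorem but explains that the original Lieb--Thirring result is the eigenvalue estimate~\eqref{eq:Lieb-Thirring_dual_v}, and that~\eqref{eq:Lieb-Thirring} follows because $T_{\rm GC}$ is the Legendre transform of $v\mapsto -\tr(-\Delta/2+v)_-$. Your explicit optimization over $V=-c\,\rho^{2/d}$ is exactly that Legendre inversion, and the resulting constant $c_{\rm LT}(d)=\frac{d}{d+2}\big[(1+d/2)L_{1,d}\big]^{-2/d}$ with its $q^{-2/d}$ scaling matches what the paper records; the only small check worth making explicit is that $V\in L^{1+d/2}(\R^d)$, which holds since $\sqrt\rho\in H^1(\R^d)$ implies $\rho\in L^{1+2/d}(\R^d)$ by Sobolev embedding and interpolation.
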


Note the spin dependence in $q^{-2/d}$ which is compatible with the fact that the bosonic case is recovered when $q=N$. For particles like electrons we have $q=2$ and the constant is $N$-independent.

For large fermionic systems the Lieb-Thirring inequality is an advantageous replacement for the Gagliardo-Nirenberg inequality, to which it reduces in the case $N=1$ (in particular we always have $c_{\rm LT}(d)\leq c_{\rm GN}(d)/2$).
Since its invention, the Lieb-Thirring inequality~\eqref{eq:Lieb-Thirring} has played a central role in the mathematical understanding of large fermionic systems. It was originally used to give a proof of stability of matter~\cite{LieThi-75,Lieb-76,Lieb-90,LieSei-09} that is much shorter than the original proof of Dyson and Lenard~\cite{DysLen-67}. Later the Lieb-Thirring inequality was generalized to systems at positive density~\cite{FraLewLieSei-11,FraLewLieSei-12} where $\rho$ is a local perturbation of a constant, and to the dynamic case where it extends Strichartz's inequality~\cite{FraLewLieSei-14,FraSab-17}.

The right side of~\eqref{eq:Lieb-Thirring} is related to the kinetic energy of the free Fermi gas. Indeed, we recall that the translation-invariant orthogonal projector
\begin{equation}
 P_{\rho_0}=\1\left(-\Delta\leq 2\frac{d+2}d c_{\rm TF}(d)q^{-\frac2d}\rho_0^{\frac2d}\right)
 \label{eq:free_Fermi_gas}
\end{equation}
has the constant density $\rho_{P_{\rho_0}}=\rho_0$ and the constant kinetic energy density $c_{\rm TF}(d)\,q^{-2/d}\rho_0^{1+2/d}$, where
\begin{equation}
c_{\rm TF}(d)=\frac{2\pi^2 d}{(d+2)} \left( \dfrac{d}{| \mathbb{S}^{d-1} |} \right)^{\frac2d}
 \label{eq:c_TF}
\end{equation}
is called the \emph{Thomas-Fermi} constant.

The best constant $c_{\rm LT}(d)$ in~\eqref{eq:Lieb-Thirring} is unknown but it is definitely less or equal to $c_{\rm TF}(d)$. This is seen by using the trial state $\gamma_R=\chi(\cdot/R)P_{\rho_0}\chi(\cdot/R)$ and taking the limit $R\to\ii$.
The famous \emph{Lieb-Thirring conjecture}~\cite{Schimmer-22} states that
\begin{equation}
c_{\rm LT}(d)=\min\left\{\frac{c_{\rm GN}(d)}{2},c_{\rm TF}(d)\right\}=
\begin{cases}
\dps \frac{c_{\rm GN}(d)}{2}&\text{for $d=1,2$,}\\[0.3cm]
c_{\rm TF}(d)&\text{for $d\geq3$.}
\end{cases}
\label{eq:LT_conjecture}
\end{equation}
In other words, the conjecture states that the best constant is obtained either for the infinite non-interacting uniform electron gas, or for one isolated electron. This conjecture was investigated numerically in~\cite{Levitt-14}.
The proof of the conjecture~\eqref{eq:LT_conjecture} in dimension $d=3$ would have a great impact since it would mean that the Thomas-Fermi-Dirac (TFD) energy is an exact lower bound to the many-particle problem~\cite{Lieb-81b}, as we will mention later in Section~\ref{sec:TFD_is_a_lower_bound}. The Thomas-Fermi energy is the simplest functional in Density Functional Theory and knowing that it is an exact lower bound would simplify drastically many mathematical results, in addition to increasing its physical significance.

The best known estimate on $c_{\rm LT}(d)$ was recently proved in~\cite{FraHunJexNam-21} and reads
$$\forall d\geq1,\qquad \frac{c_{\rm LT}(d)}{c_{\rm TF}(d)}\geq (1.456)^{-\frac2d}.$$
It improves upon the previously best known result where 1.456 was replaced by 1.814 and which was proved in $d=1$ in 1991 by Eden and Foias~\cite{EdeFoi-91} and in $d\geq2$ by Dolbeault, Laptev and Loss~\cite{DolLapLos-08} in 2008. We refer to~\cite{FraHunJexNam-21,Frank-21,Schimmer-22,FraLapWei-LT} for a recent overview of other important results on the Lieb-Thirring inequality.

By duality, the Lieb-Thirring inequality implies a bound on the sum of the negative eigenvalues of a one-particle Schr\"odinger operator in an external potential $v$, denoted by functional calculus as $-\tr(-\Delta/2+v)_-$. Namely, we have by~\eqref{eq:relation_E_F_LL}
\begin{align}
-\tr(-\Delta/2+v)_-&=\inf_{N\geq0}E_N^0[v]\nn\\
&=\inf_{\substack{\sqrt\rho\in H^1(\R^d)\\\int_{\R^d}v_+\rho<\ii}}\left\{T_{\rm GC}[\rho]+\int_{\R^d}v(\br)\,\rho(\br)\,\rd\br\right\}\nn\\
&\geq \inf_{\substack{\sqrt\rho\in H^1(\R^d)\\ \int_{\R^d}v_+\rho<\ii}}\left\{q^{-\frac2d}c_{\rm LT}(d)\int_{\R^d}\rho(\br)^{1+\frac2d}\,\rd\br+\int_{\R^d}v(\br)\,\rho(\br)\,\rd\br\right\}\nn\\
&=-\frac{2 d^{\frac{d}2}q}{(d+2)^{1+\frac{d}2}c_{\rm LT}(d)^{\frac{d}2}}\int_{\R^d}v_-(\br)^{1+\frac{d}2}\,\rd\br\label{eq:Lieb-Thirring_dual_v}
\end{align}
where $E_N^0$ indicates that we take $w\equiv0$. Since $T_{\rm GC}[\rho]=F^0_{\rm GC}[\rho]$ is the Legendre transform of $E_N^0[v]$, the inequality~\eqref{eq:Lieb-Thirring_dual_v} is actually equivalent to the Lieb-Thirring inequality~\eqref{eq:Lieb-Thirring}. The original proof of Lieb and Thirring~\cite{LieThi-75,LieThi-76} was actually showing~\eqref{eq:Lieb-Thirring_dual_v} and it is only much later that Rumin~\cite{Rumin-11} found a direct proof of~\eqref{eq:Lieb-Thirring}.

The semi-classical constant~\eqref{eq:c_TF} naturally occurs for slowly varying densities in the LDA regime, as we will see. Nam  proved in~\cite{Nam-18} that one can replace the Lieb-Thirring (unknown) constant $c_{\rm LT}(d)$ by
$c_{\rm TF}(d)$ at the expense of a gradient correction.

\begin{theorem}[Nam's Lieb-Thirring inequality with gradient correction~\cite{Nam-18}]
Let $q,d\geq1$. There exists a universal constant $\kappa(d)$ (independent of the number of spin states $q$) such that
\begin{equation}
T_{\rm GC}[\rho]\geq q^{-\frac2d}c_{\rm TF}(d)(1-\eps)\int_{\R^d} \rho(\br)^{1+\frac2d}\,\rd\br-\frac{\kappa(d)}{\eps^{3+\frac4d}}\int_{\R^d}|\nabla\sqrt{\rho}(\br)|^2\,\rd\br
\label{eq:Lieb-Thirring_Nam}
\end{equation}
for all $0<\eps<1$ and all $d\geq1$.
\end{theorem}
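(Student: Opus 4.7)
The plan is to combine a cell-by-cell semiclassical lower bound---which recovers the sharp Thomas--Fermi constant $c_{\rm TF}(d)$ from the free Fermi gas projector~\eqref{eq:free_Fermi_gas}---with the Hoffmann--Ostenhof inequality~\eqref{eq:Hoffmann-Ostenhof_GC}, which is used to absorb the localization errors into the claimed $\int|\nabla\sqrt\rho|^2$ correction. Working directly with a minimizing one-body density matrix $\gamma$ satisfying $0\leq\gamma\leq 1$ and $\rho_\gamma=\rho$, I would aim to bound $\tfrac12\tr(-\Delta)\gamma$ from below. The geometric picture is that on cells of side $\ell=\ell(\eps)$ on which $\rho$ is roughly constant with value $\bar\rho_u$, the minimum kinetic energy is, in the bulk limit, close to $c_{\rm TF}(d)q^{-2/d}|Q_u|\bar\rho_u^{1+2/d}$ (this is the thermodynamic value for the free Fermi gas), and deviations of $\rho$ from local constancy should cost gradient terms.

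Concretely, I would introduce a smooth partition of unity $\sum_u\chi_u^2=1$ adapted to a cubic grid of side $\ell$, and apply the IMS identity
\begin{equation*}
-\Delta \;=\; \sum_u \chi_u(-\Delta)\chi_u \;-\; \sum_u |\nabla\chi_u|^2,
\end{equation*}
so that $\tfrac12\tr(-\Delta)\gamma$ equals a sum of localized kinetic energies minus a penalty $\lesssim \ell^{-2}\int\rho$. On each cell $Q_u$ I would compare $\chi_u\gamma\chi_u$ to the free Fermi gas projector $P_{\bar\rho_u}$ with local average density $\bar\rho_u:=|Q_u|^{-1}\int_{Q_u}\rho$, and invoke a Weyl-type bound for the Laplacian in a bounded domain to get
\begin{equation*}
\tfrac12\tr\chi_u(-\Delta)\chi_u\gamma \;\geq\; (1-\eps)\,c_{\rm TF}(d)\,q^{-\frac2d}\,|Q_u|\,\bar\rho_u^{1+\frac2d} \;-\; (\text{cell error}).
\end{equation*}
On the ``dilute'' cells where $\bar\rho_u\ell^d$ is too small for the semiclassical approximation to be accurate, I would instead fall back on the standard Lieb--Thirring inequality~\eqref{eq:Lieb-Thirring}, whose contribution will be of the right order after optimization since $c_{\rm LT}(d)\leq c_{\rm TF}(d)$.

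Summing and replacing $|Q_u|\bar\rho_u^{1+2/d}$ by $\int_{Q_u}\rho^{1+2/d}$ via a Poincar\'e-type estimate on each cell produces an extra error $\lesssim \ell^2\int|\nabla\rho^{\alpha}|^2$ with $\alpha=(d+2)/(2d)$. Writing $\nabla\rho^\alpha=2\alpha\rho^{\alpha-1/2}\nabla\sqrt\rho$ and applying H\"older with a parameter $\eta$ splits this error into $\eta\int\rho^{1+2/d}$ and $C\eta^{-\beta}\int|\nabla\sqrt\rho|^2$ for a suitable $\beta$. Similarly, the IMS penalty $\ell^{-2}\int\rho$ is absorbed by combining a small fraction of the bulk $c_{\rm TF}(d)q^{-2/d}\int\rho^{1+2/d}$ with the Hoffmann--Ostenhof bound $T_{\rm GC}[\rho]\geq\tfrac12\int|\nabla\sqrt\rho|^2$, after splitting $\rho$ into dense and dilute parts via a H\"older-type interpolation. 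Finally, balancing $\ell$, $\eta$, and the two regimes against $\eps$ yields the explicit exponent $\eps^{-3-4/d}$.

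The main obstacle I anticipate is the bookkeeping between the dilute cells---where the semiclassical picture breaks down and only the non-sharp constant $c_{\rm LT}(d)$ is available---and the dense cells, where $c_{\rm TF}(d)$ is recovered. The gradient correction must come out in the clean form $|\nabla\sqrt\rho|^2$ rather than the naturally appearing $|\nabla\rho^{(d+2)/(2d)}|^2$; this conversion rests on the pointwise identity $|\nabla\rho^\alpha|^2=(2\alpha)^2\rho^{2\alpha-1}|\nabla\sqrt\rho|^2$ combined with a H\"older split against the bulk term $\int\rho^{1+2/d}$. Carrying out all these optimizations simultaneously, while keeping the $(1-\eps)$ prefactor on the leading semiclassical term, is what produces the specific (and somewhat large) power $\eps^{-3-4/d}$ in the stated inequality.
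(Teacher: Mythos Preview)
The paper does not give its own proof of this theorem: it is quoted from Nam's paper~\cite{Nam-18}, and the text immediately after the statement simply refers to it as ``the first step towards a proof of the validity of the LDA for the kinetic energy''. So there is no proof in this chapter to compare against, only Nam's original argument.

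That said, your outline contains a genuine gap, and it is precisely the step you flag as ``the main obstacle''. With a \emph{fixed} grid of side $\ell=\ell(\eps)$, the IMS penalty is of order $\ell^{-2}\int_{\R^d}\rho$, and this quantity \emph{cannot} be absorbed into $\eps\int\rho^{1+2/d}+C_\eps\int|\nabla\sqrt\rho|^2$. A scaling counterexample makes this clear: for a fixed smooth bump $\chi$, set $\rho_R(\br)=\rho_0\,\chi(\br/R)$. Then
\[
\int\rho_R\sim \rho_0R^d,\qquad \int\rho_R^{1+\frac2d}\sim \rho_0^{1+\frac2d}R^d,\qquad \int|\nabla\sqrt{\rho_R}|^2\sim \rho_0R^{d-2}.
\]
Dividing the putative bound $\ell^{-2}\int\rho_R\leq \eps\int\rho_R^{1+2/d}+C_\eps\int|\nabla\sqrt{\rho_R}|^2$ by $\rho_0R^d$ gives $\ell^{-2}\lesssim \eps\rho_0^{2/d}+C_\eps R^{-2}$, which fails as $\rho_0\to0$ and $R\to\infty$. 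No H\"older splitting into ``dense'' and ``dilute'' parts, and no appeal to Hoffmann--Ostenhof, can repair this: the right-hand side simply does not control $\int\rho$.

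What makes Nam's argument work is exactly what your proposal omits: the localization scale must be \emph{density-adapted}, roughly $\ell(\br)\propto \delta^{-1}\rho(\br)^{-1/d}$ for a small parameter $\delta$. Then the IMS error becomes $\int\ell(\br)^{-2}\rho(\br)\,d\br\sim \delta^2\int\rho^{1+2/d}$, which has the correct form and a small prefactor. The Li--Yau/Berezin bound~\eqref{eq:LiYau} on each (now large, many-particle) cell then produces the sharp constant $c_{\rm TF}(d)$, and the gradient correction $\int|\nabla\sqrt\rho|^2$ arises from controlling the \emph{variation} of $\rho$ (and hence of the scale $\ell$) across a cell, together with the power-counting that yields the exponent $3+4/d$. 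Your dilute/dense dichotomy is a shadow of this, but without letting the cell size itself depend on the local density the argument cannot close.
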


This was the first step towards a proof of the validity of the LDA for the kinetic energy, to which we will come back soon.

Li and Yau  proved in~\cite{LiYau-83}  a lower bound involving the optimal Thomas-Fermi constant in a bounded domain $\Omega\subset\R^d$:
\begin{equation}
T_{\rm GC}[\rho]\geq c_{\rm TF}(d)|\Omega|^{-\frac23}\left(\int_\Omega\rho(\br)\,\rd\br\right)^{\frac53}\qquad\text{for all $\sqrt\rho\in H^1_0(\Omega)$}
\label{eq:LiYau}
\end{equation}
(see also~\cite[Thm.~12.3]{LieLos-01} and~\cite[Lem.~9]{HaiLewSol_2-09}).
The bound is particularly useful for densities $\rho$ which are (almost) constant over a domain $\Omega$. For instance for $\rho(\br)=\rho_0\1_\Omega\ast\chi(\br)$ with ${\rm supp}(\chi)\subset B_1$ we find the exact lower bound
$$T_{\rm GC}[\rho_0\1_\Omega\ast\chi(\br)]\geq c_{\rm TF}(d)(\rho_0)^{\frac53}\frac{|\Omega|^{\frac53}}{|\Omega+B|^{\frac23}}=c_{\rm TF}(d)(\rho_0)^{\frac53}\left(|\Omega|-C|\Omega|^{\frac23}\right)_+$$
whereas~\eqref{eq:Lieb-Thirring_Nam} yields a worse error term.

\subsection{Upper bounds}

For upper bounds one should ideally consider the larger functional $T_{\rm S}[\rho]$. In dimension $d=1$, March and Young~\cite[Eq. (9)]{MarYou-58}  gave the proof of an estimate similar to~\eqref{eq:Lieb-Thirring_Nam} without the parameter $\eps$ in front of the gradient correction
\begin{equation}
 T_{\rm S}[\rho]\leq q^{-2}\frac{\pi^2}{6}\int_\R \rho(x)^3\,\rd x+\frac12\int_\R\left|\big(\sqrt{\rho}\big)'(x)\right|^2\,\rd x
 \label{eq:March-Young}
\end{equation}
where $\pi^2/6=c_{\rm TF}(1)$. In the same paper they also state a result in 3D (for a constant $c>c_{\rm TF}(3)$) but the proof has a mistake~\cite[Sec. 5.B]{Lieb-83b}.
The bound~\eqref{eq:March-Young} is proved by using as trial state the orbitals
\begin{equation}
 \phi_n(x)=\sqrt{\frac{\rho(x)}{N}}\exp\left(\frac{2in\pi}{N} \int_{-\ii}^x\rho(t)\,\rd t\right)
 \label{eq:March-Young-orbitals}
\end{equation}
(we take $q=1$ for simplicity), where $n\in\Z$ and the phases are seen to make the $\phi_n$ orthonormal. Computing the kinetic energy of this trial state, one obtains
$$T_{\rm S}[\rho]\leq 2\pi^2\frac{\sum n^2}{N^3}\int_\R \rho(x)^3\,\rd x+\frac12\int_\R\left|\big(\sqrt{\rho}\big)'(x)\right|^2\,\rd x.$$
Taking all the integers $n$ less than or equal to $N/2$ in absolute value and using the precise behavior of the series gives the result for $q=1$.

The method can be generalized to higher dimensions using a similar method, but the estimate has a bad behavior in $N$. The orbitals
\begin{equation}
\phi_n(\bx)=\sqrt{\frac{\rho(\br)}{N}}e^{i\theta_n(\br)}\delta_{0}(\sigma),\qquad \theta_n(\br)=\frac{2n\pi}{N} \int_{-\ii}^{r_1}\int_{\R^{d-1}}\rho(t,\br')\,\rd t \,\rd\br'
\label{eq:phases}
\end{equation}
were considered in~\cite{Harriman-81,Lieb-83b} and these are the phases which we  already mentioned in~\eqref{eq:phases_pre}. Using this trial state one obtains~\cite{Lieb-83b} for $q=1$
\begin{equation}
 T_{\rm S}[\rho]\leq \left(\frac{2\pi^2}{3}N^2+CN\right)\int_{\R^d}|\nabla\sqrt{\rho}(\br)|^2\,\rd\br.
 \label{eq:bound_T_Lieb}
\end{equation}
An upper bound on $T_{\rm S}[\rho]$ involving only $\int_{\R^d}|\nabla\sqrt\rho|^2$ has to have a constant diverging at least as fast as $N^{2/d}$, due to the Lieb-Thirring inequality. In~\cite{ZumMas-83,ZumMas-84,BokGre-96} the optimal upper bound of this form was shown:
\begin{equation}
 T_{\rm S}[\rho]\leq CN^{\frac2d}\int_{\R^d}|\nabla\sqrt{\rho}(\br)|^2\,\rd\br.
 \label{eq:bound_T_Grebert}
\end{equation}
The idea of the proof is to apply a deformation of the space in order to map $\rho$ onto the constant density in a box, which is then represented by a usual Slater determinant made of plane waves. One would expect an upper bound on $T_{\rm S}[\rho]$ involving both $\int_{\R^d}\rho^{1+2/d}$ and $\int_{\R^d}|\nabla\sqrt\rho|^2$, with coefficients independent of $N$ as in~\eqref{eq:March-Young} but this seems unknown at present. The periodic case was studied in~\cite{BokGreMau-03}.

Recently, an upper bound similar to~\eqref{eq:Lieb-Thirring_Nam} was proved in~\cite{LewLieSei-19} for the grand-canonical functional $T_{\rm GC}[\rho]$.

\begin{theorem}[Upper bound on {$T_{\rm GC}[\rho]$~\cite{LewLieSei-19}}]\label{thm:upper_bound_T}
Let $d,q\geq1$. There exists a constant $\kappa'(d)$ such that
\begin{equation}
 T_{\rm GC}[\rho]\leq q^{-\frac2d}c_{\rm TF}(d)\left(1+\eps\right)\int_{\R^d}\rho(\br)^{1+\frac2d}\,\rd\br
 +\kappa'(d)\frac{1+\eps}{\eps}\int_{\R^d}|\nabla\sqrt{\rho}(\br)|^2\,\rd\br
 \label{eq:upper_bound_T_GC}
\end{equation}
for all $\eps>0$ and all $\rho\geq0$ with $\sqrt\rho\in H^1(\R^d)$.
\end{theorem}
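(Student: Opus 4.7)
My plan is to prove this upper bound by producing an explicit trial one-body density matrix via coherent states, modeled on a locally semiclassical Fermi gas of local density $\rho(\br)$. Fix a smooth radial $g\in C_c^\infty(\R^d)$ with $\|g\|_{L^2}=1$, and for a length-scale $\ell>0$ to be optimized set $g_\ell(\by):=\ell^{-d/2}g(\by/\ell)$. Introduce the coherent states $f_{\br,\bp}(\by):=g_\ell(\by-\br)\,e^{i\bp\cdot\by}$, which satisfy the standard resolution of identity on $L^2(\R^d)$. For a non-negative phase-space density $m(\br,\bp)$, the operator
$$\Gamma[m]:=\int_{\R^d\times\R^d} m(\br,\bp)\,|f_{\br,\bp}\rangle\langle f_{\br,\bp}|\,\frac{d\br\,d\bp}{(2\pi)^d}$$
has one-body density $\rho_{\Gamma[m]}=|g_\ell|^2\ast\tilde\rho$, where $\tilde\rho(\br):=\int m(\br,\bp)\,d\bp/(2\pi)^d$, and kinetic energy
$$\tr\!\left(-\tfrac12\Delta\right)\Gamma[m]=\tfrac12\int m(\br,\bp)\,|\bp|^2\,\frac{d\br\,d\bp}{(2\pi)^d}+\tfrac{\ell^{-2}}{2}\|\nabla g\|_{L^2}^2\int_{\R^d}\tilde\rho.$$

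The natural ansatz is $m(\br,\bp):=q\,\1(|\bp|\leq k_F(\br))$ with $k_F(\br)^d:=d(2\pi)^d\tilde\rho(\br)/(q|\mathbb{S}^{d-1}|)$; performing the $\bp$-integral then produces exactly $c_{\rm TF}(d)\,q^{-2/d}\int\tilde\rho^{1+2/d}$, which is the desired Thomas-Fermi main term. The spin factor is inserted by taking $\gamma=\oplus_{\sigma=1}^{q}\gamma_\sigma$ block-diagonally on $L^2(\R^d\times\Z_q)$ with each block equal to $q^{-1}\Gamma[m]$: this keeps every block $\leq 1$ as long as $\Gamma[m]\leq q$, and adds the densities to give the correct total. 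To enforce $\rho_\gamma=\rho$ rather than $|g_\ell|^2\ast\rho$ (up to a small deconvolution), I would either choose $\tilde\rho$ as the solution of $|g_\ell|^2\ast\tilde\rho=\rho$ (well-defined by Fourier inversion since $|g_\ell|^2$ is an approximate identity at scale $\ell$) or add a finite-rank correction $\gamma_{\rm corr}$ with compactly supported density whose kinetic cost is $O(\int|\nabla\sqrt\rho|^2)$.

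Three estimates then remain. First, the operator bound $\Gamma[m]\leq q$ is the delicate step: the continuous coherent-state integral does \emph{not} automatically satisfy $\Gamma[m]\leq\|m\|_\infty$, and must be secured by an appropriate choice of profile---either $g=\1_{Q}/|Q|^{1/2}$ on a cube $Q$ so that the $f_{\br,\bp}$ become orthogonal after position-space lattice discretization, or by replacing the continuous phase-space integral by a Riemann sum on a lattice of coherent states (the device used in many semiclassical proofs, including in the Lieb-Thirring literature). Second, the conversion $\int\tilde\rho^{1+2/d}\leq(1+\eps)\int\rho^{1+2/d}+C_\eps\,\ell^{2}\int|\nabla\sqrt\rho|^{2}$ is proved by Jensen's inequality applied to the convolution identity $|g_\ell|^2\ast\tilde\rho=\rho$, together with a Taylor-type pointwise bound $|\tilde\rho(\br+\by)-\tilde\rho(\br)|\lesssim\ell\,|\nabla\sqrt{\tilde\rho}|\sqrt{\tilde\rho}$ for $|\by|\lesssim\ell$ and Hoffmann-Ostenhof. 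Third, the two $\ell$-dependent error terms (the convexity defect $\propto\ell^{2}\int|\nabla\sqrt\rho|^2$ and the direct semiclassical remainder $\propto\ell^{-2}\int\rho$ from the $\|\nabla g_\ell\|_{L^2}^2$ contribution) are balanced by optimizing $\ell\sim\eps^{1/2}$, which yields precisely the $(1+\eps)$ prefactor on the main term and the $\kappa'(d)(1+\eps)/\eps$ coefficient on $\int|\nabla\sqrt\rho|^2$.

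The main obstacle is the first of these: producing a legitimate density matrix $\gamma$ satisfying $0\leq\gamma\leq 1$ with exactly the prescribed density $\rho$ and kinetic energy matching the Thomas-Fermi value to leading order. In the lower-bound direction (Nam's inequality) the coherent-state upper-symbol gives everything for free, but for the upper bound one must use the lower-symbol/anti-Wick side, where the operator bound is not automatic. I would handle this either by the lattice coherent-state device mentioned above, or by a partition of $\R^d$ into cubes of side $\ell$ with localized Slater determinants built from exact eigenfunctions of $-\Delta$ in each cube (so that $\gamma$ is automatically a sum of mutually orthogonal projections, hence $\leq 1$), treating the resulting boundary mismatches and density corrections as $O(\ell^{-2}\int|\nabla\sqrt\rho|^2)$ perturbations. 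With this piece in place, the remaining estimates are straightforward and combine to yield \eqref{eq:upper_bound_T_GC}.
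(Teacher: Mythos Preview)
Your coherent-state strategy is the natural first thing to try, but the paper explicitly notes why it does not give~\eqref{eq:upper_bound_T_GC} and replaces it by a different construction. There are two concrete gaps in your outline.

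First, you have misidentified the obstruction. For anti-Wick quantization the bound $\Gamma[m]\leq\|m\|_\infty$ \emph{is} automatic: by the resolution of identity,
\[
\langle\psi,\Gamma[m]\psi\rangle=\int m(\br,\bp)\,|\langle f_{\br,\bp},\psi\rangle|^2\,\frac{d\br\,d\bp}{(2\pi)^d}\leq\|m\|_\infty\,\|\psi\|^2,
\]
so with $m=q\,\1(|\bp|\leq k_F(\br))$ one gets $\Gamma[m]\leq q$ and hence each spin block $\leq1$ for free. The genuine difficulty (stated in the paragraph after Theorem~\ref{thm:upper_bound_T}) is the \emph{density constraint}: anti-Wick quantization produces $\rho_{\Gamma[m]}=|g_\ell|^2\ast\tilde\rho$, not $\rho$. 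Your proposed deconvolution $\tilde\rho=(|g_\ell|^2\ast)^{-1}\rho$ need not be non-negative (it typically is not, wherever $\rho$ varies on scale $\lesssim\ell$), so $m\geq0$ fails and the trial state is lost. Moreover, Jensen applied to $|g_\ell|^2\ast\tilde\rho=\rho$ gives $\int\rho^{1+2/d}\leq\int\tilde\rho^{1+2/d}$, the wrong direction for your second estimate.

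Second, and more seriously, your error balancing cannot produce a pure gradient remainder. With a \emph{global} length scale $\ell$, the semiclassical remainder is $\tfrac{1}{2}\ell^{-2}\|\nabla g\|_{L^2}^2\int\rho$, which is not controlled by $\int|\nabla\sqrt\rho|^2$ for any choice of $\ell$ (take $\rho$ nearly constant on a large ball). Optimizing $\ell$ globally therefore yields an error of the form $C\int\rho$ or $C\int\rho^{1+2/d}$ with a fixed constant, not $(1+\eps)$ times the main term plus $O(\eps^{-1})\int|\nabla\sqrt\rho|^2$. The paper circumvents this by avoiding any global scale: the trial state
\[
\gamma=\int_0^\infty\sqrt{\eta\!\left(\tfrac{t}{\rho(\br)}\right)}\;\1\!\left(-\Delta\leq 2\tfrac{d+2}{d}c_{\rm TF}(d)\,t^{2/d}\right)\;\sqrt{\eta\!\left(\tfrac{t}{\rho(\br)}\right)}\,\frac{dt}{t}
\]
has $\rho_\gamma=\rho$ \emph{exactly} (by the smooth layer-cake identity $\rho=\int_0^\infty\eta(t/\rho)\,dt$ and the measure $dt/t$), satisfies $0\leq\gamma\leq1$ by the condition $\int_0^\infty\eta(t)\,dt/t\leq1$, and its kinetic-energy errors come from commutators of $-\Delta$ with the multiplication operators $\sqrt{\eta(t/\rho)}$, which are automatically gradient terms in $\rho$. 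The width of $\eta$ around $1$ then plays the role of $\eps$.
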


The main difficulty in the proof of~\eqref{eq:upper_bound_T_GC} is the constraint that the one-particle density matrix must have the exact density $\rho$. One can provide rather good upper bounds if we allow to vary the density a bit. For instance, by using coherent states~\cite{Lieb-81b} the density $\rho$ is replaced by $\rho\ast|f|^2$ where $f$ is the profile used to build the coherent states (typically a Gaussian).

The proof of~\eqref{eq:upper_bound_T_GC} instead relies on the following trial  one-particle density matrix
\begin{equation}
\gamma=\int_0^\ii  \sqrt{\eta\left(\frac{t}{\rho(\br)}\right)}\;\1\left(-\Delta\leq 2\frac{d+2}d c_{\rm TF}(d)\,q^{-\frac2d}t^{\frac2d}\right)\;\sqrt{\eta\left(\frac{t}{\rho(\br)}\right)}\;\frac{\rd t}{t}.
\label{eq:trial_state}
\end{equation}
Here the two functions $\sqrt{\eta(t/\rho(\br))}$ are interpreted as multiplication operators, whereas the operator in the middle is the Fourier multiplier $P_t$ introduced before in~\eqref{eq:free_Fermi_gas}. The non-negative function $\eta$ is chosen such that
\begin{equation}
\int_0^\ii\eta(t)\,\rd t=1,\qquad \int_0^\ii\eta(t)\,\frac{\rd t}{t}\leq1.
\label{eq:cond_eta}
\end{equation}
The main idea is to represent the density $\rho$ by using the smooth ``layer cake principle''~\cite[Thm.~1.13]{LieLos-01}
$$\rho(\br)=\int_0^\ii \eta\left(\frac{t}{\rho(\br)}\right)\,\rd t$$
where we think of $\eta$ as very concentrated around $1$, and to then take the free Fermi gas $P_t$ as in~\eqref{eq:free_Fermi_gas} on the support of $\eta(t/\rho)$ where $\rho$ is very close to $t$. The measure $\rd t/t$ in~\eqref{eq:trial_state} ensures that $\rho_\gamma=\rho$ exactly. On the other hand the condition $\int_0^\ii\eta(t)\,{\rd t}/{t}\leq1$ ensures that $0\leq\gamma\leq1$ and means that $\eta$ must put slightly more weight on the right of $1$ than on the left. Computing the kinetic energy of the trial state~\eqref{eq:trial_state} and optimizing over $\eta$, one obtains~\eqref{eq:upper_bound_T_GC}.

We have explained the construction of the trial state~\eqref{eq:trial_state} to emphasize how much easier it is to work in the grand-canonical setting. It is an important open problem to obtain a bound similar to~\eqref{eq:upper_bound_T_GC} on $T_{\rm S}[\rho]$ or $T[\rho]$. For $T_{\rm S}[\rho]$ this amounts to understanding how to build $N$ orthogonal orbitals with the prescribed density, and to obtain the lowest possible energy. This problem is somewhat related to the smooth Hobby-Rice problem. There one considers $N$ arbitrary $L^2$--normalized functions $\phi_1,...,\phi_N\in H^1(\R^d,\C)$ and looks for the minimal kinetic energy cost to orthonormalize them using only phases: $\phi_j'=\phi_j e^{i\theta_j}$. It was proved in~\cite{LazLie-13,Rutherfoord-13,FriSup-22} that such phases $\theta_j$ always exist, but known bounds involve $\norm{\nabla\theta_j}_{L^1}$ which are not enough to deduce anything on the $H^1$ norm of the orbitals $\phi_j'$. In view of~\eqref{eq:bound_T_Lieb}, one would suspect that
$$\min_{\substack{|\phi_j'|=|\phi_j|\\\pscal{\phi_j',\phi_k'}=\delta_{jk}}}\sum_{j=1}^N\int_{\R^d\times\Z_q}|\nabla\phi_j'(\bx)|^2\,\rd\bx\leq C(N,d)\sum_{j=1}^N\int_{\R^d\times\Z_q}|\nabla|\phi_j|(\bx)|^2\,\rd\bx$$
but this seems unknown at present. In~\eqref{eq:March-Young-orbitals} the reference orbitals are all equal to $\sqrt{\rho/N}$ but this is probably not the optimal choice for $T_{\rm S}[\rho]$ in dimension $d\geq2$.

\subsection{Local Density Approximation for the kinetic energy}

From the lower bound~\eqref{eq:Lieb-Thirring_Nam} and the upper bound~\eqref{eq:upper_bound_T_GC} we obtain the following result which is similar to Theorem~\ref{thm:LDA} but involves only quantities that all scale the same, namely like inverse-length squared.

\begin{theorem}[Local Density Approximation of the kinetic energy~{\cite{Nam-18,LewLieSei-19}}]\label{thm:LDA_kinetic}
Let $d,q\geq1$. There exists a universal constant $C(d)$ such that
\begin{multline}
\left|T_{\rm GC}[\rho]-q^{-\frac2d}c_{\rm TF}(d)\int_{\R^d}\rho(\br)^{1+\frac2d}\rd\br\right|\\
\leq \eps q^{-\frac2d}\int_{\R^d}\rho(\br)^{1+\frac2d}\rd\br+C(d)\left(1+\eps^{-3-\frac4d}\right)\int_{\R^d}|\nabla\sqrt{\rho}(\br)|^2\,\rd\br
\label{eq:LDA_kinetic_bound}
\end{multline}
for all $\rho\geq0$ with $\sqrt\rho\in H^1(\R^d)$ and all $\eps>0$.
\end{theorem}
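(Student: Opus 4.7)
The statement is essentially a bookkeeping corollary of the two main inequalities already presented in this section: Nam's gradient-corrected Lieb-Thirring lower bound~\eqref{eq:Lieb-Thirring_Nam} and the grand-canonical upper bound~\eqref{eq:upper_bound_T_GC} from Theorem~\ref{thm:upper_bound_T}. The plan is simply to subtract $q^{-2/d}c_{\rm TF}(d)\int_{\R^d}\rho^{1+2/d}$ from each of these two-sided estimates and combine them after a convenient rescaling of the auxiliary parameter.

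Concretely, I would set $\eps' := \min\{\eps/c_{\rm TF}(d),\,1/2\}$, which lies in $(0,1)$ and is therefore admissible in Nam's inequality~\eqref{eq:Lieb-Thirring_Nam}. Applying~\eqref{eq:Lieb-Thirring_Nam} with this $\eps'$ yields
$$T_{\rm GC}[\rho]-q^{-\frac2d}c_{\rm TF}(d)\int_{\R^d}\rho^{1+\frac2d}\,d\br\;\geq\;-\eps'\, q^{-\frac2d}c_{\rm TF}(d)\int_{\R^d}\rho^{1+\frac2d}\,d\br-\frac{\kappa(d)}{(\eps')^{3+\frac4d}}\int_{\R^d}|\nabla\sqrt\rho|^2\,d\br,$$
and applying~\eqref{eq:upper_bound_T_GC} with the same $\eps'$ yields the matching upper bound
$$T_{\rm GC}[\rho]-q^{-\frac2d}c_{\rm TF}(d)\int_{\R^d}\rho^{1+\frac2d}\,d\br\;\leq\;\eps'\, q^{-\frac2d}c_{\rm TF}(d)\int_{\R^d}\rho^{1+\frac2d}\,d\br+\kappa'(d)\frac{1+\eps'}{\eps'}\int_{\R^d}|\nabla\sqrt\rho|^2\,d\br.$$
Since $\eps' c_{\rm TF}(d)\leq \eps$, both of the first terms on the right are dominated by $\eps\, q^{-2/d}\int_{\R^d}\rho^{1+2/d}$, which is exactly the leading error in~\eqref{eq:LDA_kinetic_bound}. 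For the gradient terms, the inequality $(\eps')^{-1}\leq \max\{c_{\rm TF}(d)\eps^{-1},2\}$ gives $(\eps')^{-3-4/d}\leq C_1(d)(1+\eps^{-3-4/d})$ and $(1+\eps')/\eps'\leq C_2(d)(1+\eps^{-1})\leq C_3(d)(1+\eps^{-3-4/d})$, so each side fits into the form stated in~\eqref{eq:LDA_kinetic_bound} with a single $d$-dependent constant $C(d)$.

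There is no substantive obstacle here: the only delicate point is keeping $\eps'<1$ so that Nam's inequality applies, which is the sole reason for the truncation by $1/2$ in the definition of $\eps'$. All the real work has already been done upstream, in Nam's proof of~\eqref{eq:Lieb-Thirring_Nam} (the semiclassical sharpening of Lieb-Thirring) and in the coherent-state-type construction~\eqref{eq:trial_state} used to establish Theorem~\ref{thm:upper_bound_T}.
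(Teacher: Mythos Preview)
Your proposal is correct and matches the paper's approach exactly: the paper itself states just before the theorem that the result follows ``from the lower bound~\eqref{eq:Lieb-Thirring_Nam} and the upper bound~\eqref{eq:upper_bound_T_GC}'' and gives no further details. Your write-up merely fills in the elementary rescaling bookkeeping (choosing $\eps'$ to absorb the factor $c_{\rm TF}(d)$ and to stay below $1$ for Nam's inequality) that the paper leaves implicit.
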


In the regime where
$$ \int_{\R^d}|\nabla\sqrt{\rho}(\br)|^2\,\rd\br \ll \int_{\R^d}\rho(\br)^{1+\frac2d}\rd\br$$
the optimization over $\eps$ gives a right side which is negligible compared to the left side. In this regime we can approximate the kinetic energy functional in the manner
\begin{equation}
T_{\rm GC}[\rho]\approx q^{-\frac2d}c_{\rm TF}(d)\int_{\R^d}\rho(\br)^{1+\frac2d}\rd\br.
\label{eq:LDA_kinetic}
\end{equation}
The right side is called the Thomas-Fermi kinetic energy and it is the simplest approximation to $T_{\rm GC}[\rho]$. If we fix a density $\rho$ with $\int_{\R^d}\rho=1$ and take $\rho_N(\br)=\rho(\br N^{-1/d})$, then we find from~\eqref{eq:LDA_kinetic_bound} that
$$T_{\rm GC}[\rho(\cdot N^{-1/d})]=Nq^{-\frac2d}c_{\rm TF}(d)\int_{\R^d}\rho(\br)^{1+\frac2d}\rd\br+O\left(N^{\frac{2d+1}{2d+2}}\right).$$
From semi-classical analysis it is expected that for a sufficiently regular $\rho$ the next term should be equal to
\begin{equation}
\frac{d-2}{6d}N^{1-\frac2d}\int_{\R^d}|\nabla\sqrt{\rho}(\br)|^2\,\rd\br.
\label{eq:Weizacker}
\end{equation}
which is called the \emph{von Weizs\"acker correction}, see~\cite[Sec.~6.7]{ParYan-94} and~\cite[p.~89--90]{LunMar-83}. This is in reference to the historical work~\cite{Weizsacker-35} for atoms in dimension $d=3$ where however von Weizs\"acker chose the coefficient $1/2$ instead of $1/18$.\footnote{In order to recover Scott's correction in atoms, the coefficient must actually be taken equal to $0.083$~\cite{Lieb-81b}.} The value of the prefactor in~\eqref{eq:Weizacker} was predicted in~\cite{Kirzhnits-57,HolKozMar-91,Salasnich-07,KoiSto-07,TraLenNgEng-17}. That the coefficient is negative in dimension $d=1$ is related to the non-optimality of the Thomas-Fermi constant in the Lieb-Thirring inequality~\eqref{eq:Lieb-Thirring}
and is well known in one-dimensional semi-classical analysis~\cite{Burke-22}.

Even without having a clean upper bound like~\eqref{eq:upper_bound_T_GC}, it is reasonable to believe that
$$\lim_{N\to\ii}\frac{T[\rho(\cdot/N^{1/d})]}{N}=\lim_{N\to\ii}\frac{T_{\rm S}[\rho(\cdot/N^{1/d})]}{N}=q^{-\frac2d}c_{\rm TF}(d)\int_{\R^d}\rho(\br)^{1+\frac2d}\rd\br$$
but this does not seem to be known at present. If the fixed density $\rho$ is replaced by a well chosen locally constant density $\rho_N$ converging to $\rho$, then this was proved in~\cite[Thm.~4]{GotNam-18}.

\subsection{Derivation from Levy-Lieb at large densities}

In this section we show that our kinetic energy functionals can be obtained from the corresponding Levy-Lieb functionals in a proper limit of large densities. For completeness, we consider a rather arbitrary interaction potential $w$ in any dimension.

\begin{theorem}[Convergence at high density]\label{thm:CV_T}
Let $w\in L^p(\R^d)+L^\ii(\R^d)$ with $p$ as in~\eqref{eq:hyp_p} and $\rho\geq0$ such that $\sqrt\rho\in H^1(\R^d)$. If $\int_{\R^d}\rho\in\N$ we have
\begin{equation}
\lim_{\lambda\to\ii}\frac{F_{\rm LL}\big[\lambda^d\rho(\lambda\,\cdot)\big]}{\lambda^2}=T[\rho],\qquad \lim_{\lambda\to\ii}\frac{F_{\rm L}\big[\lambda^d\rho(\lambda\,\cdot)\big]}{\lambda^2}=T_{\rm GC}[\rho].
\label{eq:limit_F_LL_T}
\end{equation}
If $\int_{\R^d}\rho\in\R_+$ and the additional classical stability assumption~\eqref{eq:w_class_stable} holds, we have
\begin{equation}
\lim_{\lambda\to\ii}\frac{F_{\rm GC}\big[\lambda^d\rho(\lambda\,\cdot)\big]}{\lambda^2}=T_{\rm GC}[\rho].
\label{eq:limit_F_GC_T}
\end{equation}
\end{theorem}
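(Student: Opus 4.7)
The plan is a scaling argument. Under the dilation $\Psi_\lambda(\bX):=\lambda^{dN/2}\Psi(\lambda\br_1,\sigma_1,\ldots,\lambda\br_N,\sigma_N)$, a direct change of variables shows that $\rho_{\Psi_\lambda}=\lambda^d\rho(\lambda\,\cdot)$ and
\begin{equation*}
\pscal{\Psi_\lambda,H^{0,w}_N\Psi_\lambda}=\lambda^2\,T(\Psi)+W_\lambda(\Psi),\qquad W_\lambda(\Psi):=\sum_{j<k}\int|\Psi|^2\,w\!\left(\tfrac{\br_j-\br_k}{\lambda}\right)d\bX,
\end{equation*}
with analogous identities for mixed and grand-canonical states. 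Everything will reduce to showing that $W_\lambda$ is $o(\lambda^2)$ on trial states of bounded kinetic energy.

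For this I would use that the hypothesis $w\in L^p(\R^d)+L^\ii(\R^d)$ with $p$ as in~\eqref{eq:hyp_p} makes $w$ infinitesimally $(-\Delta)$-form bounded. Conjugating the 1-body form bound $|w|\leq \alpha(-\Delta)+C_\alpha$ by the unitary dilation $U_\lambda\phi(\br)=\lambda^{d/2}\phi(\lambda\br)$ (which sends $-\Delta$ to $\lambda^2(-\Delta)$ and $w$ to $w(\cdot/\lambda)$), and then going to the two-body operator in the relative coordinate using $-\Delta_{\br_j-\br_k}\leq \tfrac12(-\Delta_j-\Delta_k)$, yields
\begin{equation*}
\pm\sum_{j<k}w\!\left(\tfrac{\br_j-\br_k}{\lambda}\right)\leq \alpha\lambda^2 (n-1)\,\widehat T_n+C_\alpha\tbinom{n}{2}
\end{equation*}
as an $n$-body form inequality, where $\widehat T_n=\tfrac12\sum_{j=1}^n(-\Delta_j)$. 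Consequently $|W_\lambda(\Psi)|\leq \alpha\lambda^2(N-1)T(\Psi)+C_\alpha\binom{N}{2}$, so $\limsup_\lambda|W_\lambda(\Psi)|/\lambda^2\leq \alpha(N-1)T(\Psi)$ for every $\alpha>0$.

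The upper bound in~\eqref{eq:limit_F_LL_T} then follows by taking as trial state the scaled version of a minimizer $\Psi$ for $T[\rho]$ (which exists by Theorem~\ref{thm:exists_min} applied with $w\equiv 0$), and letting $\alpha\to 0$ after $\lambda\to\ii$. The lower bound is symmetric: choose a near-minimizer $\Psi_\lambda$ for $F_{\rm LL}[\rho_\lambda]$, invert the scaling to $\Psi^{(\lambda)}$ of density $\rho$, and combine
\begin{equation*}
\frac{F_{\rm LL}[\rho_\lambda]}{\lambda^2}\geq T(\Psi^{(\lambda)})-\frac{|W_\lambda(\Psi^{(\lambda)})|}{\lambda^2}\geq (1-\alpha(N-1))T(\Psi^{(\lambda)})-o(1)\geq (1-\alpha(N-1))T[\rho]-o(1),
\end{equation*}
where uniform boundedness of $T(\Psi^{(\lambda)})$ follows from the already proved upper bound. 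Then $\alpha\to 0$ concludes. The proof for $F_{\rm L}$ is identical after replacing pure states by mixed canonical density matrices and $T[\rho]$ by $T_{\rm GC}[\rho]$.

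The grand-canonical identity~\eqref{eq:limit_F_GC_T} requires one extra idea, because the factor $(n-1)$ in the form bound is no longer uniform in the number of particles. For the lower bound, the classical stability hypothesis~\eqref{eq:w_class_stable} is scale invariant and gives $\sum_{j<k}w((\br_j-\br_k)/\lambda)\geq -Cn$ pointwise, hence $\sum_n \tr(W_n^{(\lambda)}\Gamma_n)\geq -C\int\rho_\lambda=-C\int\rho$ on any grand-canonical trial state; divided by $\lambda^2$ this is $o(1)$, and the earlier argument gives $\liminf F_{\rm GC}[\rho_\lambda]/\lambda^2\geq T_{\rm GC}[\rho]$. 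For the upper bound I would lift an optimal one-particle density matrix $\gamma$ for $T_{\rm GC}[\rho]$ to the fermionic number-preserving quasi-free state $\Gamma=(\Gamma_n)$ with one-particle density matrix $\gamma$; Wick's theorem makes $\sum_n n\tr\Gamma_n$, $\sum_n n^2\tr\Gamma_n$, and $\sum_n nT_n(\Gamma_n)$ explicit functions of $\gamma$ (involving $\tr\gamma$ and $\tr(\gamma(1-\gamma))$), all finite since $\gamma\in\mathfrak S^1$ has finite kinetic energy. Summing the $n$-body form bound then yields $|W_\lambda(\Gamma)|/\lambda^2\leq C(\gamma)\alpha+o(1)$, and $\alpha\to 0$ finishes. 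The main technical subtlety, which I expect to be the trickiest step, is precisely this control of the $n$-sums in the grand-canonical setting; the quasi-free choice is what makes all required moments explicit.
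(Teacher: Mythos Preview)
Your argument for $F_{\rm LL}$ and $F_{\rm L}$ is essentially the paper's: rescale so that the interaction becomes $w_\lambda(\br)=\lambda^{-2}w(\br/\lambda)$, use that infinitesimal $(-\Delta)$--form-boundedness of $w$ becomes $|w_\lambda|\leq\eps(-\Delta)+C_\eps/\lambda^2$, and sandwich $H_N^{0,w_\lambda}$ between $(1\pm\eps(N-1))\tfrac12\sum_j(-\Delta_j)\mp N(N-1)C_\eps/(2\lambda^2)$. The paper states this directly as an operator inequality and minimizes both sides over states of density $\rho$, which is slightly cleaner than picking trial states, but equivalent. (Your remark that uniform boundedness of $T(\Psi^{(\lambda)})$ is needed in the lower bound is actually superfluous: once $\alpha(N-1)<1$, the inequality $T(\Psi^{(\lambda)})\geq T[\rho]$ suffices.) Your lower bound for $F_{\rm GC}$ via scale-invariance of the classical stability~\eqref{eq:w_class_stable} is also exactly the paper's.

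The one genuine difference is the grand-canonical \emph{upper} bound. The paper takes a near-minimizer for $T_{\rm GC}[\rho]$ and invokes the truncation argument of Appendix~\ref{app:wlsc_F_GC} (Step~2) to replace it by a state supported on finitely many particle sectors $n\leq K$; the canonical form bound then applies uniformly in $n\leq K$. Your alternative --- lift an optimal $\gamma$ for $T_{\rm GC}[\rho]$ to the gauge-invariant quasi-free state and control $\sum_n(n-1)\tr(\hat T_n\Gamma_n)$ and $\sum_n\binom{n}{2}\tr\Gamma_n$ via Wick's theorem --- also works: for that state $\langle\hat N\hat T\rangle=\tr(\gamma)\tr((-\Delta/2)\gamma)+\tr\big((-\Delta/2)\gamma(1-\gamma)\big)$ and $\langle\hat N(\hat N-1)\rangle=(\tr\gamma)^2-\tr(\gamma^2)$, both finite since $\gamma$ is trace-class with finite kinetic energy. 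The paper's route is more elementary (no second-quantization identities) but relies on a forward reference to the appendix; yours is self-contained and exploits the specific structure of the kinetic minimization.
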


One can also prove the convergence of optimal states or even write the theorem in the form of Gamma convergence. In a similar manner, $T_{\rm S}[\rho]$ arises from the Hartree-Fock-type Levy-Lieb functional where one only minimizes over Slater determinants.

Since we have not found the proof in the literature, we provide it here for completeness.

\begin{proof}
We start with $F_{\rm LL}$. By scaling we see that
$$\frac{F_{\rm LL}^w\big[\lambda^d\rho(\lambda\cdot)\big]}{\lambda^2}=F_{\rm LL}^{w_\lambda}\big[\rho\big]$$
with the new interaction potential $w_\lambda(\br)=\lambda^{-2}w(\br/\lambda)$. Our assumptions on $w$ imply that $w$ is infinitesimally $(-\Delta)$-form bounded, that is, $|w|\leq \eps(-\Delta)+C_\eps$
for all $\eps>0$. After scaling this implies
$$|w_\lambda|\leq \eps(-\Delta)+\frac{C_\eps}{\lambda^2}.$$
For the two-particle operator this gives
$$\sum_{1\leq j<k\leq N}|w_\lambda(\br_j-\br_k)|=\frac12 \sum_{j=1}^N\sum_{k\neq j}|w_\lambda(\br_j-\br_k)|\leq \frac{N-1}2 \sum_{j=1}^N\left(-\eps \Delta_j+\frac{C_\eps}{\lambda^2}\right)$$
and we thus obtain
\begin{multline}
\frac{1-\eps(N-1)}2\sum_{j=1}^N(-\Delta)_{j}-\frac{N(N-1) C_\eps}{2\lambda^2}\\
\leq H_N^{0,w_\lambda}\leq \frac{1+\eps(N-1)}2\sum_{j=1}^N(-\Delta)_{j}+\frac{N(N-1) C_\eps}{2\lambda^2}. \label{eq:op_upper_lower_bound}
\end{multline}
This yields the bound
\begin{multline*}
(1-\eps(N-1))T[\rho]-\frac{N(N-1) C_\eps}{2\lambda^2}\\
\leq \frac{F_{\rm LL}^w\big[\lambda^d\rho(\lambda\cdot)\big]}{\lambda^2}\leq (1+\eps(N-1))T[\rho]+\frac{N(N-1) C_\eps}{2\lambda^2}.
\end{multline*}
The limit~\eqref{eq:limit_F_LL_T} follows after taking first $\lambda\to\ii$ and then $\eps\to0$. For an explicit potential such as Coulomb we know how $C_\eps$ depends on $\eps$ and one can then give a quantitative bound.

For $F_{\rm L}$ the argument is exactly the same, with the same bound and $T[\rho]$ replaced by $T_{\rm GC}[\rho]$. For $F_{\rm GC}$ the above argument does not work due to the bad behavior in $N$. Instead, we rescale the stability assumption~\eqref{eq:w_class_stable} on $w$ and obtain
$$H_n^{0,w_\lambda}\geq \sum_{j=1}^n(-\Delta)_{j}-\frac{C}{\lambda^2}n$$
which provides the lower bound
$$\frac{F_{\rm GC}^w\big[\lambda^d\rho(\lambda\cdot)\big]}{\lambda^2}\geq T_{\rm GC}[\rho]-\frac{C}{\lambda^2}\int_{\R^d}\rho(\br)\,\rd\br.$$
For the upper bound we consider a fixed grand-canonical state $\Gamma=(\Gamma_n)_{n\geq0}$ such that
$$\sum_{n\geq1}\tr(H^{0,0}_n\Gamma_n)\leq T_{\rm GC}[\rho]+\eta$$
for some small $\eta>0$. From the proof in Appendix~\ref{app:wlsc_F_GC}, we can assume that $\Gamma$ has compact support: $\Gamma_n\equiv0$ for $n\geq K$. Using the previous bound~\eqref{eq:op_upper_lower_bound} in the canonical case, we obtain the bound
$$\frac{F_{\rm LL}^w\big[\lambda^d\rho(\lambda\cdot)\big]}{\lambda^2}\leq (1+\eps(K-1))\big(T_{\rm GC}[\rho]+\eta\big)+\frac{K(K-1) C_\eps}{2\lambda^2}.$$
The limit now follows after taking $\lambda\to\ii$, $\eps\to0$ and finally $\eta\to0$.
\end{proof}

\begin{remark}[Adiabatic connection]\label{rmk:adiabatic_conn}\rm
For homogeneous potentials such as Coulomb, scaling $\rho$ is the same as changing the strength of the interaction. This is the spirit of the \emph{adiabatic connection formula}, which is often used in quantum chemistry to interpolate between the non-interacting and interacting problems~\cite{HelTea-22}. Let us for instance discuss $F^{w}_{\rm L}[\rho]$ and the corresponding kinetic energy $T_{\rm GC}[\rho]=F^{0}_{\rm L}[\rho]$. We introduce a coupling constant $t$ in front of $w$ and look at the function $t\mapsto F^{tw}_{\rm L}[\rho]$. It is concave on $[0,1]$ (and increasing if $w\geq0$). It has left and right derivatives everywhere, which are given by the minimal and maximal values of the interaction energy among all the possible minimizers $\Gamma_t$ of $F^{tw}_{\rm L}[\rho]$, by the Feynman-Hellmann theorem. These two derivatives are equal, except possibly on a countable subset of $[0,1]$. We can express
$$F^{w}_{\rm L}[\rho]-T_{\rm GC}[\rho]=\int_0^1\frac{\partial}{\partial t}F^{tw}_{\rm L}[\rho]\,\rd t=\int_0^1\tr\bigg(\sum_{1\leq j<k\leq N}w(\br_j-\br_k)\bigg)\Gamma_t\,\rd t$$
where $\Gamma_t$ is any minimizer for $F^{tw}_{\rm L}[\rho]$. This is a formula for the direct plus exchange-correlation energy in Kohn-Sham theory with fractional occupations. It is sometimes useful to consider a general path $t\in[0,1]\mapsto w_t$ in place of the simple linear switching, see~\cite{Yang-98} and~\cite[Sec.~2.4]{HelTea-22}.
\end{remark}

%%%%%%%%%%%%%%%%%%%%%%%%%%%%%%%%%%%%%%%%%%%%%%%%%%%%%%%%%
%%%%%%%%%%%%%%%%%%%%%%%%%%%%%%%%%%%%%%%%%%%%%%%%%%%%%%%%%
\section{The classical interaction energy and Lieb-Oxford inequalities}\label{sec:classical}
%%%%%%%%%%%%%%%%%%%%%%%%%%%%%%%%%%%%%%%%%%%%%%%%%%%%%%%%%
%%%%%%%%%%%%%%%%%%%%%%%%%%%%%%%%%%%%%%%%%%%%%%%%%%%%%%%%%

In this section we study the Levy-Lieb functional with the kinetic energy dropped, which then becomes a purely classical problem.

\subsection{A multi-marginal optimal transport problem}
In the classical problem there is no difference between fermions and bosons. In the canonical setting, the main variable is a symmetric probability density $\bP(\br_1,...,\br_N)$ over $(\R^d)^N$ which in the quantum case corresponds to
$$\bP(\br_1,...,\br_N)=\sum_{\sigma_1,...,\sigma_N\in\Z_q}|\Psi(\br_1,\sigma_1,\dots, \br_N,\sigma_N)|^2$$
for pure states and to an average of such quantities for mixed states. The sum over the spin variables occurs since the interaction potential has been assumed to be spin-independent.
In general, $\bP$ will not be absolutely continuous with respect to the Lebesgue measure, however. The problem is therefore better stated in the form
\begin{equation}
\boxed{F_{\rm SCE}[\rho]=\inf_{\substack{\bP\ :\\ \rho_\bP=\rho}}\int_{(\R^d)^N}\sum_{1\leq j<k\leq N}w(\br_j-\br_k)\,\rd\bP(\br_1,...,\br_N) }
\label{eq:def_SCE}
\end{equation}
with the density
$$\rho_\bP(\br)=N\int_{(\R^d)^{N-1}}\,\rd\bP(\br,\br_2,...,\br_N).$$
The acronym SCE means \emph{Strictly Correlated Electrons}~\cite{Seidl-99,SeiPerLev-99,SeiGorSav-07,GorSei-10,SeiMarGerNenGieGor-17,SeiBenKooGor-22} since, as we will explain, the minimizing solution $\bP$ is typically supported on a set of small dimension where the positions of the particles are highly dependent of each other. In general $\rho$ can be a singular measure. In the worst case $\rho$ is the sum of $N$ Dirac deltas, in which case $\bP$ has to be the symmetrized tensor product of these $N$ deltas so that the locations of the particles are then completely fixed. For simplicity we will always assume that $\rho\in L^1(\R^d)$. Nevertheless, the minimizing $\bP$ need not be a function.

There is a grand-canonical version of $F_{\rm SCE}$ which is stated in the form
\begin{equation}
\boxed{F_{\rm GSCE}[\rho]=\inf_{\substack{\bP=(\bP_n)_{n\geq0}\\ \sum_{n\geq0}\bP_n((\R^d)^n)=1\\ \sum_{n\geq1}\rho_{\bP_n}=\rho}}\sum_{n\geq2}\int_{(\R^d)^n}\sum_{1\leq j<k\leq n}w(\br_j-\br_k)\,\rd\bP_n(\br_1,...,\br_n). }
\label{eq:def_SCE_GC}
\end{equation}
This was introduced in~\cite{LewLieSei-18} and further studied in~\cite{LewLieSei-19,LewLieSei-19b,MarLewNen-22_ppt}. For the problem to be well posed for all densities, it is needed that $w$ satisfies the stability condition~\eqref{eq:w_class_stable}.

The two classical problems~\eqref{eq:def_SCE} and~\eqref{eq:def_SCE_GC} belong to the class of \emph{multi-marginal optimal transport} problems~\cite{CotFriKlu-13,CotFriPas-15,Pass-15,MarGerNen-17,SeiMarGerNenGieGor-17}.
We only mention here a few striking results. The existence of a minimizing $\bP$ for~\eqref{eq:def_SCE} follows by compactness arguments similar to Theorem~\ref{thm:exists_min}, for a large class of interaction potentials including the Coulomb potential. The argument is the same in the grand-canonical case~\eqref{eq:def_SCE_GC}.
It was proved in~\cite{ColMar-15} that the infimum can be restricted to \emph{Monge states} which are the most correlated $N$-particle probability densities with one-particle density $\rho$ and take the form
\begin{equation}
\bP(\br_1,...,\br_N)=\text{Sym}\int_{\R^d}\delta_{\by}(\br_1)\delta_{T\by}(\br_2)\cdots \delta_{T^{N-1}\by}(\br_N)\,\frac{\rho(\by)}{N}\,\rd\by,
 \label{eq:Monge}
\end{equation}
where $T:\R^d\to\R^d$ is a transport map such that $T\#\rho=\rho$ and $T^N={\rm Id}$ and Sym denotes  symmetrization. The formula means that the position $\by=\br_1$ of the first particle completely determines the positions $\br_2=T\br_1,...,\br_N=T^{N-1}\br_1$ of the other $N-1$ particles through the transport map $T$ (and the picture is symmetrized with respect to the indices of the particles at the end). When moving the first particle (at the appropriate speed such as to build the desired density $\rho$) the other particles follow in a `strictly correlated' way.

Even if the infimum in~\eqref{eq:def_SCE} is the same when restricted to Monge states, there might exist no Monge minimizer~\cite{ColStr-16,SeiMarGerNenGieGor-17}. Only when $N=2$, or in one dimension for all $N\geq2$~\cite{ColPasMar-15} one can be sure that Monge minimizers exist.
In fact, in dimension $d=1$ and for a positive interaction $w\geq0$, the problem  admits a minimizer $\bP$ which does not depend on $w$ at all! It is the Monge state with increasing transport map $T\#(\rho\1_{(r_{k-1},r_k)})=\rho\1_{(r_{k},r_{k+1})}$  where $r_0=-\ii<r_1<r_2<\cdots <r_{N-1}<r_N=+\ii$ are chosen such that $\int_{r_i}^{r_{i+1}}\rho(r)\,\rd r=1$~\cite{ColPasMar-15}. The corresponding $N$-particle probability can also be expressed in the manner
\begin{equation}
\bP(r_1,...,r_N)=\text{Sym}\int_0^1\delta_{r_1(s)}\otimes\cdots\otimes\delta_{r_N(s)}\,\rd s
\label{eq:form_Monge_1D}
\end{equation}
where $r_k(s):[0,1]\to [r_{k-1},r_k]$ is the inverse of the increasing function $r\mapsto s_k(r)=\int_{r_{k-1}}^r\rho(t)\,\rd t$. This is displayed in Figure~\ref{fig:1D_Monge}.
For instance, for the uniform density $\rho(r)=\1(0\leq r\leq N)$ we have simply $T(y)=y$. The $N$ points are placed on the lattice, $(y+\Z)\cap [0,N)$ and their position is averaged over $s\in[0,1]$:
\begin{equation}
 \bP(r_1,...,r_N)=\text{Sym} \int_{0}^1\delta_{s}(r_1)\delta_{1+s}(r_2)\cdots \delta_{N-1+s}(r_N)\,\rd s.
 \label{eq:floating_crystal_1D}
\end{equation}
This is called a \emph{floating Wigner crystal} in Physics and Chemistry~\cite{BisLuh-82,MikZie-02,DruRadTraTowNee-04,LewLieSei-19b}, since the particles are exactly located on a lattice, whose position is varied. We will come back to this special state later in Section~\ref{sec:floating_crystal}.

\begin{figure}[h]
\centering
\includegraphics[width=9cm]{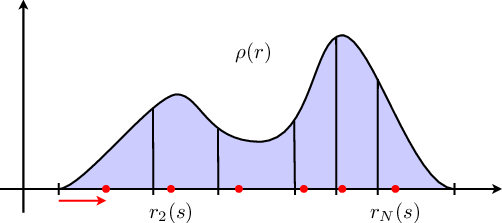}
\caption{Form~\eqref{eq:form_Monge_1D} of the optimal Monge-type probability $\bP$ in one dimension. The positions of all the particles are fixed by the position of the first particle and they are moved to the right at a proper speed such as to reproduce the desired density $\rho$.\label{fig:1D_Monge}}
\end{figure}

In~\cite{ButChaPas-18} some interesting properties of the exact (Monge or not Monge) minimizer $\bP$ of~\eqref{eq:def_SCE} were established. This includes the fact that the $N$ particles have a positive distance to each other on the support of $\bP$, for a repulsive interaction such as Coulomb. The dual formulation is similar to~\eqref{eq:duality} and takes the form
\begin{equation}
 F_{\rm SCE}[\rho]=\sup_{\substack{v\in C^0(\R^d)\\ \sum_{1\leq j<k\leq N}w(\br_j-\br_k)+\sum_{j=1}^Nv(\br_j)\geq0}} \left\{-\int_{\R^d}v(\br)\rho(\br)\,\rd\br\right\}.
 \label{eq:duality_SCE}
\end{equation}
One important feature of the SCE problem is that there exists an optimal potential $v_{\rm SCE}$ solving the supremum in~\eqref{eq:duality_SCE}, under rather weak assumptions on the interaction potential $w$. The optimal $v_{\rm SCE}$ is called a \emph{Kantorovich} potential. This is in stark contrast to the quantum case~\eqref{eq:duality}, where unique continuation drastically reduces the set of densities $\rho$ for which the supremum is attained (see Appendix~\ref{sec:HK} below). Under the sole assumption that $w$ is radial decreasing, diverges at the origin and is $C^1$ outside of the origin (like for the Coulomb potential $w(\br)=|\br|^{-1}$ in dimension $d=3$), it was proved in~\cite{ButChaPas-18} that there exists an optimal Kantorovich potential $v_{\rm SCE}$ which is bounded and Lipschitz. An optimal $N$-particle probability $\bP$ must then be supported on the set
$${\rm argmin}\left\{\sum_{1\leq j<k\leq N}w(\br_j-\br_k)+\sum_{j=1}^Nv_{\rm SCE}(\br_j)\right\}.$$
In other words, the $N$ particles should minimize the associated $N$-particle classical problem with the external potential $v_{\rm SCE}$. All  densities $\rho\in L^1(\R^3)$ are $v$-representable in the classical case.

\subsection{Convergence of the Levy-Lieb functional at low density}

We have seen above in Theorem~\ref{thm:CV_T} that the kinetic energy functional becomes dominant at large densities. Similarly, the interaction becomes dominant at low densities, provided that $w$ has the right scaling at large distances. To simplify our exposition, from now on we restrict our discussion to power-law (Riesz) potentials
$$w(\br)=\frac{1}{|\br|^s},\qquad 0<s<\min(2,d).$$
The main result is the following

\begin{theorem}[Convergence at low density]\label{thm:CV_SCE}
Let $w(\br)=|\br|^{-s}$ with $0<s<\min(2,d)$. Let $\rho\geq0$ such that $\sqrt\rho\in H^1(\R^d)$. If $\int_{\R^d}\rho\in\N$ we have
\begin{equation}
\lim_{\lambda\to0}\frac{F_{\rm LL}\big[\lambda^d\rho(\lambda\cdot)\big]}{\lambda^{1+\frac{s}{d}}}=\lim_{\lambda\to0}\frac{F_{\rm L}\big[\lambda^d\rho(\lambda\cdot)\big]}{\lambda^{1+\frac{s}{d}}}=F_{\rm SCE}[\rho].
\label{eq:limit_F_LL_SCE}
\end{equation}
If $\int_{\R^d}\rho\in\R_+$ we have
\begin{equation}
\lim_{\lambda\to0}\frac{F_{\rm GC}\big[\lambda^d\rho(\lambda\cdot)\big]}{\lambda^{1+\frac{s}{d}}}=F_{\rm GSCE}[\rho].
\label{eq:limit_F_GC_SCE}
\end{equation}
\end{theorem}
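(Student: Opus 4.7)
The plan is a sandwich argument. For the Riesz kernel $w(\br)=|\br|^{-s}$, the classical functional enjoys the exact scaling $F_{\rm SCE}[\lambda^d\rho(\lambda\,\cdot)]=\lambda^s F_{\rm SCE}[\rho]$ (push forward any admissible $\bP$ under $\br_i\mapsto \lambda^{-1}\br_i$), and similarly for $F_{\rm GSCE}$. The same dilation acting on quantum states multiplies the kinetic energy by $\lambda^2$ and the interaction by $\lambda^s$, so with $s<2$ the interaction dominates as $\lambda\to 0$ and the natural normalization is $\lambda^s$. I shall prove that $F_{\rm LL}[\lambda^d\rho(\lambda\,\cdot)]/\lambda^s$ (and likewise for $F_{\rm L}$, $F_{\rm GC}$) converges to $F_{\rm SCE}[\rho]$ (resp.\ $F_{\rm GSCE}[\rho]$).

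For the lower bound, I would discard the non-negative kinetic term in the variational principle defining $F_{\rm LL}[\lambda^d\rho(\lambda\,\cdot)]$. For any admissible $\Psi$, the spin-summed probability $\bP_\Psi(\br_1,\ldots,\br_N):=\sum_{\sigma_1,\ldots,\sigma_N\in\Z_q}|\Psi(\br_1,\sigma_1,\ldots,\br_N,\sigma_N)|^2$ is a symmetric $N$-particle probability with one-particle marginal $\lambda^d\rho(\lambda\,\cdot)$, and the interaction part of the energy is exactly $\int\sum_{j<k}w(\br_j-\br_k)\,\bP_\Psi\,d\br_1\cdots d\br_N\geq F_{\rm SCE}[\lambda^d\rho(\lambda\,\cdot)]=\lambda^s F_{\rm SCE}[\rho]$. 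The same argument applied to $\sum_j\alpha_j\bP_{\Psi_j}$ for a mixed state $\sum_j\alpha_j|\Psi_j\rangle\langle\Psi_j|$ gives the matching lower bound for $F_{\rm L}$; applied sector by sector it yields $F_{\rm GC}[\lambda^d\rho(\lambda\,\cdot)]\geq\lambda^s F_{\rm GSCE}[\rho]$.

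For the upper bound, fix $\epsilon>0$ and pick a near-optimizer $\bP$ for $F_{\rm SCE}[\rho]$, which by~\cite{ColPasMar-15,ColMar-15} can be taken of Monge type~\eqref{eq:Monge}. Regularize $\bP$ into a smooth $\bP_\epsilon$ with $\sqrt{\bP_\epsilon}\in H^1((\R^d)^N)$, exact one-particle marginal $\rho$, and interaction within $\epsilon$ of $F_{\rm SCE}[\rho]$. Build an antisymmetric $\Psi_\epsilon$ with density $\rho$ and $\sum_{\sigma_1,\ldots,\sigma_N}|\Psi_\epsilon|^2=\bP_\epsilon$: when $q\geq N$, take $\Psi_\epsilon=\sqrt{\bP_\epsilon}$ times the antisymmetric spin tensor $\det(\delta_j(\sigma_k))/\sqrt{N!}$; when $q<N$, combine this with the Harriman phases~\eqref{eq:phases_pre}--\eqref{eq:phases} to gain spatial antisymmetry at the cost of a finite, $N$-dependent kinetic constant $T(\Psi_\epsilon)<\infty$. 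Dilating $\Psi_\epsilon$ to a state with density $\lambda^d\rho(\lambda\,\cdot)$ gives
\[
\frac{F_{\rm LL}[\lambda^d\rho(\lambda\,\cdot)]}{\lambda^s}\leq F_{\rm SCE}[\rho]+\epsilon+\lambda^{2-s}\,T(\Psi_\epsilon),
\]
so letting $\lambda\to 0$ and then $\epsilon\to 0$ closes the gap. The bound for $F_{\rm L}$ follows from $F_{\rm L}\leq F_{\rm LL}$; for the grand-canonical statement, apply the same construction sector by sector to a near-optimizer $(\bP_n^\epsilon)_{n\geq 0}$ for $F_{\rm GSCE}[\rho]$, which may be assumed with compact support in $n$ by an argument parallel to Appendix~\ref{app:wlsc_F_GC}.

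The main obstacle is the regularization step in the upper bound. An optimal $\bP$ is supported on a submanifold of small dimension parametrized by the Monge map $T$, so naive mollification places mass on configurations where two particles nearly coincide and makes the Riesz integral diverge. A structured convolution — for instance, mollifying the parametrization $y\mapsto(y,Ty,\ldots,T^{N-1}y)$ in each particle coordinate at a common scale $\eta$ much smaller than the minimal pairwise distance on the support of $\rho$ — preserves the distance structure while producing a smooth $\bP_\epsilon$. Verifying that the induced interaction stays within $\epsilon$ of $F_{\rm SCE}[\rho]$ and correcting the marginal back to $\rho$ exactly (e.g.\ by redistributing a small amount of mass away from near-coincidence regions) is the technical heart of the argument.
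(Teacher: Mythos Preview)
Your overall sandwich strategy --- trivial lower bound by dropping the non-negative kinetic energy, upper bound by regularizing a near-optimal transport plan into a quantum state and then scaling --- is the right one and matches what the paper describes. Note that the paper does not give a self-contained proof here but refers to~\cite{CotFriKlu-13,BinPas-17,Lewin-18,CotFriKlu-18}; your identification of the regularization of the singular plan $\bP$ (with exact marginal $\rho$ and controlled Riesz cost) as the technical heart is exactly what the paper highlights when it says an optimizer $\bP$ ``will never have a finite kinetic energy''. Your scaling computation is also correct: the interaction scales as $\lambda^s$.

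There is, however, a genuine gap in your antisymmetrization step for $q<N$. You propose to take $\sqrt{\bP_\epsilon}$ and ``combine this with the Harriman phases~\eqref{eq:phases_pre}--\eqref{eq:phases}'' to obtain a fermionic $\Psi_\epsilon$ whose spin-traced square equals $\bP_\epsilon$. This does not work: the Harriman construction produces a \emph{Slater determinant} with prescribed one-body density $\rho$, hence a very specific determinantal $N$-body probability, and there is no way to impose a prescribed $N$-particle density $\bP_\epsilon$ by multiplying $\sqrt{\bP_\epsilon}$ by one-body phases. This is not a technicality --- it is exactly why the pure-state result for $F_{\rm LL}$ at arbitrary $N$ required the separate work of~\cite{CotFriKlu-18} after the mixed-state case $F_{\rm L}$ had been settled in~\cite{Lewin-18}. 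For $F_{\rm L}$ one can indeed bypass the issue by building mixed states directly from the regularized plan, and your argument goes through there; but for $F_{\rm LL}$ a genuinely different fermionic construction is needed. Your sketch of the marginal-correction step (``redistributing a small amount of mass away from near-coincidence regions'') is also too schematic: maintaining the exact marginal $\rho$ while keeping the kinetic energy finite and the Riesz cost $\epsilon$-close to optimum is essentially the full content of the cited papers, not a detail to be filled in.
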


Note that the stability condition~\eqref{eq:stability} is always satisfied for the positive potential $w(\br)=|\br|^{-s}$, hence the grand-canonical energies are well defined.

The proof is much more complicated than Theorem~\ref{thm:CV_T}, since an optimizer $\bP$ for $F_{\rm SCE}[\rho]$ or $F_{\rm GSCE}[\rho]$ will never have a finite kinetic energy, even under the assumption that $\sqrt\rho\in H^1(\R^d)$. The limit for $F_{\rm LL}$ was shown for $\int_{\R^d}\rho=2$ with spin first by Cotar, Friesecke and Kl\"uppelberg in~\cite{CotFriKlu-13} and later extended to $\int_{\R^d}\rho=3$ by Bindini and de Pascale in~\cite{BinPas-17}. The limit for $F_{\rm L}[\rho]$ and all $\int_{\R^d}\rho\in\N$ was solved in~\cite{Lewin-18} whereas the case of $F_{\rm LL}$ was finally treated in~\cite{CotFriKlu-18}. The proof for $F_{\rm GC}$ follows along the lines of~\cite{Lewin-18}. The next order in $\lambda$ was predicted in~\cite{GorVigSei-09} and proved in some cases in~\cite{ColMarStr-21_ppt}.

To summarize, at large densities the Levy-Lieb functional behaves like the kinetic energy of non-interacting quantum particles, whereas at low density the particles tend to be very correlated and solve the corresponding classical problem.

\subsection{Lieb-Oxford inequality}

We discuss here upper and lower bounds on the interaction energy, with an emphasis on lower bounds (Lieb-Oxford inequality).

The easiest upper bound is obtained by taking the decorrelated trial state $\bP=(\rho/N)^{\otimes N}$, that is, independent particles distributed according to the density $\rho$. This gives the bound
\begin{equation}
F_{\rm SCE}[\rho]\leq \frac{1-\frac1N}{2}\iint_{\R^{2d}}w(\br-\br')\rho(\br)\rho(\br')\,\rd\br\,\rd\br'.
\label{eq:upper_SCE}
\end{equation}
The right side is, up to the constant $1-1/N$, the classical energy of the density distribution $\rho$ and it is a non-local term. The factor $1/N$ can be dropped for repulsive potentials.

It is relatively easy to prove a similar lower bound, under the additional assumption that $w$ is continuous and has a non-negative Fourier transform, $\widehat{w}\geq0$. In this case we have
$$\iint_{\R^{2d}}w(\br-\br')\,\rd\eta(\br) \,\rd\eta(\br')=(2\pi)^{d/2}\int_{\R^d}\widehat{w}(k)|\widehat{\eta}(k)|^2\,\rd k\geq0$$
for every signed measure $\eta$. Taking $\eta=\sum_{j=1}^N\delta_{\br_j}-f$ and expanding we find the pointwise inequality on $(\R^d)^N$
$$\sum_{1\leq j<k\leq N}w(\br_j-\br_k)\geq \sum_{j=1}^N w\ast f(\br_j)-\frac12\iint_{\R^{2d}}w(\br-\br')f(\br)f(\br')\,\rd\br \,\rd\br'-\frac{w(0)N}{2}.$$
This is valid for all $f$ and the last error term comes from the case $j=k$. Integrating against any state $\bP$ with density $\rho$ and taking $f=\rho$, we obtain the following lower bound:
\begin{equation}
F_{\rm SCE}[\rho]\geq \frac12\iint_{\R^{2d}}w(\br-\br')\rho(\br)\rho(\br')\,\rd\br\,\rd\br'-\frac{w(0)}{2}\int_{\R^d}\rho(\br)\,\rd\br,\quad \text{when $\widehat{w}\geq0$}.
\label{eq:Onsager}
\end{equation}
For a long-range potential the first term  grows faster than $N$ for most densities, hence the last error term is often much lower than the classical energy.

For Coulomb or other power-law potentials, the previous argument does not work since $w(0)=+\ii$. One solution is to regularize the potential at the origin but this also modifies the classical interaction energy. One can estimate the error under appropriate regularity assumptions on $\rho$. But Lieb~\cite{Lieb-79} and then Lieb-Oxford~\cite{LieOxf-80} have proved a universal bound which has the right scaling behavior and does not require to smear out the potential. We state it for power-law potentials but the inequality is slightly more general.

\begin{theorem}[Lieb-Oxford inequality~\cite{Lieb-79,LieOxf-80,LieSei-09,Bach-92,GraSol-94,LieSolYng-95,LunNamPor-16}]
Assume that $w(\br)=|\br|^{-s}$ with $0<s<d$ in dimension $d\geq1$. Then there exists a universal constant $c_{\rm LO}(s,d)>0$ such that
\begin{equation}
 F_{\rm GSCE}[\rho]\geq \frac12\iint_{\R^{2d}}\frac{\rho(\br)\rho(\br')}{|\br-\br'|^s}\,\rd\br\,\rd\br'-c_{\rm LO}(s,d)\int_{\R^d}\rho(\br)^{1+\frac{s}{d}}\,\rd\br,
 \label{eq:LO}
\end{equation}
for every $\rho\in (L^1\cap L^{1+\frac{s}{d}})(\R^d,\R_+)$.
\end{theorem}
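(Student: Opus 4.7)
The plan is a smearing-and-positivity argument, inspired by the original proofs of Lieb and Lieb--Oxford. The key tool is the Fefferman--de la Llave-type decomposition
\begin{equation*}
\frac{1}{|\br_1-\br_2|^s}=c(s,d)\int_0^\infty r^{-s-d-1}\int_{\R^d}\1_{B(\bz,r)}(\br_1)\,\1_{B(\bz,r)}(\br_2)\,d\bz\,dr,
\end{equation*}
which writes the Riesz kernel $K_s(\br):=|\br|^{-s}$, $0<s<d$, as a positive integral of rank-one non-negative kernels. Consequently the $s$-energy $D_s(\eta,\eta):=\iint K_s(\br-\br')\,d\eta(\br)\,d\eta(\br')\geq 0$ for every signed measure $\eta$ for which it converges.

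For a configuration $(\br_1,\dots,\br_n)$, set $N(\bz,r):=\#\{j:\br_j\in B(\bz,r)\}$ and $\rho_B(\bz,r):=\int_{B(\bz,r)}\rho$. The Fefferman--de la Llave identity gives $\sum_{1\leq j<k\leq n}|\br_j-\br_k|^{-s}=c(s,d)\int\int r^{-s-d-1}\tbinom{N(\bz,r)}{2}\,d\bz\,dr$, and expanding $\tbinom{N}{2}\geq N\rho_B-\tfrac12\rho_B^2-\tfrac12 N$ from the trivial $(N-\rho_B)^2\geq 0$ yields the pointwise bound
\begin{equation*}
\sum_{1\leq j<k\leq n}|\br_j-\br_k|^{-s}\geq c(s,d)\int_0^\infty r^{-s-d-1}\int_{\R^d}\Big(N(\bz,r)\rho_B(\bz,r)-\tfrac12\rho_B(\bz,r)^2-\tfrac12 N(\bz,r)\Big)\,d\bz\,dr.
\end{equation*}
Taking expectation with respect to a grand-canonical state $\bP=(\bP_n)$ of density $\rho$ (so that $\mathbb{E}[N(\bz,r)]=\rho_B(\bz,r)$), the first two terms combine into $\tfrac12 D_s(\rho,\rho)$ by a second application of the Fefferman--de la Llave representation, giving the Hartree energy. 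The third, linear-in-$N$ term contributes $-\tfrac{c(s,d)}{2}\int\rho(\br)\int_0^\infty r^{-s-1}|B(0,1)|\,dr\,d\br$, which is UV-divergent at $r=0$ and encodes the infinite self-energy of point charges.

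The main technical obstacle, and the heart of the proof, is to regularize this divergence via a position-dependent UV cutoff $r\geq R(\br)$ chosen to match the local inter-particle length $R(\br)\sim\rho(\br)^{-1/d}$, without damaging the Hartree term. With this choice the UV remainder is exactly
\begin{equation*}
\frac{c(s,d)\,|B(0,1)|}{2s}\int\rho(\br)\,R(\br)^{-s}\,d\br=C_{s,d}\int\rho(\br)^{1+s/d}\,d\br,
\end{equation*}
precisely the Lieb--Oxford error term. Two rigorous implementations are available. The original Lieb--Oxford argument replaces each $\delta_{\br_j}$ by a spherical shell of radius $R_j=\tfrac12\min_{k\neq j}|\br_j-\br_k|$; in the Coulomb case $s=d-2$, Newton's theorem gives an exact identity for the pair interaction, and a covering/rearrangement argument bounds $\sum_j R_j^{-s}$ by a multiple of $\int\rho^{1+s/d}$. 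Alternatively, as in Lundholm--Nam--Portmann, one applies the Fefferman--de la Llave integral with a density-dependent lower cutoff directly, bypassing the combinatorial step at the cost of a more delicate estimate for the interaction between the cutoff and the Hartree term. Either route yields the inequality with an explicit but non-sharp constant $c_{\rm LO}(s,d)$; determining the sharp constant is a well-known open problem.
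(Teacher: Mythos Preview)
Your Fefferman--de la Llave setup and the inequality $\binom{N}{2}\geq N\rho_B-\tfrac12\rho_B^2-\tfrac12 N$ are correct, and this is indeed the skeleton of the Lundholm--Nam--Portmann argument the paper cites. You have also correctly located the difficulty in the UV-divergent self-energy term.

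The gap is in your regularization step. Your description of the original Lieb--Oxford smearing is wrong: the radius is \emph{not} $R_j=\tfrac12\min_{k\neq j}|\br_j-\br_k|$. That choice is configuration-dependent, and $\sum_j R_j^{-s}$ then cannot be bounded by $\int\rho^{1+s/d}$ uniformly over states $\bP$ of density $\rho$: put all the mass of $\bP$ on a single tightly clustered configuration and $\sum_j R_j^{-s}$ blows up while $\rho$ is unchanged. No covering or rearrangement argument rescues this, because the quantity you need to control is pointwise in the configuration while the bound you want is a functional of $\rho$ alone. In the actual proof the smearing of $\delta_{\br_j}$ is configuration-\emph{independent} and $\rho$-dependent; the self-energy of the smeared charge at $\br_j$ is then controlled by $M_\rho(\br_j)^{s/d}$, with $M_\rho$ the Hardy--Littlewood maximal function, and after integrating against $\bP$ one invokes $\|M_\rho\|_{L^{1+s/d}}\leq c_{\rm HL}(s,d)\|\rho\|_{L^{1+s/d}}$. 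This maximal-function step is precisely what the paper singles out as the core of the argument, and it is equally indispensable in the density-dependent-cutoff variant you allude to at the end: once you truncate the $r$-integral at $R(\br)\sim\rho(\br)^{-1/d}$, the $N\rho_B$ and $\rho_B^2$ terms no longer reconstitute the exact Hartree energy, and bounding the discrepancy again leads to the maximal function.
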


From now on we always call $c_{\rm LO}(s,d)$ the \emph{smallest} constant for which the inequality~\eqref{eq:LO} is valid for all $\rho$. Note that $c_{\rm LO}(s,d)$ works for every particle number $\int_{\R^d}\rho$. If one adds the constraint that $\int_{\R^d}\rho=\lambda$ then the optimal constant depends on $\lambda$ but it is non-decreasing and has the limit $c_{\rm LO}(s,d)$ when $\lambda\to\ii$.

Although only the case $s=1$ and $d=3$ was considered in the original papers~\cite{Lieb-79,LieOxf-80}, the proof for $s=1$ and $d=2$ given in~\cite{Bach-92,GraSol-94,LieSolYng-95} extends to any $0<s<d$ in any dimension, see~\cite[Lemma~16]{LunNamPor-16}. This proof involves the Hardy-Littlewood estimate for the maximal function $M_\rho$~\cite{Grafakos-book},
$$\norm{M_\rho}_{L^{1+s/d}(\R^d)}\leq c_{\rm HL}(s,d)\norm{\rho}_{L^{1+s/d}(\R^d)}$$
and, consequently,  the best known estimate on $c_{\rm LO}(s,d)$ involves the unknown constant $c_{\rm HL}(s,d)$. A Lieb-Oxford bound was shown for $w(\br)=-\log|\br|$ in two dimensions in~\cite[Prop.~3.8]{LewNamSerSol-15}. In dimension $d=1$, optimal Lieb-Oxford bounds are studied in~\cite{DiMarino-19}.

In the 3D Coulomb case, $d=3$ and $s=1$, the best estimate known so far on the optimal Lieb-Oxford constant is
\begin{equation}
\boxed{ 1.4442\leq c_{\rm LO}(1,3)\leq 1.5765.}
 \label{eq:estim_c_LO}
\end{equation}
The upper constant was equal to $8.52$ in~\cite{Lieb-79}, to $1.68$ in~\cite{LieOxf-80} and later improved to 1.64 in~\cite{ChaHan-99}.
The better value $1.58$ was obtained very recently in~\cite{LewLieSei-22}. The lower bound has been claimed in~\cite{Perdew-91,LevPer-93} and only shown recently in~\cite{CotPet-19b,LewLieSei-19}. It will be discussed in the next section. It was  conjectured in~\cite{LevPer-93,OdaCap-07,RasPitCapPro-09} that the best Lieb-Oxford constant is indeed about $1.44$. It remains an important challenge to find the optimal constant in~\eqref{eq:LO}. Several of the most prominent functionals used in Density Functional Theory make use of the value of the Lieb-Oxford constant for calibration~\cite{Perdew-91,LevPer-93,PerBurErn-96,SunPerRuz-15,SunRuzPer-15,Perdew_etal-16,PerSun-22}.

A different Lieb-Oxford inequality was recently proved in the 3D Coulomb case in~\cite{LewLieSei-22}. It reads
\begin{multline}
\iint_{\R^{3N}}\left(\sum_{1\leq j<k\leq N}\frac{1}{|\br_j-\br_k|}\right)\,\rd\bP(\br_1,...,\br_N)- \frac12\iint_{\R^{6}}\frac{\rho_\bP(\br)\rho_\bP(\br')}{|\br-\br'|}\,\rd\br\,\rd\br'\\ \geq-1.2490\int_{\R^3}\rho_\bP(\br)^{\frac43}\,\rd\br,
\label{eq:exchange}
\end{multline}
under the additional assumption that $\bP$ has \emph{negative correlations}, which means
\begin{equation}
 N(N-1)\iint_{\R^{3N-6}}\rd\bP(\br,\br',\br_3,...,\br_N)\leq \rho_\bP(\br)\,\rho_\bP(\br'),\qquad \text{for a.e. $\br,\br'\in\R^3$.}
 \label{eq:negative_corr}
\end{equation}
This condition is satisfied when $\bP$ is the square of a Slater determinant~\eqref{eq:Slater_det}, in which case the left side of~\eqref{eq:exchange} is called the \emph{exchange energy}  and the best constant is believed to be $1.09$~\cite{PerSun-22}. But many other states satisfy the condition~\eqref{eq:negative_corr}. In statistical mechanics, this is typical of gas phases~\cite{Ruelle} at high temperature. Since $1.25<1.44<c_{\rm LO}(1,3)$, this means that such states cannot provide the optimal Lieb-Oxford constant. In fact, we explain below how to obtain the constant $1.44$ from a solid (periodic) phase.

The \emph{indirect energy} is the equivalent of the exchange-correlation energy defined in the quantum case in Section~\ref{sec:KS}:
$$E_{\rm Ind}[\rho]=F_{\rm SCE}[\rho]- \frac12\iint_{\R^{2d}}w(\br-\br')\rho(\br)\rho(\br')\,\rd\br\,\rd\br'.$$
For power-law interactions it is always negative and bounded from below by a constant times $\int_{\R^d}\rho^{1+s/d}$.

\subsection{Constant densities and the classical Uniform Electron Gas}\label{sec:floating_crystal}
We discuss here the special case of densities which are constant over a finite set and the limit when this set fills the whole space. This is the classical equivalent of the \emph{Uniform Electron Gas} discussed in the quantum case in Section~\ref{sec:UEG} above. This special case will give us some lower bounds on the Lieb-Oxford constant $c_{\rm LO}(s,d)$, including the bound 1.44 in dimension $d=3$ stated in~\eqref{eq:estim_c_LO}.

The classical equivalent of Theorem~\ref{thm:UEG} was  proved in~\cite{LewLieSei-18}.

\begin{theorem}[The classical Uniform Electron Gas energy~\cite{LewLieSei-18}]\label{thm:UEG_classical}
Assume that $w(\br)=|\br|^{-s}$ with $0<s<d$ in dimension $d\geq1$. Let $\rho_0>0$. Let $\Omega$ be a fixed open convex set of unit volume $|\Omega|=1$.
Then there exists a universal constant $c_{\rm UEG}(s,d)>0$ such that
\begin{align}
&\lim_{L\to\ii}L^{-d}\left( F_{\rm GSCE}[\rho_0\1_{L\Omega}]-\frac{\rho_0^2}2\iint_{(L\Omega)^2}\frac{\rd\br\,\rd\br'}{|\br-\br'|^s}\right)\nn\\
&\qquad\qquad=\lim_{\substack{L\to\ii\\ L^d\in\N/\rho_0}}L^{-d}\left( F_{\rm SCE}[\rho_0\1_{L\Omega}]-\frac{\rho_0^2}2\iint_{(L\Omega)^2}\frac{\rd\br\,\rd\br'}{|\br-\br'|^s}\right)\nn\\
&\qquad\qquad=c_{\rm UEG}(s,d)\rho_0^{1+\frac{s}{d}}
 \label{eq:UEG_classical}
\end{align}
In particular we obtain $c_{\rm LO}(s,d)\geq -c_{\rm UEG}(s,d)$.
\end{theorem}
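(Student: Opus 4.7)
The plan is to study the ``indirect'' functional
\[
I[\rho] := F_{\rm GSCE}[\rho] - \frac{1}{2}\iint_{\R^d\times\R^d}\frac{\rho(\br)\rho(\br')}{|\br-\br'|^s}\,d\br\,d\br',
\]
so that the quantities under the limits in~\eqref{eq:UEG_classical} are exactly $L^{-d}I[\rho_0\unn_{L\Omega}]$. The whole argument is driven by a single subadditivity property: for disjoint open $A,B\subset\R^d$,
\[
I[\rho_0\unn_{A\cup B}] \leq I[\rho_0\unn_A] + I[\rho_0\unn_B].
\]
To prove it, take near-optimal grand-canonical states $\mathbf{P}^A,\mathbf{P}^B$ for the two pieces and form their symmetrized tensor product. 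It has density $\rho_0\unn_{A\cup B}$, and its interaction energy equals the two intra-region energies plus the cross term $\rho_0^2\iint_{A\times B}|\br-\br'|^{-s}\,d\br\,d\br'$; this cross term is exactly what the Hartree integral of $\rho_0\unn_{A\cup B}$ contributes beyond the Hartree integrals of $\rho_0\unn_A$ and $\rho_0\unn_B$, so it cancels in $I$.

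Specializing to cubes yields $L^{-d}I[\rho_0\unn_{LQ}]\leq \ell^{-d}I[\rho_0\unn_{\ell Q}]$ whenever $\ell$ divides $L$. Together with the Lieb-Oxford inequality~\eqref{eq:LO}, which gives the uniform lower bound $L^{-d}I[\rho_0\unn_{LQ}]\geq -c_{\rm LO}(s,d)\rho_0^{1+s/d}$, a standard Fekete-type argument produces a limit $c_{\rm UEG}(s,d)\rho_0^{1+s/d}$ for cubes. The scaling $\br\mapsto\rho_0^{1/d}\br$ maps density $\rho_0$ to density $1$ and rescales the Riesz interaction by $\rho_0^{s/d}$, which identifies the $\rho_0$-dependence as the factor $\rho_0^{1+s/d}$ with $c_{\rm UEG}(s,d)$ independent of $\rho_0$. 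To extend to an arbitrary convex $\Omega$ of unit volume, I would sandwich $L\Omega$ between a disjoint union of small interior cubes and a covering by small exterior cubes; convexity of $\Omega$ guarantees that the symmetric difference of these two approximations has measure $O(L^{d-1}\ell)$, so subadditivity combined with the Lieb-Oxford lower bound pinches $L^{-d}I[\rho_0\unn_{L\Omega}]$ to the cube limit, independently of $\Omega$.

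For the canonical equality, $F_{\rm GSCE}\leq F_{\rm SCE}$ is automatic. The reverse inequality uses tensor products of canonical optimizers on the interior small cubes with integer particle numbers close to $\rho_0\ell^d$; the hypothesis $\rho_0 L^d\in\N$ permits distributing the rounding across a thin boundary layer so that the total particle number equals $N=\rho_0 L^d$, and the cost of the adjustment is of boundary order. Finally, applying~\eqref{eq:LO} directly to $\rho=\rho_0\unn_{L\Omega}$ and dividing by $L^d$ gives
\[
L^{-d}I[\rho_0\unn_{L\Omega}]\geq -c_{\rm LO}(s,d)\,\rho_0^{1+s/d};
\]
passing to the limit using~\eqref{eq:UEG_classical} yields $c_{\rm UEG}(s,d)\geq -c_{\rm LO}(s,d)$, proving $c_{\rm LO}(s,d)\geq -c_{\rm UEG}(s,d)$.

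The principal obstacle is the long-range character of the Riesz kernel for $s<d$ in the shape-independence step: cross interactions between distant small cubes are not absolutely integrable, so they cannot be treated as ``error terms'' in isolation. The saving comes precisely from the Hartree subtraction defining $I$, which absorbs the dominant mean-field part of these cross interactions; only a genuine correlation correction concentrated near the boundary remains, and that is controlled by the $O(L^{d-1})$ boundary measure. A secondary but still delicate issue is the canonical upper bound, where producing a canonical state with the \emph{exact} density $\rho_0\unn_{L\Omega}$ and integer total particle number simultaneously requires a careful construction of a thin adjustment layer near $\partial(L\Omega)$ whose energetic cost must be shown to be $o(L^d)$.
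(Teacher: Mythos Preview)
The paper does not actually prove Theorem~\ref{thm:UEG_classical}; it merely states the result and cites~\cite{LewLieSei-18} for the proof, adding only the remark that the same limit holds for more general Fisher-regular domain sequences. So there is no ``paper's own proof'' to compare against.

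That said, your sketch follows exactly the strategy of~\cite{LewLieSei-18}: work with the indirect energy $I[\rho]$, establish subadditivity on disjoint unions via grand-canonical tensor products (the Hartree cross term cancels perfectly, which is the whole point of subtracting the direct energy), combine with the Lieb--Oxford lower bound and a Fekete argument to get the limit on cubes, identify the $\rho_0^{1+s/d}$ scaling, and then pass to general convex $\Omega$ by inner/outer cube packing. Your shape-independence argument is correct in spirit, but one detail deserves care: for the \emph{lower} bound on $\liminf L^{-d}I[\rho_0\unn_{L\Omega}]$ you cannot just ``pinch'' with the Lieb--Oxford inequality (that only gives the non-sharp constant $-c_{\rm LO}$). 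The right move is to embed $L\Omega$ in a large cube $LQ'$, use subadditivity in the form $I[\rho_0\unn_{LQ'}]\leq I[\rho_0\unn_{L\Omega}]+I[\rho_0\unn_{LQ'\setminus L\Omega}]$, and then upper-bound $I[\rho_0\unn_{LQ'\setminus L\Omega}]$ by tiling the complement with $\ell$-cubes (using $I\leq 0$ to discard the leftover boundary pieces). Sending first $L\to\infty$ and then $\ell\to\infty$ makes the $|Q'|$-dependent terms cancel exactly. Your discussion of the long-range obstacle is slightly off-target: once you work with $I$, the cross interactions between distant cubes vanish \emph{identically} in the subadditivity step, so there is no residual ``correlation correction near the boundary'' to control---the only boundary error comes from the volume of the uncovered region, handled by $I\leq 0$ and Lieb--Oxford. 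The canonical part is indeed more delicate and your outline of redistributing rounding errors into a thin layer is the correct idea, also carried out in~\cite{LewLieSei-18}.
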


The constant $c_{\rm UEG}(1,3)$ is the one which has appeared before in Theorem~\ref{thm:prop_f}. At low density, the quantum UEG behaves like a classical gas by an equivalent of Theorem~\ref{thm:CV_SCE} for infinite systems~\cite{LewLieSei-18}.

Note that the classical canonical and grand-canonical functionals are known to give the same thermodynamic limit. In the quantum case this is not yet known. We have stated the theorem for a fixed domain $\Omega$ which is scaled but the same result holds for a general sequence $\Omega_L$ that has a regular boundary in the sense of Fisher~\cite{LewLieSei-18}.

Except in dimensions $d\in\{1,8,24\}$ and $\max(0,d-2)\leq s<d$, some special cases to which we will come back, the constant $c_{\rm UEG}(s,d)$ is unknown. In order to get upper bounds on $c_{\rm UEG}(s,d)$, we need to construct trial states. The idea is to use a \emph{floating crystal} similar to~\eqref{eq:floating_crystal_1D}, that is, to place the particles on a lattice and then average over translations to obtain a constant density.

Let $\mathscr{L}\subset\R^d$ be a lattice of normalized unit cell $Q$. We then only retain the points of the lattice intersecting the large cube $C_L=(-L/2,L/2)^d$ and average over the translations of this finite lattice over $Q$. This way we obtain a trial state which is constant over the union of the corresponding translates of $Q$. In general this is only an approximation of $C_L$ but since the limit~\eqref{eq:UEG_classical} is  insensitive to the type of domains, this will not create any difficulty. The trial state is, therefore, given by
\begin{equation}
 \bP_{\mathscr{L},L}:=\text{Sym}\int_{Q}\bigotimes_{\ell\in\mathscr{L}\cap C_L}\delta_{\ell +\by}\,\rd\by.
 \label{eq:floating_Wigner}
\end{equation}
Then $\bP_{\mathscr{L},L}$ has the constant density
\begin{equation}
 \rho_{\bP_{\mathscr{L},L}}=\1_{\Omega_L}\qquad\text{over the set }\qquad \Omega_L=\bigcup_{\ell\in\mathscr{L}\cap C_L}Q+\ell.
 \label{eq:Omega_L}
\end{equation}
The state is as displayed in Figure~\ref{fig:floating}.
Note that the energy of the probability measure $\bP_{\mathscr{L},L}$ is simply the interaction of the lattice points, since the interaction potential is translation-invariant:
$$\int_{(\R^d)^N}\sum_{1\leq j<k\leq N}\frac{1}{|\br_j-\br_k|^s}\,\rd\bP_{\mathscr{L},L}=\frac12 \sum_{\substack{\ell\neq\ell'\\ \in \mathscr{L}\cap C_L}}\frac{1}{|\ell-\ell'|^s}.$$

It is instructive to see first what happens in the short range case $s>d$. Then the energy per unit volume converges to
\begin{equation}
\lim_{L\to\ii}\frac1{2|\Omega_L|} \sum_{\substack{\ell\neq\ell'\\ \in \mathscr{L}\cap C_L}}\frac{1}{|\ell-\ell'|^s}=\frac12\sum_{\ell\in\mathscr{L}\setminus\{0\}}\frac{1}{|\ell|^s}=:\zeta_{\mathscr{L}}(s),\qquad s>d.
 \label{eq:Epstein}
\end{equation}
The function on the right side is called the \emph{Epstein Zeta function}~\cite{Epstein-03,BorGlaMPh-13,BlaLew-15} and it is the natural generalization to $\R^d$ of the usual Riemann Zeta function, with which it coincides when $d=1$ (hence $\mathscr{L}=\Z$). It turns out that the limit in the long range case can be expressed with the (analytic extension) of $\zeta_{\mathscr{L}}$, for potentials decaying to zero at infinity faster than Coulomb. Something special is happening at $s=d-2$.

\begin{theorem}[Indirect energy of the floating Wigner crystal~{\cite{BorBorShaZuc-88,BorBorSha-89,BorBorStr-14,LewLie-15,Lauritsen-21,Lewin-22}}]\label{thm:shift}
Let $d-2\leq s<d$ in dimension $d\geq3$ and $0<s<d$ in dimensions $d=1,2$. Let $\mathscr{L}\subset\R^d$ be a lattice with a normalized unit cell $Q$ having no dipole and no quadrupole moment:
$$\int_{Q}\br\,\rd\br=0,\qquad \int_{Q}r_ir_j\,\rd\br=\frac{\delta_{ij}}{d}\int_Q|\br|^2\,\rd\br.$$
Then the indirect energy per unit volume of the floating Wigner crystal~\eqref{eq:floating_Wigner} converges to
\begin{multline}
\lim_{L\to\ii}|\Omega_L|^{-1}\left(\dps\frac12 \sum_{\substack{\ell\neq\ell'\\\in \mathscr{L}\cap C_L}}\frac{1}{|\ell-\ell'|^s}-\frac12\iint_{(\Omega_L)^2}\frac{\rd\br\,\rd\br'}{|\br-\br'|^s}\right)\\
=\begin{cases}
\zeta_\mathscr{L}(s)&\text{for $s>d-2$,}\\
\zeta_\mathscr{L}(d-2)+\dps\frac{|\mathbb{S}^{d-1}|}{2d}\int_Q|\br|^2\,\rd\br&\text{for $s=d-2$,}\\
\end{cases}
\label{eq:computation_floating}
\end{multline}
where $\zeta_\mathscr{L}(s)$ is the analytic continuation to $\C\setminus\{d\}$ of the Epstein Zeta function on the right of~\eqref{eq:Epstein}, initially defined for $\Re(s)>d$.
\end{theorem}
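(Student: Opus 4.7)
The strategy is to compare the lattice sum with the continuum integral through a Taylor expansion that exploits the two moment hypotheses on $Q$, and then to identify the bulk constant with the Epstein zeta via a Mellin/Poisson representation.

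First, using $|Q|=1$ and $\Omega_L=\bigsqcup_{\ell\in\mathscr{L}\cap C_L}(Q+\ell)$, I would rewrite
\[
\iint_{\Omega_L\times\Omega_L}\frac{d\br\,d\br'}{|\br-\br'|^s}=\sum_{\ell,\ell'\in\mathscr{L}\cap C_L}\iint_{Q\times Q}\frac{d\bx\,d\by}{|\ell-\ell'+\bx-\by|^s},
\]
and separate the diagonal $\ell=\ell'$. The expression inside~\eqref{eq:computation_floating} becomes
\[
\tfrac12\!\!\sum_{\ell\neq\ell'\in\mathscr{L}\cap C_L}\!\! J(\ell-\ell') \;-\;\tfrac{N_L}{2}\iint_{Q\times Q}|\bx-\by|^{-s}\,d\bx\,d\by,
\]
with $N_L=|\mathscr{L}\cap C_L|$ and $J(\br):=|\br|^{-s}-\iint_{Q\times Q}|\br+\bx-\by|^{-s}\,d\bx\,d\by$. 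Taylor-expanding $|\br+\bu|^{-s}$ in $\bu=\bx-\by$ to second order, the linear term drops by $\int_Q\bx\,d\bx=0$, and the quadratic term, after contraction with the isotropic quadrupole, produces $\Delta|\br|^{-s}=s(s-d+2)|\br|^{-s-2}$. The result is the key asymptotic
\[
J(\br)=-\frac{s(s-d+2)}{d}\int_Q|\bx|^2\,d\bx\;|\br|^{-s-2}+O(|\br|^{-s-4}),\qquad|\br|\to\infty,
\]
whose prefactor vanishes precisely at $s=d-2$, mirroring the harmonicity of the Newton kernel $|\br|^{-(d-2)}$.

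For $s>d-2$ the bound $|J(\ell)|=O(|\ell|^{-s-2})$ is absolutely summable (since $s+2>d$), so standard Fisher-type boundary estimates on $C_L$ (sites in an $O(1)$-shell contribute only $O(N_L^{(d-1)/d})$) yield $N_L^{-1}\sum_{\ell\neq\ell'\in C_L}J(\ell-\ell')\to\sum_{\ell\in\mathscr{L}\setminus\{0\}}J(\ell)$, and the limit reduces to $\tfrac12\sum_{\ell\neq 0}J(\ell)-\tfrac12\iint_{Q^2}|\bx-\by|^{-s}d\bx d\by$. I identify this with $\zeta_\mathscr{L}(s)$ through the Mellin representation $|\br|^{-s}=\Gamma(s/2)^{-1}\int_0^\infty t^{s/2-1}e^{-t|\br|^2}dt$ combined with Poisson summation $\sum_\ell e^{-t|\ell+\bu|^2}=|Q|^{-1}\sum_{\xi\in\mathscr{L}^*}(\pi/t)^{d/2}e^{-\pi^2|\xi|^2/t}e^{2\pi i\xi\cdot\bu}$: the $\xi=0$ dual mode, after integration in $\bx,\by$, exactly cancels the self-energy and the $|\br|^{-s}$ piece of $J$, while the $\xi\neq 0$ modes reassemble the standard Ewald representation of the analytically continued $\zeta_\mathscr{L}(s)$.

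The main obstacle is the borderline $s=d-2$. Although $J(\ell)=O(|\ell|^{-s-4})$ at this value gives even better summability, the identification step of the previous paragraph is not uniform in $s$ near $d-2$: regularizing the $\xi=0$ Mellin integral by an Ewald splitting $\int_0^\infty=\int_0^\alpha+\int_\alpha^\infty$ at a fixed threshold $\alpha>0$ contributes an $\alpha$-dependent term which, for $s>d-2$, is absorbed by the vanishing Taylor coefficient, but which survives at $s=d-2$ through a $0\cdot\infty$ cancellation against the simple pole of $\zeta_\mathscr{L}(s+2)$ at $s+2=d$. Carrying out this bookkeeping carefully, the residual $\alpha$-independent constant that remains is precisely $\tfrac{|\mathbb{S}^{d-1}|}{2d}\int_Q|\br|^2\,d\br$, giving the claimed extra term. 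The technical heart is to control this double limit ($L\to\infty$ and $s\to d-2$) uniformly, ensuring the subleading $O(|\br|^{-s-4})$ Taylor remainder---whose coefficient involves the higher moments of $Q$---does not perturb the cancellation; this is where the assumption $d\geq 3$ enters, via the requirement $s=d-2\geq 1$.
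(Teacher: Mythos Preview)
The paper does not give a self-contained proof of this theorem; it is attributed to the cited works of Borwein \emph{et al.}\ and~\cite{LewLie-15}. What the paper does provide is a conceptual explanation through the Jellium reformulation~\eqref{eq:formula_Jellium_floating}: the indirect energy is the average over $\by\in Q$ of Jellium energies with the lattice shifted by $\by$; by Theorem~\ref{thm:no_shift_Jellium} each Jellium energy per unit volume tends to $\zeta_{\mathscr L}(s)$, and the shift at $s=d-2$ is identified as a genuine \emph{surface} contribution of order $L^{2(d-1)-s}$ coming from the dipole layer created on $\partial\Omega_L$ when the crystal is translated (the ``capacitor'' mechanism of Figure~\ref{fig:floating}).

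Your route---the cell decomposition, the discrepancy $J(\br)=|\br|^{-s}-\iint_{Q^2}|\br+\bx-\by|^{-s}\,d\bx\,d\by$, its multipole asymptotics, and the Ewald/Mellin identification---is a legitimate and genuinely different approach, closer to the original analysis in~\cite{BorBorShaZuc-88,BorBorSha-89}. For $s>d-2$ the outline is correct: the absolute summability of $J$ and a boundary count give convergence to $\tfrac12\sum_{\ell\neq0}J(\ell)-\tfrac12\iint_{Q^2}|\bx-\by|^{-s}$, and the Poisson/Ewald computation you sketch does recover $\zeta_{\mathscr L}(s)$, although making the $\xi=0$ cancellation rigorous (both sides are individually divergent for $s<d$) is not a one-line matter.

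The gap is at $s=d-2$. You locate the shift entirely in the bulk identification, through the ``$0\cdot\infty$'' between the vanishing Taylor coefficient $c(s)=-s(s-d+2)M/d$ and the pole of $\zeta_{\mathscr L}(s+2)$ at $s+2=d$. But if one actually computes that limit one finds $\lim_{s\to(d-2)^+}c(s)\cdot 2\zeta_{\mathscr L}(s+2)=-(d-2)\,|\mathbb S^{d-1}|\,M/d$, which would produce a shift $(d-2)\,|\mathbb S^{d-1}|M/(2d)$ rather than $|\mathbb S^{d-1}|M/(2d)$; these agree only when $d=3$. So for $d\geq4$ your mechanism alone cannot give the stated constant, and something is missing---precisely the surface dipole contribution that the paper's Jellium picture isolates and which your ``standard Fisher-type boundary estimates'' dismiss too quickly. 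Your assertion that the bookkeeping ``precisely'' yields $|\mathbb S^{d-1}|M/(2d)$ is therefore not justified by the argument you give. Finally, the restriction $d\geq3$ in the statement is not a technical requirement ``$s=d-2\geq1$'' as you suggest; it is simply that for $d\leq2$ the value $s=d-2\leq0$ lies outside the admissible range $0<s<d$.
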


A similar result holds for $-1\leq s\leq 0$ in $d=1$ and $s=0$ in $d=2$~\cite{Lauritsen-21,Lewin-22}.

The first divergent term in the lattice sum is the classical energy, which behaves like $N^{2-s/d}$ and depends on the shape of the chosen large domain $C_L$:
\begin{equation}
\frac12\iint_{(\Omega_L)^2}\frac{\rd\br\,\rd\br'}{|\br-\br'|^s}\sim_{L\to\ii}\frac{L^{2d-s}}{2}\iint_{(C_1)^2}\frac{\rd\br\,\rd\br'}{|\br-\br'|^s}.
\label{eq:main_term_lattice_sum}
\end{equation}
This is because the lattice sum is a Riemann sum for the corresponding integral at that scale. Replacing $C_L$ by another set changes this macroscopic term. Note that the analytic extension of~\eqref{eq:main_term_lattice_sum} is a $o(L^d)$ for $s>d$. This term probably exists in the short range case too, but it is lower order and it was not seen in the limit~\eqref{eq:Epstein}.

The theorem provides the next order term in the long range case $d-2 < s<d$. This is an extensive quantity (of the order of the volume) which has a limit independent of the shape $C_L$. This limit is simply the analytic extension of the short range energy. This is compatible with our picture that the classical energy~\eqref{eq:main_term_lattice_sum} is the leading term for $s<d$ but once it is removed, we are essentially back to~\eqref{eq:Epstein}.

At $s=d-2$ the picture changes. Another term of the order $L^{2(d-1)-s}$, which was lower order for all $s>d-2$, becomes relevant for the energy per unit volume at $s=d-2$ and dominates for $s<d-2$. As we will explain later, this is a kind of surface term.

In dimension $d=1$ we know from~\cite{ColPasMar-15} that the floating crystal is optimal and provides the minimal classical energy at constant density. Therefore we deduce from Theorem~\ref{thm:shift} that
\begin{equation}
c_{\rm UEG}(s,1)=\zeta(s),\qquad \text{for all $0<s<1$.}
\end{equation}
In particular, $-c_{\rm LO}(s,1)\leq\zeta(s)$.

In higher dimensions, the floating crystal is not known to be an exact minimum, and furthermore there are several possible crystals. Therefore  we only obtain the upper bound
\begin{equation}
 -c_{\rm LO}(s,d)\leq c_{\rm UEG}(s,d)\leq \min_{\mathscr{L}}\zeta_\mathscr{L}(s),\qquad \text{for }\max(0,d-2)<s<d.
 \label{eq:bound_LO}
\end{equation}
The minimum is over all lattices of normalized unit cell. It is expected that the last inequality should be an equality for some values of the dimension $d$ including $d=1,2,3$. So far this is only known in dimensions $d=8$ and $d=24$~\cite{CohKumMilRadVia-19_ppt,PetSer-20}. In dimension $d=2$ the minimum on the right of~\eqref{eq:bound_LO} is known to be achieved by the triangular lattice~\cite{Rankin-53,Cassels-59,Ennola-64,Diananda-64,Montgomery-88} whereas in dimension $d=3$, numerics indicates that it is achieved by the Body-Centered Cubic lattice (BCC) for $0<s\leq3/2$ and the Face-Centered Cubic lattice (FCC) for $3/2\leq s<3$~\cite{GiuVig-05,SarStr-06,BorGlaMPh-13,BlaLew-15}.

The surprising jump of the energy per unit volume~\eqref{eq:computation_floating} in the Coulomb case $s=d-2$ was first discovered in 1979 by Hall~\cite{Hall-79} based on an unpublished remark by Plaskett in 1959. The conendrum raised by Hall was discussed in several papers in the 80s, see for instance~\cite{DeWette-80,IhmCoh-80,HalRic-80,Hall-81,AlaJan-81,NijRui-88}. It was rediscovered in~1988 by Borwein \emph{et al}~\cite{BorBorShaZuc-88} and was recently revived and reformulated in~\cite[App.~B]{LewLie-15}. It has indeed always been assumed in the Physics and Chemistry literature that the floating crystal is a good trial state for the UEG, and that $c_{\rm UEG}(1,3)$ should even be equal to the BCC lattice energy, whose value is $\zeta_{\rm BCC}(1)\simeq -1.4442$ (see~\cite{ColMar-60} and \cite[p.~43]{GiuVig-05}). This value is used in most DFT functionals based on the Uniform Electron Gas. But the Coulomb potential is exactly the one for which the floating crystal behaves badly, by Theorem~\ref{thm:shift}.

Note that the jump exists and is unavoidable in 1D, where the floating crystal is known to be optimal, but it happens at the negative value $s=-1$. Indeed, for $w(r)=-|r|$ we have the expansion similar to Theorem~\ref{thm:shift}
\begin{equation}
F_{\rm SCE}\big[\rho_0\1_{[0,L]}\big]=-\frac{L^3(\rho_0)^2}2\int_0^1\int_0^1|x-y|\,\rd x\,\rd y+\frac{L(\rho_0)^2}6+O(1)
\label{eq:UEG_1D}
\end{equation}
with $1/6>-\zeta(-1)=1/12$.

In order to better understand what is going on, it is useful to reinterpret the result in terms of the \emph{Jellium model}~\cite{LieNar-75,Lewin-22}. In this model there is no constraint on the electronic density but the particles interact with a compensating uniform background of opposite charge. At density one, the corresponding energy is defined by
\begin{equation*}
\cE_{\rm Jel}(\Omega,\br_1,...,\br_N)=\sum_{1\leq j<k\leq N}\frac{1}{|\br_j-\br_k|^s}-\sum_{j=1}^N\int_{\Omega}\frac{\rd\br'}{|\br_j-\br'|^s}+\frac12\iint_{\Omega\times\Omega}\frac{\rd\br\,\rd\br'}{|\br-\br'|^s}
\end{equation*}
where $\Omega$ is any measurable set of volume $|\Omega|=N$, representing the uniform background.
A short calculation  shows that the indirect energy of the floating crystal can be written in the form
\begin{equation}
 \frac12 \sum_{\substack{\ell\neq\ell'\\ \in \mathscr{L}\cap C_L}}\frac{1}{|\ell-\ell'|^s}-\frac12\iint_{(\Omega_L)^2}\frac{\rd\br\,\rd\br'}{|\br-\br'|^s}=\int_{Q}\cE_{\rm Jel}\big(\Omega_L,(\mathscr{L}\cap C_L)+\by\big)\,\rd\by.
\label{eq:formula_Jellium_floating}
 \end{equation}
In other words it is the average of Jellium energies where the lattice points are moved over the fixed background $\Omega_L$. In this interpretation it becomes clear why the averaging over $\by$ induces the shift in~\eqref{eq:computation_floating}: moving the particles away from the center of the unit cells is not at all energetically favorable. When the particles are moved in one direction this creates a large excess of negative charges on one side and a corresponding excess of background charge on the opposite side (Figure~\ref{fig:floating}). These two opposite boundary charges have an interaction energy proportional to $(L^{d-1})^2/L^s=L^{2(d-1)-s}$ which is exactly of the order of the volume in the Coulomb case $s=d-2$ and grows faster for $s<d-2$. On the other hand, when the particles are placed exactly at the center of the unit cells, one recovers the analytic extension of the short range energy for all $s>d-4$.

\begin{figure}[t]
 \includegraphics[width=7cm]{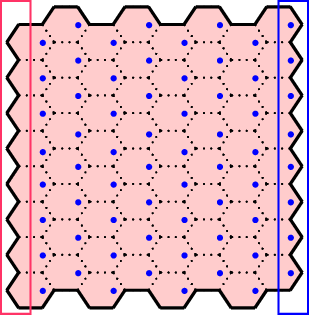}
\caption{A two-dimensional picture of the Jellium model. The dots represent the point particles, which are placed on a finite subset of a lattice $\mathscr{L}$. The colored set is the union of the corresponding unit cells and it represents a uniform background charge distribution of opposite charge. The indirect energy of the floating crystal is obtained after integrating the position of the lattice over the unit cell $Q$ as in~\eqref{eq:formula_Jellium_floating}. When the lattice is not centered, this results in an excess of point charges on one side and an excess of background charge on the other side, indicated by the two rectangles. These large boundary charge fluctuations are responsible for the shift in Theorem~\ref{thm:shift} at $s=d-2$.\label{fig:floating}}
\end{figure}

\begin{theorem}[Jellium energy of the clamped Wigner crystal~{\cite{LieNar-75,BorBorShaZuc-88,BorBorSha-89,BorBorStr-14,LewLie-15,BlaLew-15,Lauritsen-21,Lewin-22}}]\label{thm:no_shift_Jellium}
Assume that $d\geq1$. Let $\mathscr{L}\subset\R^d$ be a lattice satisfying the same assumptions as in Theorem~\ref{thm:shift}. Then the Jellium energy per unit volume of the Wigner crystal clamped at the center of the unit cells, converges to
\begin{equation}
\lim_{L\to\ii}\frac{\cE_{\rm Jel}\big(\Omega_L,\mathscr{L}\cap C_L\big)}{|\Omega_L|}=\zeta_\mathscr{L}(s)
\label{eq:computation_Jellium}
\end{equation}
for all $\max(0,d-4)<s<d$.
\end{theorem}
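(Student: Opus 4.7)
My plan follows the strategy of Theorem~\ref{thm:shift} but exploits the fact that clamping the particles (no averaging over $\by\in Q$) improves the multipole structure of the elementary neutral charge $\nu_0:=\delta_0-\1_Q$. This is what both widens the permissible range to $s>d-4$ and eliminates the shift term that appears in Theorem~\ref{thm:shift}.

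First I would decompose $\Omega_L=\bigcup_{\ell\in\mathscr{L}\cap C_L}(Q+\ell)$ and, using $|Q|=1$, rewrite the Jellium energy as a pair interaction between the neutral distributions $\nu_\ell:=\delta_\ell-\1_{Q+\ell}$. Separating the diagonal $\ell=\ell'$ from the off-diagonal contributions and using translation invariance gives the identity
\[
\cE_{\rm Jel}(\Omega_L,\mathscr{L}\cap C_L)\;=\;\tfrac12\!\!\sum_{\substack{\ell,\ell'\in\mathscr{L}\cap C_L\\\ell\neq\ell'}}\!\!h(\ell-\ell')\;+\;N\,c_{\rm self}(Q,s),
\]
where
\[
h(\bz):=\iint\frac{d\nu_0(\br)\,d\nu_0(\br')}{|\bz+\br-\br'|^s},\qquad c_{\rm self}(Q,s):=-\!\int_Q\!\frac{d\by}{|\by|^s}+\tfrac12\!\iint_{Q\times Q}\!\frac{d\br\,d\br'}{|\br-\br'|^s},
\]
both integrals being finite for $s<d$.

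Next comes the key multipole estimate. The assumptions $|Q|=1$ and $\int_Q\br\,d\br=0$ give $\int d\nu_0=0$ and $\int\br\,d\nu_0=0$. Expanding $(\br-\br')^{\otimes k}$ by the binomial formula, every summand carries a factor $\int\br^{\otimes m}\,d\nu_0$ with $m\leq 1$ as soon as $k\leq 3$, and hence vanishes. Consequently
\[
\int(\br-\br')^{\otimes k}\,d\nu_0(\br)\,d\nu_0(\br')=0\qquad\text{for }k=0,1,2,3,
\]
and Taylor expansion of $|\bz+\bw|^{-s}$ in $\bw$, together with $\nabla^{\otimes 4}(|\bz|^{-s})=O(|\bz|^{-s-4})$, yields $h(\bz)=O(|\bz|^{-s-4})$ as $|\bz|\to\infty$. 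Since $s>d-4$ gives $s+4>d$, the series $\sum_{\ell\in\mathscr{L}\setminus\{0\}}|h(\ell)|$ converges absolutely. A standard truncation argument (split pairs by distance, take $L\to\infty$ first and then remove the cutoff), together with $N/|\Omega_L|=1$, then produces
\[
\lim_{L\to\infty}\frac{\cE_{\rm Jel}(\Omega_L,\mathscr{L}\cap C_L)}{|\Omega_L|}\;=\;\tfrac12\!\sum_{\ell\in\mathscr{L}\setminus\{0\}}\!h(\ell)\;+\;c_{\rm self}(Q,s).
\]

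The remaining and deepest step is to identify this limit with $\zeta_\mathscr{L}(s)$. The plan is to view both sides as meromorphic in $s$, verify the identity where everything is absolutely convergent, and extend by analytic continuation to the strip $d-4<s<d$. Concretely, I would use the Mellin representation $|\br|^{-s}=\Gamma(s/2)^{-1}\int_0^\infty t^{s/2-1}e^{-t|\br|^2}\,dt$ to split each contribution into a short-range part (convergent for all $s$) and a long-range part, then apply Poisson summation on the Gaussian lattice sum in the long-range piece. After reorganization the integrals coming from the neutralizing background and the self-energy telescope into exactly the classical Ewald expansion of $\zeta_\mathscr{L}$. The isotropic quadrupole hypothesis enters at this stage: it ensures that no residual boundary term appears at the critical value $s=d-2$, which is what distinguishes the present theorem from Theorem~\ref{thm:shift}. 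The hard part is precisely this last identification: Steps 1 and 2 amount to algebra and a Taylor expansion once the correct vanishing moments of $\nu_0$ are noticed, while the final step requires controlled rearrangements of non-absolutely convergent lattice sums, in the spirit of the classical Ewald/Epstein analysis of the cited references \cite{BorBorShaZuc-88,BorBorSha-89,BorBorStr-14,BlaLew-15}.
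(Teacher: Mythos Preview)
The paper does not give its own proof of Theorem~\ref{thm:no_shift_Jellium}; it merely states the result with references to~\cite{LieNar-75,BorBorShaZuc-88,BorBorSha-89,BorBorStr-14,LewLie-15,BlaLew-15}. Your outline is the standard route taken in those references: decompose into neutral cell charges $\nu_\ell=\delta_\ell-\1_{Q+\ell}$, use vanishing low-order moments of $\nu_0\otimes\nu_0$ to obtain $h(\bz)=O(|\bz|^{-s-4})$ and hence absolute summability for $s>d-4$, pass to the limit, and identify the resulting lattice sum with the analytic continuation of $\zeta_{\mathscr{L}}$ via the Ewald/Mellin machinery. The algebra in your first identity and the moment computation for $k\leq 3$ are correct.

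One small inaccuracy worth flagging: you say the isotropic quadrupole hypothesis ``enters at this stage'' of the Ewald identification and is what prevents a residual boundary term at $s=d-2$. In fact your own moment argument shows that $h(\bz)=O(|\bz|^{-s-4})$ uses only $\int d\nu_0=0$ and $\int\br\,d\nu_0=0$, i.e.\ only the neutrality and the no-dipole condition; the quadrupole hypothesis is not invoked anywhere in your decay estimate, and it is likewise not needed for the Ewald identification of $\tfrac12\sum_{\ell\neq 0}h(\ell)+c_{\rm self}(Q,s)$ with $\zeta_{\mathscr{L}}(s)$. What distinguishes Theorem~\ref{thm:no_shift_Jellium} from Theorem~\ref{thm:shift} is precisely the clamping (no average over $\by\in Q$), which upgrades the decay from $O(|\bz|^{-s-2})$ to $O(|\bz|^{-s-4})$---not the quadrupole assumption. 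The latter is carried over from Theorem~\ref{thm:shift} in the statement but plays no role here. This does not affect the validity of your argument.
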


The floating crystal is really not a good trial state for the UEG. The conundrum raised in Theorem~\ref{thm:shift} was recently resolved in~\cite{CotPet-19b,LewLieSei-19b}.
Cotar and Petrache managed to prove in~\cite{CotPet-19b} that the (unknown) UEG energy $c_{\rm UEG}(s,d)$ is always continuous for $0<s<d$ and that it is equal to the (also unknown) Jellium energy for $d-2\leq s<d$. The proof of continuity in $s$ is very delicate and requires the use of advanced analytical techniques due to Fefferman and collaborators~\cite{Fefferman-85,Hugues-85,Gregg-89}. A short time later, the same result was obtained in~\cite{LewLieSei-19b} with a  different and much simpler argument. Here we only explain this argument for the special case of the floating crystal, that is, we show how to modify the trial state~\eqref{eq:floating_Wigner}  in order to cancel the shift appearing in~\eqref{eq:computation_floating}.

The main idea of~\cite{LewLieSei-19b} is to immerse the crystal in a thin layer of fluid. In other words, the floating crystal is melted close to the boundary in order to reduce the large charge fluctuations. The fluid gets displaced with the crystal when the latter is averaged over translations. Think of a block of ice filling completely a container. In order to move the ice it is necessary to melt it close to the container walls.

To describe this procedure, let us denote by $C'_L$ a slightly larger cubic container such that $\Omega_L+Q\subset C'_L$, where we recall that $\Omega_L$ is the union of the unit cells $Q+\ell$ with $\ell\in \mathscr{L}\cap C_L$. We can take $C'_L=C_{L+\lambda}$ where $\lambda$ is any fixed distance larger than the diameter of $Q$. We assume that the volume of the fluid $|C'_L\setminus \Omega_L|=M$ is an integer. It satisfies $M\leq CL^{d-1}\ll L^d$. The new trial state has the $N=|\Omega_L|$ particles on the floating crystal, translated by $\by\in Q$ as before in~\eqref{eq:floating_Wigner}, together with $M$ other particles forming an uncorrelated fluid in $C'_L\setminus(\Omega_L+\by)$, the set remaining after we have subtracted the union of all the cells centered at the particle positions (Figure~\ref{fig:floating2}):
\begin{equation}
 \tilde\bP_{\mathscr{L},L}=\text{Sym}\int_Q \bigotimes_{\ell\in\mathscr{L}\cap C_L}\delta_{\ell+\by}\otimes\left(\frac{\1_{C'_L\setminus (\Omega_L+\by)}}{M}\right)^{\otimes M}\,\rd\by.
 \label{eq:floating_crystal_modified}
\end{equation}
Note that the state of the fluid is correlated with the position $\by$ of the crystal.

\begin{figure}[t]
 \includegraphics[width=7cm]{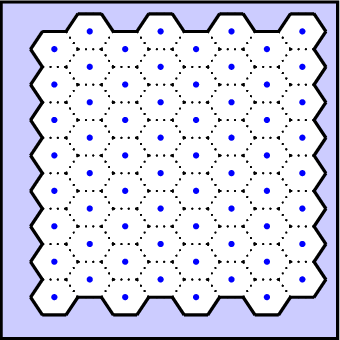}
 \caption{A two-dimensional picture of the modified floating crystal~\eqref{eq:floating_crystal_modified} from~\cite{LewLieSei-19b}. The dots represent the point particles which are at the centers of hexagons of volume one. As the whole crystal block is translated by $\by$, the incompressible fluid gets displaced to fill the remaining space $C'_L\setminus (\Omega_L+\by)$. The resulting density is only constant well inside the container. \label{fig:floating2}}
\end{figure}

\begin{theorem}[Indirect energy of the modified floating crystal~{\cite{LewLieSei-19b}}]
Let $\max(0,d-4)< s<d$ in dimension $d\geq1$ and $\mathscr{L}\subset\R^d$ a lattice satisfying the same assumptions as in Theorem~\ref{thm:shift}.
Then the indirect energy per unit volume of the modified floating Wigner crystal~\eqref{eq:floating_crystal_modified} converges to
\begin{multline}
\lim_{L\to\ii}|C'_L|^{-1}\bigg(\dps\int_{(\R^d)^N}\sum_{1\leq j<k\leq N}\frac{1}{|\br_j-\br_k|^s}\rd\tilde\bP_{\mathscr{L},L}\\
-\frac12\iint_{\R^{2d}}\frac{\rho_{\tilde\bP_{\mathscr{L},L}}(\br)\rho_{\tilde\bP_{\mathscr{L},L}}(\br')}{|\br-\br'|^s}\,\rd\br\,\rd\br'\bigg)
=\zeta_\mathscr{L}(s).
\label{eq:computation_floating_modified}
\end{multline}
In particular we obtain
$$-c_{\rm LO}(s,d)\leq c_{\rm UEG}(s,d)\leq \zeta_\mathscr{L}(s),\qquad\text{ for all $\max(0,d-4)< s<d$.}$$
\end{theorem}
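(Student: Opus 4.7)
The plan is to recast the indirect energy of $\tilde\bP_{\mathscr{L},L}$ as the Jellium energy of the \emph{clamped} Wigner crystal (which is controlled by Theorem~\ref{thm:no_shift_Jellium}) plus correction terms that I will show are $o(L^d)$. The key algebraic observation is that at fixed $\by\in Q$, the total particle density decomposes as
\[
\mu_\by+\nu_\by=\sigma_\by+\1_{C'_L},
\]
where $\mu_\by:=\sum_{\ell\in\mathscr{L}\cap C_L}\delta_{\ell+\by}$ is the point crystal, $\nu_\by:=\1_{C'_L\setminus(\Omega_L+\by)}$ is the fluid density, and $\sigma_\by:=\mu_\by-\1_{\Omega_L+\by}$ is precisely the Jellium charge appearing in $\cE_{\rm Jel}(\Omega_L+\by,\mathscr{L}\cap C_L+\by)$. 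Averaging over $\by$ gives $\rho=\1_{C'_L}+\eta$ with $\eta:=\1_{\Omega_L}-\1_{\Omega_L}\ast\1_Q$ supported in a $\mathrm{diam}(Q)$-neighborhood of $\partial\Omega_L$.

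Expanding $(\sigma_\by+\1_{C'_L})^2$, accounting for the $-\tfrac{1}{2M}D(\nu_\by,\nu_\by)$ correction arising from the product structure of the fluid (with $D(f,g):=\iint f(\br)g(\br')|\br-\br'|^{-s}\,d\br\,d\br'$), and using the translation invariance $\cE_{\rm Jel}(\Omega_L+\by,\mathscr{L}\cap C_L+\by)=\cE_{\rm Jel}(\Omega_L,\mathscr{L}\cap C_L)$, the fixed-$\by$ interaction energy becomes
\[
E(\by)=\cE_{\rm Jel}(\Omega_L,\mathscr{L}\cap C_L)+\int \sigma_\by\,V_{C'_L}+\tfrac12 W_{C'_L}-\tfrac{1}{2M}D(\nu_\by,\nu_\by),
\]
with $V_{C'_L}:=\1_{C'_L}\ast|\cdot|^{-s}$ and $W_{C'_L}:=D(\1_{C'_L},\1_{C'_L})$. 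Integrating over $\by\in Q$, using $\int_Q\sigma_\by\,d\by=\eta$, and subtracting $H(\rho)=\tfrac12 W_{C'_L}+D(\eta,\1_{C'_L})+\tfrac12 D(\eta,\eta)$, all the $V_{C'_L}$ and $W_{C'_L}$ cross terms cancel exactly and we obtain the clean identity
\[
\mathrm{Ind}=\cE_{\rm Jel}(\Omega_L,\mathscr{L}\cap C_L)-\tfrac12 D(\eta,\eta)-\tfrac{1}{2M}\int_Q D(\nu_\by,\nu_\by)\,d\by.
\]
Theorem~\ref{thm:no_shift_Jellium} yields $\cE_{\rm Jel}(\Omega_L,\mathscr{L}\cap C_L)/|C'_L|\to\zeta_\mathscr{L}(s)$ on the range $\max(0,d-4)<s<d$ (as $|\Omega_L|/|C'_L|\to 1$), so it remains to show the two correction terms are $o(L^d)$. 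The fluid piece is immediate: $\nu_\by$ sits in a fixed-width slab of $(d-1)$-area $\lesssim L^{d-1}$, so $D(\nu_\by,\nu_\by)\lesssim L^{2(d-1)-s}$ and hence $M^{-1}D(\nu_\by,\nu_\by)\lesssim L^{d-1-s}=o(L^d)$.

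The main obstacle is the bound $D(\eta,\eta)=o(L^d)$, because a direct surface estimate only gives $O(L^{2(d-1)-s})$, which fails already at $s=d-2$. Here the lattice assumptions inherited from Theorem~\ref{thm:shift} enter crucially. Decomposing $\eta=\sum_{\ell\in\mathscr{L}\cap C_L}\eta_\ell$ with $\eta_\ell(\br):=\1_{\ell+Q}(\br)-(\1_{\ell+Q}\ast\1_Q)(\br)$, the hypotheses $\int_Q\br\,d\br=0$ and $\int_Q r_i r_j\,d\br=\frac{\delta_{ij}}{d}\int_Q|\br|^2\,d\br$ force each $\eta_\ell$ to have vanishing monopole, vanishing dipole, and \emph{isotropic} quadrupole $-\frac{\delta_{ij}}{d}\int_Q|\br|^2\,d\br$. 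A Taylor expansion of $V_\eta:=\eta\ast|\cdot|^{-s}$ around each $\ell$ therefore kills the first two multipoles and reduces $\int\eta_\ell V_\eta$ to $-\frac{1}{2d}\int_Q|\br|^2\,d\br\cdot\Delta V_\eta(\ell)$ to leading order. Since $\Delta(|\br|^{-s})=s(s+2-d)|\br|^{-s-2}$ (or a Dirac mass at the Coulomb value $s=d-2$), $\Delta V_\eta$ is itself a Riesz-type convolution of $\eta$, and a further use of $\int\eta=0$ combined with the slow variation $|\nabla\sum_\ell|\ell-\br|^{-s-2}|\lesssim L^{d-s-3}$ across the $O(1)$-thickness of $\supp(\eta)$ produces the improved estimate $D(\eta,\eta)=O(L^{2d-s-4})$. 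This is $o(L^d)$ precisely when $s>d-4$, which is exactly the hypothesis of the theorem. The inequality chain $-c_{\rm LO}(s,d)\leq c_{\rm UEG}(s,d)\leq\zeta_\mathscr{L}(s)$ then follows from the upper bound just established (plugging $\tilde\bP_{\mathscr{L},L}$ into the definition of $c_{\rm UEG}$) combined with the Lieb-Oxford bound $-c_{\rm LO}\leq c_{\rm UEG}$ already recorded in Theorem~\ref{thm:UEG_classical}.
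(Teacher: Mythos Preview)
The paper itself does not give a proof of this theorem (it only introduces the modified trial state and refers to~\cite{LewLieSei-19b}), so there is no detailed argument to compare against. Your reduction via the identity
\[
\mathrm{Ind}=\cE_{\rm Jel}(\Omega_L,\mathscr{L}\cap C_L)-\tfrac12 D(\eta,\eta)-\tfrac{1}{2M}\int_Q D(\nu_\by,\nu_\by)\,d\by
\]
is correct and is exactly the mechanism by which the fluid layer cancels the shift: the dangerous cross term $D(\sigma_\by,\1_{C'_L})$ averages to $D(\eta,\1_{C'_L})$ and is absorbed into the Hartree subtraction. The fluid correction is indeed harmless.

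The gap is in your control of $D(\eta,\eta)$. Your decomposition $\eta=\sum_{\ell\in\mathscr{L}\cap C_L}\eta_\ell$ produces $\sim L^d$ pieces, \emph{all} of which are nonzero (each $\eta_\ell=\1_{\ell+Q}-\1_{\ell+Q}\ast\1_Q$ is a nontrivial bump), even though their sum is supported only in the boundary shell. Treating them termwise, the close-pair and self-interactions $D(\eta_\ell,\eta_{\ell'})$ with $|\ell-\ell'|=O(1)$ already contribute $O(L^d)$, so the multipole expansion cannot be applied piece by piece without first exploiting the bulk cancellation. Moreover, the lattice sum $S(\br)=\sum_\ell|\ell-\br|^{-s-2}$ that appears in your ``slow variation'' step is singular at every lattice point lying in $\supp(\eta)$, so the gradient bound $|\nabla S|\lesssim L^{d-s-3}$ is not available there.

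The conclusion $D(\eta,\eta)=o(L^d)$ for $s>d-4$ is nevertheless true. A clean route is in Fourier space: writing $\eta=\1_{\Omega_L}\ast(\delta_0-\1_Q)$ and using the moment hypotheses on $Q$ to get $1-\widehat{\1_Q}(k)=O(|k|^2)$ near $k=0$ (with isotropic coefficient), one finds
\[
D(\eta,\eta)=c\int_{\R^d}\big|\widehat{\1_{\Omega_L}}(k)\big|^2\,\big|1-\widehat{\1_Q}(k)\big|^2\,|k|^{s-d}\,dk
= O\!\big(L^{\max(2d-s-4,\,d-1)}\big),
\]
which is $o(L^d)$ precisely when $s>d-4$. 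This replaces your real-space multipole step and avoids both issues above. With that fix, the rest of your argument goes through, including the final inequality (modulo the routine observation, already implicit in Theorem~\ref{thm:UEG_classical}, that boundary perturbations of the density on a set of volume $O(L^{d-1})$ do not affect the thermodynamic limit defining $c_{\rm UEG}$).
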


For the BCC lattice in dimension $d=3$, one finds the claimed lower bound
$$c_{\rm LO}(1,3)\geq -\zeta_{\rm BCC}(1)\simeq 1.4442.$$
It is reasonable to conjecture that
$$e_{\rm UEG}(s,d)=\min_{\mathscr{L}}\zeta_{\mathscr{L}}(s)$$
for $d=1,2,3$ and all $0<s<d$, which amounts to saying that the uniform electron gas is always crystallized at zero temperature~\cite{BlaLew-15,Lewin-22}. The modified trial state~\eqref{eq:floating_crystal_modified} suggests that the system can only be a solid in the bulk. In a neighborhood of the boundary, it is probably a fluid because the particles have to be able to move sufficiently far away around the crystal to compensate the large charge fluctuations.

In this section we have explained some major difficulties encountered when trying to construct good trial states for the Uniform Electron Gas. Those are entirely due to the boundary, namely to the fact that we work with a finite piece of material in the physical space $\R^d$. If we set up the model on the torus, as is often done in practical calculations, these difficulties disappear~\cite[Sec.~IV.C]{Lewin-22}.

It is worth mentioning an estimate due to Lieb and Narnhofer~\cite{LieNar-75} in the Coulomb case $s=1$ in dimension $d=3$ which states that
\begin{equation}
 F_{\rm SCE}[\1_{\Omega}]\geq \frac12\iint_{\Omega\times\Omega}\frac{\rd\br\,\rd\br'}{|\br-\br'|}-\frac35 \left(\frac{9\pi}{2}\right)^{\frac13}|\Omega|
 \label{eq:LieNar}
\end{equation}
for \emph{any} open set $\Omega$ of integer volume. The constant $(3/5) ({9\pi}/{2})^{1/3}\simeq 1.4508$ is surprisingly close to the expected optimal value $-\zeta_{\rm BCC}(1)$ and it implies in any case that
$$\boxed{-1.4508\leq c_{\rm UEG}(1,3)\leq -1.4442.}$$
This is the bound that appeared in Theorem~\ref{thm:prop_f}. For negatively-correlated states as in~\eqref{eq:negative_corr}, the constant can be replaced by $(3/2) ({\pi}/{6})^{1/3}\simeq 1.2090$~\cite{LewLieSei-22}.

In dimension $d=3$ for $s=1$ it was conjectured in~\cite{OdaCap-07,RasPitCapPro-09} that the classical Uniform Electron Gas gives the optimal Lieb-Oxford constant, i.e., $c_{\rm LO}(1,3)=-c_{\rm UEG}(1,3)$.

\begin{remark}[Determinantal processes and the Dirac constant]\rm
Let $\rho=\rho_0\1_{C_L}$ with $N=\rho_0 L^d\in\N$. Instead of the floating crystal~\eqref{eq:floating_Wigner} and its modified version~\eqref{eq:floating_crystal_modified}, one can consider the square of a Slater determinant as a trial state:
$$\bP(\br_1,...,\br_N)=\frac{1}{\sqrt{N!}}L^{-Nd}|\det\left(e^{i\frac{2\pi}{L}\bk_i\cdot \br_j}\1_{C_L}(\br_j)\right)|^2,$$
where $\bk_1,...,\bk_N$ are $N$ distinct points in $\Z^d$ (our trial state contains no spin). We find
\begin{multline*}
F_{\rm SCE}[\rho_0\1_{C_L}]\leq \frac{\rho_0^2}2\iint_{(C_L)^2}w(\br-\br')\,\rd\br\,\rd\br'\\
-\frac1{2}\iint_{(C_L)^2}w(\br-\br')\left|L^{-d}\sum_{j=1}^Ne^{i\frac{2\pi}{L}\bk_j\cdot (\br-\br')}\right|^2\,\rd\br\,\rd\br'.
\end{multline*}
As for the free Fermi gas we choose all the points $\bk_i$ in a given ball centered at the origin.
For $w(\br)=|\br|^{-s}$ the second term behaves like $L^d\rho_0^{1+s/d}c_{\rm D}(s,d)$ where the \emph{Dirac constant}~\cite{Dirac-28b}
$$c_{\rm D}(s,d)=\frac1{2(2\pi)^d}\int_{\R^d}\frac{|\widehat{\1_{B_{k_ {\rm F}}}}(\br)|^2}{|\br|^s}\,\rd\br$$
is the exchange energy per unit volume of a free Fermi gas. Here the Fermi radius is $k_ {\rm F}=\sqrt{2(d+2)c_{\rm TF}(d)/d}$. This proves that
$$-c_{\rm LO}(s,d)\leq c_{\rm UEG}(s,d)\leq -c_{\rm D}(s,d)$$
but this bound is worse than the floating crystal. The particles are not correlated enough. In dimension $d=3$ with $s=1$ one finds $c_{\rm D}(1,3)=(3/4)\left(6/\pi\right)^{1/3}\simeq0.9305$. Recall that for an arbitrary determinantal point process, we have the better Lieb-Oxford inequality~\eqref{eq:exchange} from~\cite{LewLieSei-22}.
\end{remark}

\subsection{Local Density Approximation for the classical interaction energy}

We have discussed in the previous section the case of exactly constant densities and their limit of infinite volume. We now consider the case of slowly varying densities, which are assumed to be essentially constant over large sets.

First, we mention that the Lieb-Narnhofer bound~\eqref{eq:LieNar} from~\cite{LieNar-75} was generalized to arbitrary densities, in the form of lower bounds involving gradient-type corrections~\cite{BenBleLos-12,LewLie-15}. For instance, the bound
\begin{multline}
F_{\rm GSCE}[\rho]\geq \frac12\iint_{\R^{6}}\frac{\rho(\br)\rho(\br')}{|\br-\br'|}\,\rd\br\,\rd\br'
-\left(\frac35 \left(\frac{9\pi}{2}\right)^{\frac13}+\eps\right)\int_{\R^3}\rho(\br)^{\frac43}\,\rd\br\\-\frac{0.001206}{\eps^3}\int_{\R^3}|\nabla\rho(\br)|\,\rd\br
\end{multline}
was shown to hold in~\cite{LewLie-15} for any $\varepsilon>0$.

The following result gives a quantitative estimate on the grand-canonical classical energy for slowly varying densities.

\begin{theorem}[Local Density Approximation of the classical Coulomb energy{~\cite{LewLieSei-19}}]
Consider the case $w(\br)=|\br|^{-1}$ in dimension $d=3$. There exists a constant $C$ such that
\begin{multline}
 \left|F_{\rm GSCE}(\rho)-\frac12\iint_{\R^3\times\R^3}\frac{\rho(\br)\rho(\br')}{|\br-\br'|}\,\rd\br\,\rd\br'- c_{\rm UEG}(1,3)\int_{\R^3}\rho(\br)^{\frac43}\,\rd\br\right|\\
 \leq \eps \int_{\R^3}\big(\rho(\br)+\rho(\br)^{\frac43}\big)\,\rd\br
+\frac{C}{\eps^{7}}\int_{\R^3}|\nabla\rho^{\frac13}(\br)|^4\,\rd\br
 \label{eq:LDA_main_estim_classical}
\end{multline}
for every $\eps>0$ and every non-negative density $\rho\in L^1(\R^3)\cap L^{4/3}(\R^3)$ such that $\nabla \rho^{1/3}\in L^4(\R^3)$.
\end{theorem}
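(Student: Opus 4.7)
The plan is to establish the matching upper and lower bounds separately. Both arguments rest on the same idea: partition $\R^3$ into a grid of cubes of an intermediate side length $\ell$ (to be optimized in terms of $\eps$), approximate $\rho$ by its average $\bar\rho_i$ on each cube $Q_i$, and invoke the classical UEG thermodynamic limit of Theorem~\ref{thm:UEG_classical}, which tells us that for any cube of large enough side length the indirect energy of $\bar\rho_i\1_{Q_i}$ is approximately $c_{\rm UEG}(1,3)\,\bar\rho_i^{4/3}|Q_i|$.

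For the upper bound, I would build a grand-canonical trial state by placing in each cube $Q_i$ an almost optimal state for the UEG problem $F_{\rm GSCE}[\bar\rho_i\1_{Q_i}]$; a modified floating Wigner crystal immersed in a thin fluid layer as in~\eqref{eq:floating_crystal_modified} is the natural choice, since by the construction of Section~\ref{sec:floating_crystal} it achieves indirect energy $c_{\rm UEG}(1,3)\,\bar\rho_i^{4/3}|Q_i|+o(|Q_i|)$ without the spurious surface shift of Theorem~\ref{thm:shift}. Glueing these states independently across cubes (in the grand-canonical sense) gives a total state whose indirect energy is the sum of the within-cube indirect energies plus the between-cube Coulomb interaction, and the latter is almost exactly the piecewise-constant Hartree term because the within-cube states are nearly neutral at macroscopic scale. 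The remaining tasks are: $(i)$ bound the difference between the true Hartree energy $\tfrac12 D(\rho,\rho)$ and the piecewise-constant one, which is controlled by $\int|\rho-\bar\rho_i|$-type quantities on each cube, and $(ii)$ replace $\sum_i \bar\rho_i^{4/3}|Q_i|$ by $\int\rho^{4/3}$. Both are handled via Poincar\'e-type inequalities on each cube applied to $\rho^{1/3}$, which after summation yield the $\int|\nabla\rho^{1/3}|^4$ term (the exponent $4$ arising from H\"older's inequality with $\rho^{4/3}=(\rho^{1/3})^4$) and a polynomial correction $\eps^{-k}$ from the need to choose $\ell$ small with respect to the gradient scale but large with respect to $\bar\rho_i^{-1/3}$.

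For the lower bound, I would follow the same partition scheme but work on the dual / energy side. Given any grand-canonical probability $\mathbb{P}$ with density $\rho$, the Coulomb energy splits into within-cube and between-cube contributions. The between-cube part is handled by an Onsager-type smearing inequality~\eqref{eq:Onsager} applied cube by cube to reduce the point charges to smeared densities close to $\rho$, thereby recovering the true Hartree term minus a controllable self-energy. For the within-cube part, I would apply on each $Q_i$ the UEG lower bound (essentially Theorem~\ref{thm:UEG_classical} together with a subadditivity argument that allows us to move from the ideal cube-filling density $\bar\rho_i\1_{Q_i}$ to the actual restriction of $\rho$ to $Q_i$, with an error controlled by the Lieb-Oxford inequality~\eqref{eq:LO} and by $|\rho-\bar\rho_i|$). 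A sliding average over the origin of the grid then removes the arbitrariness of the partition, yielding exactly the constant $c_{\rm UEG}(1,3)$ and an error of the same shape as in the upper bound.

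The main obstacle is the lower bound, and specifically the long-range nature of Coulomb: we cannot truncate the interaction at scale $\ell$ without paying an extensive price, because $s=d-2$ sits exactly at the critical value at which boundary charge fluctuations of a cube contribute to the energy at volume order, as discussed in Section~\ref{sec:floating_crystal} and in Theorem~\ref{thm:shift}. The way out is the same one used to fix the floating-crystal shift: one must not treat each cube as an isolated charged system, but instead either pair it with a compensating background (Jellium viewpoint) or add an incompressible fluid collar as in~\eqref{eq:floating_crystal_modified}; concretely, in the lower bound this is implemented by choosing the Onsager test charge in each cube to exactly match the cube's total charge, so that the cross terms reconstruct the Hartree energy to leading order and the error terms are genuinely surface-like, of size $|\partial Q_i|$, which sum to a gradient-type quantity fitting within the stated error.
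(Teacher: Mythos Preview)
The paper does not actually prove this theorem; it is quoted from~\cite{LewLieSei-19}, and the surrounding text only records the statement and a few remarks. There is therefore no in-paper proof to compare against. Your outline---localize to a grid of cubes of side $\ell$, invoke the UEG thermodynamic limit cube by cube, and absorb the discrepancy between $\rho$ and its piecewise-constant approximation into Poincar\'e-type estimates on $\rho^{1/3}$ producing the $\int|\nabla\rho^{1/3}|^4$ error---is the correct high-level strategy and matches the approach of the cited reference.

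Two points need sharpening. For the upper bound, the glued floating-crystal state has density $\sum_i\bar\rho_i\1_{Q_i}$, not $\rho$, so it bounds $F_{\rm GSCE}[\bar\rho]$ rather than $F_{\rm GSCE}[\rho]$; you must either correct the density exactly (e.g.\ take $\bar\rho_i\le\min_{Q_i}\rho$ and superpose an independent fluid of density $\rho-\bar\rho_i$, which the grand-canonical setting allows) or prove a quantitative stability of $F_{\rm GSCE}$ under the replacement $\rho\leadsto\bar\rho$.

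The more serious issue is the lower bound. Splitting the Coulomb energy into within-cube and between-cube parts and invoking Onsager for the latter does not by itself recover the sharp constant $c_{\rm UEG}(1,3)$: the within-cube part is $\ge F_{\rm GSCE}[\rho|_{Q_i}]$, but bounding this below by $\tfrac12 D(\rho|_{Q_i},\rho|_{Q_i})+c_{\rm UEG}(1,3)\int_{Q_i}\rho^{4/3}$ is essentially the theorem itself on a cube, and the subadditivity of the indirect energy goes the wrong way here (it gives an \emph{upper} bound on $E_{\rm Ind}[\rho]$ in terms of the cube contributions, not a lower one). The actual mechanism in~\cite{LewLieSei-19} is a Fefferman--Gregg / Graf--Schenker style decomposition of $|\br|^{-1}$ as an average over translations and scales of localized, positive-type kernels; this is what makes the within-cube lower bound by the UEG energy legitimate and is exactly the rigorous form of the ``pair each cube with its own neutralizing background'' idea you sketch in your last paragraph. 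You have correctly identified the obstacle (the $s=d-2$ criticality) and the cure (enforced local neutrality), but the implementation is the heart of the proof and should be named explicitly rather than folded into an Onsager step.
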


The gradient term can be replaced by $\eps^{-b}\int_{\R^3}|\nabla\rho^\theta(\br)|^p\,\rd\br$ for any $p>3$ and $0<\theta<1$ such that $\theta p\geq4/3$, with $b=\max\{2p-1,(1+3\theta)p-4\}$.

If we fix a density $\rho$ with $\int_{\R^3}\rho=1$ and take $\rho_N(\br)=\rho(\br N^{-1/3})$, then we find that
\begin{multline}
F_{\rm GSCE}\big[\rho(\cdot N^{-1/3})\big]=\frac{N^{\frac53}}2\iint_{\R^3\times\R^3}\frac{\rho(\br)\rho(\br')}{|\br-\br'|}\,\rd\br\,\rd\br'\\+c_{\rm UEG}(1,3)N\int_{\R^3}\rho(\br)^{\frac43}\,\rd\br+O\left(N^{\frac56}\right).
\label{eq:expansion_N_classical}
\end{multline}
Compare this expansion with the quantum case~\eqref{eq:LDA_main_rescaled_density}.

It is an open problem to prove an estimate similar to~\eqref{eq:LDA_main_estim_classical} for the canonical SCE functional $F_{\rm SCE}$. However, a non quantitative convergence similar to~\eqref{eq:expansion_N_classical} is known even for $w(\br)=|\br|^{-s}$ in all dimensions $d>s>0$:
\begin{multline}
F_{\rm SCE}\big[\rho(\cdot N^{-1/d})\big]=\frac{N^{2-\frac{s}d}}2\iint_{\R^d\times\R^d}\frac{\rho(\br)\rho(\br')}{|\br-\br'|^s}\,\rd\br\,\rd\br'\\+c_{\rm UEG}(s,d)N\int_{\R^3}\rho(\br)^{1+\frac{s}d}\,\rd\br+o(N).
\label{eq:expansion_N_classical2}
\end{multline}
In the Coulomb case for $d=3$,~\eqref{eq:expansion_N_classical2} was  proved in~\cite{LewLieSei-18} whereas general power-law potentials were covered in~\cite{CotPet-19}.

%%%%%%%%%%%%%%%%%%%%%%%%%%%%%%%%%%%%%%%%%%%%%%%%%%%%%%%%%
%%%%%%%%%%%%%%%%%%%%%%%%%%%%%%%%%%%%%%%%%%%%%%%%%%%%%%%%%
\section{Upper and lower bounds on the Levy-Lieb functionals}\label{sec:TFD_is_a_lower_bound}
%%%%%%%%%%%%%%%%%%%%%%%%%%%%%%%%%%%%%%%%%%%%%%%%%%%%%%%%%
%%%%%%%%%%%%%%%%%%%%%%%%%%%%%%%%%%%%%%%%%%%%%%%%%%%%%%%%%

In the previous sections we have studied the kinetic and interaction energies separately and reviewed several known upper and lower bounds. Since the minimum of a sum is always larger or equal to the sum of the minima, we easily obtain lower bounds on the full Levy-Lieb functional. For instance, putting together the Lieb-Thirring and Lieb-Oxford inequalities, we find
\begin{multline*}
F_{\rm GC}[\rho]\geq c_{\rm LT}(d) q^{-\frac2d}\int_{\R^d}\rho(\br)^{1+\frac2d}\,\rd\br+\frac12\iint_{\R^d\times\R^d}\frac{\rho(\br)\rho(\br')}{|\br-\br'|^s}\,\rd\br\,\rd\br'\\
-c_{\rm LO}(s,d)\int_{\R^3}\rho(\br)^{1+\frac{s}d}\,\rd\br,
\end{multline*}
for the interaction $w(\br)=|\br|^{-s}$ in dimension $d>s>0$.
The right side takes the same form as the \emph{Thomas-Fermi-Dirac functional}, except for the value the two constants in front of the terms $\rho^{1+2/d}$ and $\rho^{1+s/d}$. One would obtain the right constant in front of the $\rho^{1+2/d}$ in dimension $d\geq3$ if the Lieb-Thirring conjecture mentioned in Section~\ref{sec:LT} had been proved. We refer to~\cite{Lieb-81b}
%{\color{red} Chapter~?}
for a review of results on Thomas-Fermi-type functionals. From Nam's bound~\eqref{eq:Lieb-Thirring_Nam} one can replace the Lieb-Thirring constant by $(1-\eps)c_{\rm TF}(d)$ at the expense of a (negative) gradient correction.

Upper bounds are more complicated because a trial state that works for $T[\rho]$ could be very bad for the classical energy and conversely. We have seen in Section~\ref{sec:kinetic} that the set of one-particle density matrices that are $N$-representable by a mixed states is exactly given by the operators $\gamma=\gamma^*$ such that $0\leq\gamma\leq1$ and $\tr(\gamma)=N$. This is because any such operator is the convex combination of rank-$N$ projections which correspond to Slater determinants. But for estimating the interaction energy we need some more information on the two-particle density.

An explicit convex combination which provided a bound on the two-particle density matrix was derived in~\cite{Lieb-81}. The idea is the following. Assume for simplicity that $\gamma=\sum_{i=1}^Kn_i|u_i\rangle\langle u_i|$ has finite rank $K$. By Horn's lemma~\cite{Lieb-81} there exists a set of $N$ orthonormal vectors $V^1,...,V^N$ in $\C^K$ such that $\sum_{k=1}^N|V^k_i|^2=n_i$ for all $i$. Define then the new orbitals $f^k_\theta=\sum_{j=1}^Ke^{i\theta_j}V^k_ju_j$ where $\theta=(\theta_1,...,\theta_K)\in (0,2\pi)^K$ and the associated trial mixed state
$$\Gamma=\frac{1}{(2\pi)^K}\int_0^{2\pi}\rd\theta_1\cdots \int_0^{2\pi}\rd\theta_K\;\big|f_\theta^1\wedge\cdots \wedge f_\theta^N\big\rangle\big\langle f_\theta^1\wedge\cdots\wedge f_\theta^N\big|.$$
A computation shows that its one-particle density matrix is exactly $\gamma$, whereas its two-particle density matrix is
\begin{equation}
\Gamma^{(2)}=\mathscr{A} (\gamma\otimes\gamma) \mathscr{A} -\sum_{1\leq j<k\leq K}\left|\sum_{i=1}^NV^i_j\overline{V^i_k}\right|^2|u_j\wedge u_k\rangle\langle u_j\wedge u_k|,
\label{eq:comput_Gamma_2_Lieb}
\end{equation}
where $\mathscr{A}$ is the orthogonal projection onto the anti-symmetric two-particle subspace $L^2(\R^d\times\Z_q,\C)\wedge L^2(\R^d\times\Z_q,\C)$.
The first operator has the integral kernel
\begin{equation}
\mathscr{A} (\gamma\otimes\gamma) \mathscr{A}(\bx_1,\bx_2;\by_1,\by_2)=\gamma(\bx_1,\by_1)\gamma(\bx_2,\by_2)-\gamma(\bx_1,\bx_2)\gamma(\by_1,\by_2).
\label{eq:Gamma2}
\end{equation}
The operator $\mathscr{A} (\gamma\otimes\gamma) \mathscr{A}$ is exactly the two-particle density matrix of the unique quasi-free state over the Fock space that has the one-particle density $\gamma$~\cite{BacLieSol-94}. Integrating against the potential $w$, the first term in~\eqref{eq:Gamma2} gives the classical energy (Hartree term) whereas the second gives the exchange term, which is non-positive for $w\geq0$.

Since the last term in~\eqref{eq:comput_Gamma_2_Lieb} is a non-positive operator, the following was obtained in~\cite{Lieb-81} after using the density of finite rank operators.

\begin{lemma}[Mixed canonical states and quasi-free states~\cite{Lieb-81}]
Let $0\leq\gamma=\gamma^*\leq1$ be a one-particle density matrix such that $\tr(\gamma)=N\in \N$. Then there exists a mixed state $\Gamma$ over the fermionic $N$-particle space $\bigwedge_1^NL^2(\R^d\times\Z_q,\C)$ such that its one-particle density matrix is $\gamma$ and its two-particle density matrix $\Gamma^{(2)}$ satisfies
\begin{equation}
 \Gamma^{(2)}\leq \mathscr{A}(\gamma\otimes\gamma) \mathscr{A}
 \label{eq:compare_Gamma_2}
\end{equation}
in the operator sense.
\end{lemma}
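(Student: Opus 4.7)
The plan is to first establish the bound when $\gamma$ has finite rank using the explicit averaged construction sketched in the text, and then to obtain the general case by approximation. In the finite-rank case, one diagonalizes $\gamma = \sum_{i=1}^K n_i |u_i\rangle\langle u_i|$ with $n_i \in [0,1]$ and $\sum_i n_i = N$, and applies Horn's lemma (the Schur-Horn theorem) to produce $N$ orthonormal vectors $V^1,\dots,V^N \in \C^K$ with $\sum_{k=1}^N |V^k_i|^2 = n_i$ for every $i$. This step requires the vector $(n_1,\dots,n_K)$ to be majorized by $(1,\dots,1,0,\dots,0)$ with $N$ ones; this holds because the $n_i$ lie in $[0,1]$ and sum to the integer $N$. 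For each $\theta \in (0,2\pi)^K$ the orbitals $f^k_\theta = \sum_j e^{i\theta_j} V^k_j u_j$ are orthonormal in $L^2$ (since $\sum_j V^k_j \overline{V^{k'}_j} = \delta_{kk'}$), so $|f^1_\theta\wedge\cdots\wedge f^N_\theta\rangle\langle f^1_\theta\wedge\cdots\wedge f^N_\theta|$ is a genuine fermionic pure state and its average over $\theta$ is a well-defined mixed state $\Gamma$.

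The next step is to compute $\gamma_\Gamma$ and $\Gamma^{(2)}$ via Fubini and the identity $\int_{(0,2\pi)^K} e^{i(\theta_j - \theta_{j'})}\,d\theta/(2\pi)^K = \delta_{jj'}$. For each fixed $\theta$, the Slater determinant has one-particle density matrix $\gamma_\theta = \sum_{k,j,j'} V^k_j \overline{V^k_{j'}} e^{i(\theta_j - \theta_{j'})} |u_j\rangle\langle u_{j'}|$ and two-particle density matrix $\mathscr{A}(\gamma_\theta \otimes \gamma_\theta)\mathscr{A}$. Averaging the first expression collapses the sum to $j = j'$ and recovers $\gamma_\Gamma = \sum_j n_j |u_j\rangle\langle u_j| = \gamma$. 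For the second, one expands $\gamma_\theta \otimes \gamma_\theta$ into four indices; integrating out the product of two phase factors leaves only two pairings, namely $(j_1 = j'_1,\ j_2 = j'_2)$ and $(j_1 = j'_2,\ j_2 = j'_1)$. Tracking these terms and applying the antisymmetrization $\mathscr{A}$ reproduces $\mathscr{A}(\gamma \otimes \gamma)\mathscr{A}$ together with the correction $-\sum_{j<k}\bigl|\sum_i V^i_j \overline{V^i_k}\bigr|^2\, |u_j \wedge u_k\rangle\langle u_j \wedge u_k|$, which is a manifestly non-positive operator. Dropping it gives the desired inequality.

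For general $\gamma$, I would approximate by spectral truncations $\gamma_M$ of rank $M$, slightly rescaling so that $\tr(\gamma_M) = N$ and $0\leq\gamma_M\leq 1$; then $\gamma_M \to \gamma$ in trace norm. Each finite-rank step provides a mixed state $\Gamma_M$ with $\gamma_{\Gamma_M} = \gamma_M$ and $\Gamma_M^{(2)} \leq \mathscr{A}(\gamma_M \otimes \gamma_M)\mathscr{A}$, and the sequence $(\Gamma_M)$ is bounded in trace class, so it has a weak-$\ast$ cluster point $\Gamma$. The main obstacle is making this limit rigorous: one must verify that the partial-trace maps $\Gamma \mapsto \gamma_\Gamma$ and $\Gamma \mapsto \Gamma^{(2)}$ behave well under the chosen topology, that $\mathscr{A}(\gamma_M \otimes \gamma_M)\mathscr{A} \to \mathscr{A}(\gamma \otimes \gamma)\mathscr{A}$, and that the operator inequality passes to the limit (which follows from lower semicontinuity of $\xi \mapsto \langle \xi, (\mathscr{A}(\gamma \otimes \gamma)\mathscr{A} - \Gamma^{(2)})\xi\rangle$ applied to fixed antisymmetric test vectors). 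The algebraic heart of the argument — the phase-averaging bookkeeping — is routine; the true delicate points are the Schur-Horn step and this approximation argument.
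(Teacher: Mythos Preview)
Your proposal is correct and follows essentially the same route as the paper: the finite-rank case via Horn's lemma and the phase-averaged Slater determinant gives exactly the identity~\eqref{eq:comput_Gamma_2_Lieb}, whose correction term is non-positive, and the general case is handled by the density of finite-rank operators. Your additional care with the approximation step (spectral truncation, weak-$\ast$ compactness, passing the operator inequality to the limit on fixed test vectors) fills in detail that the paper leaves implicit.
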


Using the lemma for the trial state~\eqref{eq:trial_state} employed in the proof of Theorem~\ref{thm:upper_bound_T} and neglecting the exchange term, the following was derived in~\cite{LewLieSei-19}.

\begin{theorem}[Upper bound on {$F_{\rm L}[\rho]$~\cite{LewLieSei-19}}]\label{thm:upper_bound_F_L}
For $w(\br)=|\br|^{-s}$ in dimension $d>s>0$, we have
\begin{multline}
F_{\rm L}[\rho]\leq c_{\rm TF}(d)(1+\eps) q^{-\frac2d}\int_{\R^d}\rho(\br)^{1+\frac2d}\,\rd\br+\kappa'(d)\frac{1+\eps}{\eps}\int_{\R^d}|\nabla\sqrt\rho(\br)|^2\,\rd\br\\+\frac12\iint_{\R^d\times\R^d}\frac{\rho(\br)\rho(\br')}{|\br-\br'|^s}\,\rd\br\,\rd\br'
\label{eq:upper_bound_F_L}
\end{multline}
for any $\varepsilon>0$.
\end{theorem}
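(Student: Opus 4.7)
The plan is to combine the trial one-particle density matrix constructed in the proof of Theorem~\ref{thm:upper_bound_T} with the upgrade to an $N$-particle mixed state supplied by Lieb's lemma immediately preceding the statement. Fix $\rho$ with $N:=\int_{\R^d}\rho\in\N$ and $\sqrt\rho\in H^1(\R^d)$, and take $\gamma$ to be the one-particle density matrix defined by~\eqref{eq:trial_state}. By its construction $\gamma$ satisfies $0\leq\gamma\leq1$ and $\rho_\gamma=\rho$, so in particular $\tr\gamma=N\in\N$, which are exactly the hypotheses of the lemma. Moreover, the proof of Theorem~\ref{thm:upper_bound_T} already delivers
$$
\tr\!\left(\tfrac{-\Delta}{2}\right)\gamma\;\leq\;c_{\rm TF}(d)(1+\eps)\,q^{-\frac{2}{d}}\int_{\R^d}\rho(\br)^{1+\frac{2}{d}}\,d\br\;+\;\kappa'(d)\,\frac{1+\eps}{\eps}\int_{\R^d}|\nabla\sqrt\rho(\br)|^2\,d\br,
$$
which accounts for the first two terms on the right of~\eqref{eq:upper_bound_F_L}.

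Applying Lieb's lemma produces a fermionic mixed $N$-particle state $\Gamma$ with one-particle density matrix $\gamma$ (hence $\rho_\Gamma=\rho$) whose two-particle density matrix obeys the operator inequality $\Gamma^{(2)}\leq \mathscr{A}(\gamma\otimes\gamma)\mathscr{A}$. Using $\Gamma$ as a trial state in~\eqref{eq:F_L} gives
$$
F_{\rm L}[\rho]\;\leq\;\tr\!\left(H_N^{0,w}\Gamma\right)\;=\;\tr\!\left(\tfrac{-\Delta}{2}\right)\gamma\;+\;\tr\!\bigl(w(\br_1-\br_2)\,\Gamma^{(2)}\bigr).
$$
Because $w(\br)=|\br|^{-s}\geq 0$ as an operator, the operator inequality for $\Gamma^{(2)}$ passes to traces. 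Writing $\mathscr{A}=\tfrac{1}{2}(I-P_{12})$, using that the exchange operator $P_{12}$ commutes with $w(\br_1-\br_2)$, and computing the diagonal gives
$$
\tr\!\bigl(w(\br_1-\br_2)\,\mathscr{A}(\gamma\otimes\gamma)\mathscr{A}\bigr)\;=\;\frac12\iint_{\R^{2d}}\frac{\rho(\br)\rho(\br')}{|\br-\br'|^s}\,d\br\,d\br'\;-\;\frac12\iint\frac{|\gamma(\bx,\by)|^2}{|\br-\br'|^s}\,d\bx\,d\by.
$$
The exchange integral on the right is non-negative and may be discarded, leaving precisely the Hartree term in~\eqref{eq:upper_bound_F_L}. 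Adding the two bounds yields the claimed inequality.

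The argument is in essence a bookkeeping exercise: the two genuinely non-trivial ingredients — the construction of $\gamma$ realising the prescribed density with nearly semi-classical kinetic energy, and Lieb's domination of $\Gamma^{(2)}$ by $\mathscr{A}(\gamma\otimes\gamma)\mathscr{A}$ — are available as previously quoted results. The only features worth flagging are that positivity of $w$ is essential, since it lets us discard the exchange term without estimating it, and that the Hartree integral must be seen to be finite for $0<s<\min(2,d)$, which follows from $\sqrt\rho\in H^1(\R^d)$ via the Hardy-Littlewood-Sobolev and Sobolev inequalities. I do not expect any substantial obstacle; the content of the theorem lies in the compatibility of the trial state~\eqref{eq:trial_state} with Lieb's quasi-free dominance, not in the brief computation that then produces~\eqref{eq:upper_bound_F_L}.
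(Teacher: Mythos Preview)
Your proposal is correct and follows exactly the approach indicated in the paper: apply Lieb's lemma on quasi-free domination to the trial one-particle density matrix~\eqref{eq:trial_state} from Theorem~\ref{thm:upper_bound_T}, then discard the non-negative exchange term. The paper only sketches this in one sentence, and your write-up fills in the details faithfully.
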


This time, the right side of~\eqref{eq:upper_bound_F_L} involves an energy functional that looks like the Thomas-Fermi-von Weiz\"acker energy~\cite{Lieb-81b}. It is an open problem to derive a similar upper bound for $F_{\rm LL}$.

%%%%%%%%%%%%%%%%%%%%%%%%%%%%%%%%%%%%%%%%%%%%%%%%%%%%%%%%%
%%%%%%%%%%%%%%%%%%%%%%%%%%%%%%%%%%%%%%%%%%%%%%%%%%%%%%%%%
\appendix
%%%%%%%%%%%%%%%%%%%%%%%%%%%%%%%%%%%%%%%%%%%%%%%%%%%%%%%%%
%%%%%%%%%%%%%%%%%%%%%%%%%%%%%%%%%%%%%%%%%%%%%%%%%%%%%%%%%
\section{The Hohenberg-Kohn Theorem} \label{sec:HK}
%%%%%%%%%%%%%%%%%%%%%%%%%%%%%%%%%%%%%%%%%%%%%%%%%%%%%%%%%
%%%%%%%%%%%%%%%%%%%%%%%%%%%%%%%%%%%%%%%%%%%%%%%%%%%%%%%%%

In this chapter we have mainly discussed the convex formulation of Density Functional Theory~\cite{Lieb-83b,HelTea-22} based on the universal functionals of Levy and Lieb. Another important result is the \emph{Hohenberg-Kohn theorem}~\cite{HohKoh-64} which, in spite of its rather abstract character, is often cited as the main justification for the use of the density to replace the $N$-particle wavefunction. As we will explain, the necessary assumptions for the validity of this theorem are not yet fully understood mathematically. In fact, this result relies on the \emph{unique continuation principle} which is not completely settled for $N$-particle Hamiltonians. Before stating the Hohenberg-Kohn theorem, we therefore start by discussing unique continuation in detail.

For simplicity we assume throughout the whole appendix that there is no spin:
$$q=1.$$
Adding $q$ does not change anything in the following results but it makes the notation a bit heavier.

\subsection{Many-body unique continuation}

We refer for instance to~\cite{Lammert-18} for a discussion on the importance of the unique continuation for the Hohenberg-Kohn theorem. Because unique continuation is a purely local property we allow here external potentials $v$ whose positive part $v_+=\max(v,0)$ is only locally integrable. For simplicity, we assume that its negative part $v_-=\max(-v,0)$ and the interaction potential $w$ are infinitesimally $(-\Delta)$--form-bounded, as was done in the body of the chapter.

\begin{definition}[Many-body unique continuation]\label{def:UCP}
Let
$$v_+\in L^1_{\rm loc}(\R^d,\R),\qquad v_-,w\in L^p(\R^d,\R)+L^\ii(\R^d)$$
with $v_\pm\geq0$ and $p$ satisfying~\eqref{eq:hyp_p}. We say that the potentials $v=v_+-v_-$ and $w$ satisfy the \emph{many-body unique continuation principle} if, for every integer $N\geq1$, (the Friedrichs realization of) $H_N^{v,w}$ satisfies the unique continuation principle in its form domain:
if we have $H_N^{v,w}\Psi=\lambda\Psi$ for some $\lambda\in\R$ and $\Psi\in \cQ(H_N^{v,w})$ with $|\{\Psi=0\}|>0$, then $\Psi\equiv0$.
\end{definition}

The equation $H_N^{v,w}\Psi=0$ is understood in $\cQ(H_N^{v,w})'$, that is,
$$\frac12\int_{\R^{dN}}\nabla \Phi(\bX)^*\cdot\nabla\Psi(\bX)\,\rd\bX+\int_{\R^{dN}}W_N^{v,w}(\bX)\Phi(\bX)^*\Psi(\bX)\,\rd\bX=0$$
for every $\Phi\in \cQ(H_N^{v,w})$ or, equivalently, in the sense of distributions. Recall that the full $N$-body potential is defined by
$$W_N^{v,w}(\br_1,...,\br_N):=\sum_{j=1}^Nv(\br_j)+\sum_{1\leq j< k\leq N}w(\br_j-\br_k).$$

Our formulation of unique continuation is one of the strongest possible, in that it only requires $\Psi$ to vanish on a set of positive measure in order to deduce that $\Psi\equiv0$. This is the property which is needed in the proof of the Hohenberg-Kohn theorem, as we will see. This is sometimes called ``unique continuation on sets of positive measures''. For $v_+\in L^p_{\rm loc}(\R^d,\R^+)$ with $p$ as in~\eqref{eq:hyp_p} it is shown in~\cite{FigGos-92} that any $\Psi\in H^1_{\rm loc}(\R^{dN})$ vanishing on a set of positive measure and solving $H_N^{v,w}\Psi=\lambda\Psi$ must have a point $\bX_0\in \R^{dN}$ where it vanishes to infinite order, that is, such that
$$\forall \alpha>0,\qquad \int_{|\bX-\bX_0|\leq \alpha}|\Psi(\bX)|^2\,\rd\bX=O(r^\alpha).$$
Unique continuation for functions vanishing to infinite order at one point is usually called ``strong unique continuation''. Many authors consider instead the ``weak unique continuation'' problem where $\Psi$ is instead assumed to vanish on an open set, but this is not sufficient for the Hohenberg-Kohn theorem.

\medskip

Unique continuation is a very well studied question. Note first that if we restrict our attention to the potentials for $N$ electrons in a molecule, where
$$v(\br)=-\sum_{m=1}^M\frac{z_m}{|\br-\mathbf{R}_m|},\qquad w(\br-\br')=\frac{1}{|\br-\br'|}$$
then any eigenfunction of $H_N^{v,w}$ is analytic outside of the singularities of the potential~\cite{Morrey-58}, which form a set of zero measure. Therefore it satisfies the unique continuation principle. However, restricting the theory to this very special, though physically relevant, class of potentials is not appropriate in density functional theory. In order to fully understand the density, it is necessary to allow the largest possible class of potentials.

In a famous work~\cite{JerKen-85}, Jerison and Kenig have proved that the (strong) unique continuation principle holds for $-\Delta+W$ in $\R^D$ under the sole assumption that $W\in L^p_{\rm loc}(\R^D)$ with $p$ satisfying~\eqref{eq:hyp_p} and $d$ replaced by $D$. This was then generalized by Koch and Tataru in~\cite{KocTat-06} and many other authors. These results apply to the $N$-particle setting under the condition that
$$W_N^{v,w}\in L_{\rm loc}^{\frac{dN}{2}}(\R^{dN})$$
and this is valid for all $N\geq2$ when
$$v,w\in L_{\rm loc}^{p}(\R^{d})\text{ for all $1\leq p<\ii$.}$$
This is not far from asking that the potentials are locally bounded (in which case the result would follow for instance from the singular Carleman-type estimate proved in~\cite{Regbaoui-97}). We see that $L^p$ conditions are not well adapted to the $N$-particle problem, since they yield $N$-dependent constraints on $v$ and $w$.
More natural assumptions on $v$ and $w$ involve relative bounds with respect to the Laplacian, since such properties are easily propagated to all $N$. For instance if $v$ and $w$ are infinitesimally $(-\Delta)$--form bounded in $\R^d$,
\begin{equation}
\forall\eps>0,\qquad |v|+|w|\leq \eps(-\Delta)+C_{\eps},
\label{eq:form_bounded}
\end{equation}
then so is the $N$-particle potential $W_N^{v,w}$ in $\R^{dN}$ for every $N$. It seems reasonable to conjecture that the many-body unique continuation principle holds under the sole assumption~\eqref{eq:form_bounded}, but this is not known, even for $N=1$. See~\cite{Simon-82} for a similar conjecture in the Kato class, which involves $L^1$ norms instead of $L^2$ norms.

Georgescu \cite{Georgescu-79} and Schechter-Simon \cite{SchSim-80} have provided one of the first results for $N$-body systems with $N$-independent assumptions on $v$ and $w$, but they required the wavefunction to vanish on an open set (weak unique continuation). In a recent work~\cite{Garrigue-18,Garrigue-20}, Garrigue has extended their result to cover the case of functions vanishing on a set of positive measure. His main assumption is that
$$|v_+|^2\1_{B_R}+|v_-|^2+|w|^2\leq \eps (-\Delta)^{\frac32-\delta}+C_{\delta,\eps,R}$$
for some $\delta>0$ and all $\eps,R>0$. This condition in the one-body space $\R^d$ is inherited by $W_N^{v,w}$ in $\R^{dN}$ for every $N\geq1$. After using the Sobolev inequality, the following result was shown in~\cite{Garrigue-20}.

\begin{theorem}[Unique continuation for $L^p$ potentials~{\cite{Garrigue-20}}]\label{thm:UCP_Lp}
Any potentials $v,w$ with $v_+\in L^p_{\rm loc}(\R^d)$ and $v_-,w\in L^p(\R^d)+L^\ii(\R^d)$ with $p>\max(2,2d/3)$
satisfy the many-body unique continuation property of Definition~\ref{def:UCP}.
\end{theorem}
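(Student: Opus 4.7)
The plan is to reduce the statement to the relative form-bound criterion from~\cite{Garrigue-19} mentioned in the paragraph preceding the theorem, and then invoke Garrigue's $N$-body unique continuation result as a black box. The technical content is to convert the $L^p$ hypothesis into the fractional Sobolev form bound
$$|v_+|^2\1_{B_R}+|v_-|^2+|w|^2\leq \eps(-\Delta)^{\frac32-\delta}+C_{\delta,\eps,R}$$
on the one-particle space $L^2(\R^d)$, for some $\delta=\delta(p,d)>0$ and all $\eps,R>0$, and then to lift this bound to $\R^{dN}$.

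First I would prove the one-body bound. Choose $\delta>0$ small enough that $p\geq 2d/(3-2\delta)$ when $d\geq 3$; since $p>2d/3$ such a $\delta$ exists. In dimensions $d\leq 2$ any $\delta<1/2$ suffices, since then $3/2-\delta>d/2$ and the Morrey embedding $H^{3/2-\delta}(\R^d)\hookrightarrow L^\ii(\R^d)$ holds. Given $u\in L^p(\R^d)+L^\ii(\R^d)$ and $\eps>0$, I split $u=u_1+u_2$ with $u_1\in L^\ii$ and $\norm{u_2}_{L^p(\R^d)}$ as small as desired (by truncating the unbounded $L^p$ part above a large level). The bounded piece contributes a constant multiple of $\norm{\cdot}_{L^2}^2$. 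For the $L^p$ piece, Hölder with exponents $p/2$ and $p/(p-2)$ combined with the Sobolev embedding $H^{3/2-\delta}(\R^d)\hookrightarrow L^{2p/(p-2)}(\R^d)$ yields
$$\int_{\R^d}|u_2|^2|\psi|^2\,d\br\leq C\,\norm{u_2}_{L^p(\R^d)}^2\Bigl(\norm{(-\Delta)^{\frac{3/2-\delta}{2}}\psi}_{L^2}^2+\norm{\psi}_{L^2}^2\Bigr).$$
Applying this to $v_-$, $w$, and (for each $R>0$) $v_+\1_{B_R}\in L^p(\R^d)$ gives the one-body relative bound.

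Next, I would lift the bound to $\R^{dN}$. For each pair $(j,k)$ with $j\neq k$, applying the one-body inequality in the $\br_j$ variable only (with $\br_k$ fixed, using that $w(\cdot-\br_k)$ has the same $L^p+L^\ii$ norm as $w$) and integrating over the remaining coordinates gives
$$\int_{\R^{dN}}\bigl(|v_\pm(\br_j)|^2+|w(\br_j-\br_k)|^2\bigr)|\Psi|^2\leq \eps\int_{\R^{dN}}\bigl|(-\Delta_{\br_j})^{(3/2-\delta)/2}\Psi\bigr|^2+C_\eps\int_{\R^{dN}}|\Psi|^2.$$
Since the one-body Laplacians $(-\Delta_{\br_j})$ commute and are positive, and since $t\mapsto t^{3/2-\delta}$ is superadditive on $\R_+$ at exponent $\geq 1$, joint spectral calculus yields the operator inequality $\sum_j(-\Delta_{\br_j})^{3/2-\delta}\leq (-\Delta_{\R^{dN}})^{3/2-\delta}$. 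Combining this with a Cauchy-Schwarz estimate on $|W_N^{v,w}|^2$ produces the Garrigue form bound on $\R^{dN}$, with a constant depending polynomially on $N$.

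Garrigue's $N$-body unique continuation theorem from~\cite{Garrigue-19} applies under exactly this hypothesis and directly gives the desired conclusion: any $\Psi\in\cQ(H_N^{v,w})$ satisfying $H_N^{v,w}\Psi=0$ and vanishing on a set of positive measure must vanish identically. The analytic heart of the argument, a Carleman estimate admitting potentials of $(-\Delta)^{3/2-\delta}$ form-type, is invoked as a black box, so the work specific to Theorem~\ref{thm:UCP_Lp} reduces to the Sobolev bookkeeping outlined above. The threshold $p>2d/3$ is sharp for this reduction since it is precisely the summability at which $|v|^2\in L^{p/2}$ embeds into the dual of the critical Sobolev space associated with $H^{3/2}(\R^d)$.
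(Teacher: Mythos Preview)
Your proposal is correct and matches the paper's approach exactly. The paper does not actually prove Theorem~\ref{thm:UCP_Lp}; it cites \cite{Garrigue-19} and merely indicates the strategy in the sentence immediately preceding the statement: one converts the $L^p$ assumption into the fractional form bound $|v_+|^2\1_{B_R}+|v_-|^2+|w|^2\leq \eps (-\Delta)^{3/2-\delta}+C_{\delta,\eps,R}$ via the Sobolev inequality, observes that this bound is inherited by $W_N^{v,w}$ on $\R^{dN}$, and then invokes Garrigue's Carleman-type result as a black box. Your write-up simply makes these steps explicit, including the H\"older/Sobolev bookkeeping that fixes $\delta$ from the gap $p>2d/3$ and the superadditivity argument $\sum_j(-\Delta_{\br_j})^{3/2-\delta}\leq(-\Delta_{\R^{dN}})^{3/2-\delta}$ for the lifting.
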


Theorem~\ref{thm:UCP_Lp} now covers Coulomb-type potentials. This is the best result known at the moment for many-body unique continuation. It is an important problem to generalize it to more singular potentials $v$.

\subsection{Main theorem and some open problems}

Let us now state the Hohenberg-Kohn theorem, which says that the density uniquely determines the potential, under the condition that unique continuation holds.

\begin{theorem}[Hohenberg-Kohn]
Let $(v_1)_-,(v_2)_-,w\in L^p(\R^d)+L^\ii(\R^d)$ and $(v_1)_+,(v_2)_+\in L^1_{\rm loc}(\R^d)$ with $p$ as in~\eqref{eq:hyp_p} and assume that $(v_1,w)$ or $(v_2,w)$ satisfies the many-body unique continuation property of Definition~\ref{def:UCP}. If there are two ground states $\Psi_1$ and $\Psi_2$ of, respectively, $H_N^{v_1,w}$ and $H_N^{v_2,w}$ so that $\rho_{\Psi_1}=\rho_{\Psi_2}$, then we have $v_1=v_2+C$ for some constant $C$.
\end{theorem}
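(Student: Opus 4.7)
The plan is to combine a standard two-line variational argument (which uses crucially that $\Psi_1$ and $\Psi_2$ share the same density $\rho$) with the many-body unique continuation assumption. Write $E_j := \langle\Psi_j,H_N^{v_j,w}\Psi_j\rangle$ for the two ground state energies. Since $\rho_{\Psi_1}=\rho_{\Psi_2}=\rho$ and since $\Psi_1\in\cQ(H_N^{v_1,w})$ implies $\int(v_1)_+\rho<\ii$, the wavefunction $\Psi_2$ belongs to $\cQ(H_N^{v_1,w})$ as well (and vice versa). First I would use $\Psi_2$ as a trial state in the variational principle for $E_1$:
\begin{equation*}
E_1\leq \pscal{\Psi_2,H_N^{v_1,w}\Psi_2}=\pscal{\Psi_2,H_N^{v_2,w}\Psi_2}+\int_{\R^d}(v_1-v_2)(\br)\,\rho(\br)\,d\br=E_2+\int_{\R^d}(v_1-v_2)\rho.
\end{equation*}
The symmetric inequality gives $E_2\leq E_1+\int_{\R^d}(v_2-v_1)\rho$. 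Adding these two inequalities produces $E_1+E_2\leq E_1+E_2$, so both are in fact equalities. Equality in the first means $\Psi_2$ is itself a minimizer for $H_N^{v_1,w}$, hence a ground state: $H_N^{v_1,w}\Psi_2=E_1\Psi_2$ in the quadratic form sense. We also have $H_N^{v_2,w}\Psi_2=E_2\Psi_2$ by assumption.

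Next I would subtract the two eigenvalue equations. Since the kinetic term and the interaction $w$ are identical in $H_N^{v_1,w}$ and $H_N^{v_2,w}$, we obtain the pointwise identity
\begin{equation*}
\left(\sum_{j=1}^N\big(v_1(\br_j)-v_2(\br_j)\big)-(E_1-E_2)\right)\Psi_2(\bX)=0\qquad\text{a.e.\ on }(\R^d)^N.
\end{equation*}
Set $g:=v_1-v_2$ and $\lambda:=E_1-E_2$. If $g$ is not a.e.\ constant on $\R^d$, then the set $S:=\{\bX\in(\R^d)^N:\sum_{j=1}^Ng(\br_j)\neq\lambda\}$ has positive Lebesgue measure. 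Indeed, if the complement of $S$ had full measure, then by a Fubini argument, fixing almost every $(\br_2,\ldots,\br_N)$ would force $g(\br_1)=\lambda-\sum_{j\geq 2}g(\br_j)$ for a.e.\ $\br_1$, making $g$ a.e.\ constant. So $\Psi_2$ must vanish on the set $S$ of positive measure.

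Finally I would apply the many-body unique continuation hypothesis to $\Psi_2$. The equation $(H_N^{v_2,w}-E_2)\Psi_2=0$ is of the form covered by Definition~\ref{def:UCP} after absorbing the constant $-E_2$ into $v_2$ (which does not affect the form-boundedness assumptions). Since $\Psi_2$ vanishes on a set of positive measure in $(\R^d)^N$, unique continuation forces $\Psi_2\equiv 0$, contradicting $\|\Psi_2\|_{L^2}=1$. Hence $g=v_1-v_2$ must be a.e.\ constant on $\R^d$, which is the desired conclusion. The main conceptual obstacle, and the entire reason the assumption on $p$ and the restriction to classes of potentials where unique continuation is known are needed, lies precisely in this last step: without the many-body UCP one cannot rule out the pathological possibility that a nontrivial $\Psi_2$ vanishes on a large set while satisfying the Schr\"odinger equation; the variational half of the argument is completely elementary by comparison.
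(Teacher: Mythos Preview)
Your proof is correct and follows essentially the same route as the paper's: the variational swap to force equality, the subtraction of the two eigenvalue equations to obtain $\big(\sum_j(v_1-v_2)(\br_j)-\lambda\big)\Psi_2=0$, and the invocation of many-body unique continuation to conclude that the multiplier must vanish a.e. The only cosmetic differences are that the paper works with $\Psi_1$ rather than $\Psi_2$, phrases the UCP step positively ($|\{\Psi_1=0\}|=0$ directly) instead of by contradiction, and recovers the constancy of $v_1-v_2$ by integrating all but one variable over a ball rather than by your Fubini argument.
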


The following proof is essentially the one given in~\cite{Lieb-83b,Garrigue-18}.

\begin{proof}
Changing $v_2$ into $v_2-(E_N[v_2]-E_N[v_1])/N$ we can assume that the two ground state energies are equal.
Note that the assumption $\rho_{\Psi_1}=\rho_{\Psi_2}$ implies that $\Psi_1\in \cQ(H^{v_2,w}_N)$ and $\Psi_2\in \cQ(H^{v_1,w}_N)$, the form domains of the two operators defined in~\eqref{eq:form_domain}.
We can  write
\begin{align*}
\pscal{\Psi_1,H_N^{v_1,w}\Psi_1}&= \pscal{\Psi_1,H_N^{v_2,w}\Psi_1}+\int_{\R^d}\rho_{\Psi_1}(\br)\,\Big(v_1(\br)-v_2(\br)\Big)\,\rd\br\\
&\geq \pscal{\Psi_2,H_N^{v_2,w}\Psi_2}+\int_{\R^d}\rho_{\Psi_1}(\br)\,\Big(v_1(\br)-v_2(\br)\Big)\,\rd\br.
\end{align*}
Exchanging the two indices and using that the two densities are equal, we obtain that there is equality everywhere. In particular, $\Psi_1$ is a ground state for $H_N^{v_2,w}$, hence belongs to its operator domain and solves the equation
$$\Big(H_N^{v_1,w}-H_N^{v_2,w}\Big)\Psi_1=\sum_{j=1}^N\big(v_1(\br_j)-v_2(\br_j)\big)\Psi_1=0$$
in the sense of distributions hence also almost everywhere.
Due to the unique continuation principle (for either $v_1$ or $v_2$), we know that $|\{\Psi_1=0\}|=0$ hence this implies that
$$\sum_{j=1}^N\big(v_1(\br_j)-v_2(\br_j)\big)=0$$
for almost every $\br_1,...,\br_N\in\R^d$. Integrating against $f^{\otimes N}$  with $f\in C^\ii_c(\R^d)$, we deduce that
$$N\int_{\R^d}\big(v_1(\br)-v_2(\br)\big)f(\br)\,\rd\br=0.$$
We obtain $v_1=v_2$ a.e., as we wanted.
\end{proof}

\begin{remark}[Mixed states]\rm
There is a similar Hohenberg-Kohn theorem for mixed states. That is, if we have two $N$-particle mixed states $\Gamma_1$ and $\Gamma_2$ supported on the ground state eigenspaces of $H_N^{v_1,w}$ and $H_N^{v_2,w}$ respectively, such that $\rho_{\Gamma_1}=\rho_{\Gamma_2}$, then $v_1=v_2+C$. The proof is similar.
\end{remark}

From Theorem~\ref{thm:UCP_Lp}, we know that the Hohenberg-Kohn theorem holds when $p>\max(2,2d/3)$. We now discuss some consequences of this result.

Let us consider a fixed interaction $w\in L^p(\R^d)+L^\ii(\R^d)$ (in DFT $w$ is usually the Coulomb potential in dimension $d=3$).
We introduce the set of \emph{$v$-representable densities}
\begin{multline*}
\cR_w:=\Big\{\rho_\Psi\ :\ \Psi \text{ ground state of $H_N^{v,w}$ for some $(v,w)$}\\ \text{satisfying the many-body unique continuation in Definition~\ref{def:UCP}}\Big\}.
\end{multline*}
  The Hohenberg-Kohn theorem states that any $\rho\in\cR_w$ arises from a \emph{unique} potential $v$, up to a constant. We remark that the set $\cR_w$ might be quite small. In fact, all the densities $\rho\in\cR_w$ are positive in the following sense. If we had $|\{\rho=0\}|>0$ then the associated $\Psi$ would vanish on a set of infinite measure, which contradicts the unique continuation principle.\footnote{To include densities vanishing on a set of positive measure we have to allow $v_+$ to be infinite on such sets and rephrase the unique continuation principle accordingly.} It is an interesting question~\cite{Garrigue-21,Garrigue-22} to determine how small $\cR_w$ is.

That the set $\cR_w$ might be quite small is not really a problem in density functional theory, since the sought-after ground state density of course always belongs to $\cR_w$. However, the smallness of $\cR_w$ creates some technical difficulties. For instance if we have $\rho\in \cR_w$ with associated potential $v$ then one cannot use the implicit function theorem to determine a potential for other densities in the neighborhood of $\rho$.

In DFT, one important question is to understand the dependence of $\cR_w$  on the interaction potential $w$. In fact, in Kohn-Sham theory the goal is to replace the many-body problem with interaction $w$ by a non-interacting eigenvalue problem. If we have $\rho\in\cR_w\cap \cR_0$ with $\rho=\rho_\Psi$ and $H_N^{v,w}\Psi=0$, then we conclude that there exists a unique potential $v_{\rm KS}$ called the \emph{Kohn-Sham potential} and a normalized ground state $\Psi'$ such that $\rho_{\Psi'}=\rho$ and
$$H^{v+v_{\rm KS},0}\Psi'=\sum_{j=1}^N\left(-\frac{\Delta_{\br_j}}2+v(\br_j)+v_{\rm KS}(\br_j)\right)\Psi'=0.$$
The spectrum of $H^{v+v_{\rm KS},0}$ is determined in terms of the one-particle Kohn-Sham operator $h_{\rm KS}=-{\Delta}/2+v+v_{\rm KS}$. When its eigenvalues satisfy
$$\lambda_N(h_{\rm KS})<\lambda_{N+1}(h_{\rm KS})$$
then $\Psi'$ is unique up to a phase and equal to the Slater determinant
$$\Psi'=(N!)^{-1/2}\det(\phi_i(\bx_j))$$
where the orbitals $\phi_1,...,\phi_N$ are the $N$ first eigenfunctions of $h_{\rm KS}$. Hence the interacting problem has been mapped onto a non-interacting problem.
For this reason, it is desirable that $\cR_w\cap \cR_0$ is as large as possible, perhaps even equal to the whole set $\cR_w$. This question does not seem to have been studied in detail. Understanding the Coulomb case is the main goal of the Kohn-Sham formulation of DFT. See~\cite{Garrigue-22} for numerical results in this direction.

In the original approach of~\cite{HohKoh-64,KohSha-65}, the Hohenberg-Kohn theorem is used to define universal functionals. Namely, for every $\rho\in\cR_w$, we know that there exists a unique potential $v$ (up to constants) and an $N$-particle wavefunction $\Psi$ such that $\rho=\rho_\Psi$ and $\Psi$ is a ground state for $E_N[v]$. Should $\Psi$ be non-degenerate, this defines a map $\rho\mapsto\Psi[\rho]$ and therefore one can define the universal energy functional by $\rho\mapsto \langle\Psi[\rho],H^{0,w}_N\Psi[\rho]\rangle$, and similarly for the kinetic and interaction energies. This approach is not satisfactory from a mathematical point of view, however,  since the set $\cR_w$ is essentially unknown.

% \begin{problem}
% Determine $\cR_w\cap\cR_0$, in particular for $w$ the Coulomb potential in 3D.
% \end{problem}

%%%%%%%%%%%%%%%%%%%%%%%%%%%%%%%%%%%%%%%%%%%%%%%%%%%%%%%%%%%%%%%%%%%%%%%%
%%%%%%%%%%%%%%%%%%%%%%%%%%%%%%%%%%%%%%%%%%%%%%%%%%%%%%%%%%%%%%%%%%%%%%%%
\section{Proof of Theorem~\ref{thm:wlsc_F_GC}}\label{app:wlsc_F_GC}
%%%%%%%%%%%%%%%%%%%%%%%%%%%%%%%%%%%%%%%%%%%%%%%%%%%%%%%%%%%%%%%%%%%%%%%%
%%%%%%%%%%%%%%%%%%%%%%%%%%%%%%%%%%%%%%%%%%%%%%%%%%%%%%%%%%%%%%%%%%%%%%%%

Note that since $w\geq0$, we always have $F_{\rm GC}[\rho]\geq0$. In particular, there is nothing to prove when $\rho=0$. In addition, the Lieb-Thirring inequality~\eqref{eq:Lieb-Thirring} or the Hoffmann-Ostenhof inequality~\eqref{eq:Hoffmann-Ostenhof} imply that $F_{\rm GC}[\rho]>0$ for $\rho\neq0$.

For the proof we need to introduce the $k$-particle density matrices~\cite{Lewin-11}
$$\Gamma^{(k)}=\sum_{n\geq k} \frac{n!}{(n-k)!}\tr_{k+1\to n}[\Gamma_{n}]$$
of a grand-canonical state $\Gamma=(\Gamma_n)_{n\geq0}$, where $\tr_{k+1\to n}$ means the partial trace in the $n-k+1$ last variables. The energy can be expressed in terms of $\Gamma^{(1)}$ and $\Gamma^{(2)}$ only, as follows
$$\sum_{n\geq1}\tr (H^{0,w}_n\Gamma_{n})=\frac12\tr\left((-\Delta)\Gamma^{(1)}\right)+\frac12\tr \left(w_{12}\Gamma^{(2)}\right).$$
Here $w_{12}\geq0$ denotes the multiplication operator by $w(\br_1-\br_2)$ on the two-particle space $L^2(\R^d\times\Z_q,\C)\wedge L^2(\R^d\times\Z_q,\C)$.

We will also use the concept of localized states~\cite{Lewin-11}. For a function $0\leq\chi\leq1$ on $\R^d$, the localized state $\Gamma_{|\chi}$ of a grand-canonical state $\Gamma$ is characterized by the property that its density matrices are equal to $\Gamma_{|\chi}^{(k)}=\chi^{\otimes k}\Gamma^{(k)}\chi^{\otimes k}$ for all $k$.

\subsubsection*{\bf Proof of $(i)$}
Let $\Gamma_j=(\Gamma_{j,n})_{n\geq0}$ be a grand-canonical minimizer for $F_{\rm GC}[\rho_j]$, whose existence  is guaranteed by Theorem~\ref{thm:exists_min_GC}. After extraction of a subsequence we may assume that
$F_{\rm GC}[\rho_j]$ converges to a finite limit (if the limit is $+\ii$ there is nothing to show).

Since the second term is non-negative, the kinetic energy $\tr(-\Delta)\Gamma_j^{(1)}$ must be uniformly bounded, hence $\sqrt{-\Delta}\Gamma_j^{(1)}\sqrt{-\Delta}$ is bounded in the trace-class. For fermions, $\Gamma_j^{(1)}$ is in addition bounded in operator norm by 1.
Note however that we have no \emph{a priori} bound on the number of particles $\tr(\Gamma^{(1)}_j)=\int_{\R^d}\rho_j$, which could diverge. The idea is to use the \emph{local} trace class topology instead.

By the Hoffmann-Ostenhof and Lieb-Thirring inequalities~\eqref{eq:Hoffmann-Ostenhof} and~\eqref{eq:Lieb-Thirring}, $\sqrt{\rho_j}$ is bounded in $\dot{H}^1(\R^d)\cap L^{2+4/d}(\R^d)$. In particular $\rho_j$ is bounded in $L^1$ on any finite ball, uniformly with respect to the center of the ball. This means that $\Gamma_j^{(1)}$ is locally uniformly bounded in the trace-class. Due to the kinetic energy bound, we can therefore assume, after extraction of a subsequence, that $\Gamma_j^{(1)}$ converges strongly locally in the trace class to some operator $\Gamma^{(1)}$ which is such that $\rho_{\Gamma^{(1)}}=\rho$, the weak limit of the sequence $\rho_j$. Since $\rho$ is integrable by assumption, $\Gamma^{(1)}$ is indeed trace-class.

The argument is somewhat more complicated for $\Gamma^{(2)}_j$. Let $0\leq \chi\leq1$ be a function of compact support. Denoting the localized state by $\Gamma_{j|\chi}$, we know that $\Gamma^{(1)}_{j|\chi}=\chi\Gamma^{(1)}_j\chi$  is bounded in the trace class. By Yang's inequality~\cite{Yang-62,Yang-63}, we have $\|\Upsilon_n^{(2)}\|\leq Cn\tr(\Upsilon_n)$ for any $n$-particle operator $\Upsilon_n=\Upsilon_n^*\geq0$ and a universal constant $C$. This implies that
$$\norm{\chi^{\otimes 2}\Gamma_j^{(2)}\chi^{\otimes 2}}=\norm{\Gamma_{j|\chi}^{(2)}}\leq \sum_{n\geq2}\norm{\Gamma_{j|\chi,n}^{(2)}}\leq C\sum_{n\geq2}n\,\tr(\Gamma_{j|\chi,n})\leq C\int_{\R^d}\chi^2\rho_j.$$
Hence $\Gamma_j^{(2)}$ is a locally bounded sequence of operators. No local bound on the trace is known, but this does not create any difficulty. Up to extraction of a subsequence, we can therefore assume that $\Gamma_j^{(2)}\wto \Gamma^{(2)}$ weakly locally as operators.
Since $w_{12}\geq0$ by assumption, Fatou's lemma for operators now implies that
\begin{equation}
\liminf_{j\to\ii}F_{\rm GC}[\rho_j]\geq \frac12\tr\left((-\Delta)\Gamma^{(1)}\right)+\frac12\tr \left(w_{12}\Gamma^{(2)}\right).
\label{eq:wlsc_proof_CAR}
\end{equation}
We have thus shown that the energy is weakly lower semi-continuous when expressed in terms of the one and two-particle density matrices. Our next task is to go back to states in Fock space.

Following~\cite[Lemma~3]{Lewin-11}, we know that there exists a state $\Gamma=(\Gamma_n)_{n\geq0}$ on the Fock space which has the density matrices $\Gamma^{(1)}$ and $\Gamma^{(2)}$. The argument uses a different notion of weak convergence and goes as follows. First we extract a subsequence in the sense of weak-$\ast$ convergence over the local algebra of anti-commutation relations. The weak limit is in principle an abstract state over the local CAR but since its density is $\rho$, which is integrable over $\R^d$, the state is actually normal and arises from a grand-canonical state $\Gamma=(\Gamma_n)_{n\geq0}$~\cite{BraRob1,BraRob2}, which has the above density matrices, as we claimed. In particular, the right side of~\eqref{eq:wlsc_proof_CAR} is by definition $\geq F_{\rm GC}[\rho]$, which concludes the proof of~\eqref{eq:wlsc}.

\subsubsection*{\bf Proof of $(ii)$}
Let $\rho\geq0$ be such that $\sqrt{\rho}\in H^1(\R^d)$ and let $\Gamma=(\Gamma_n)_{n\geq0}$ be a grand-canonical state such that $\rho_\Gamma=\rho$ and
$$F_{\rm GC}[\rho]= \frac12\tr\left(-\Delta\Gamma^{(1)}\right)+\frac12\tr \left(w_{12}\Gamma^{(2)}\right).$$
By optimality, we must have
\begin{equation}
 \tr\big( H^{0,w}_n\Gamma_n\big)=\tr(\Gamma_n)\;F_{\rm L}\left[\frac{\rho_{\Gamma_{n}}}{\tr(\Gamma_{n})}\right]
 \label{eq:optimality_GC_L}
\end{equation}
for all $n\geq1$ such that $\Gamma_n\neq0$ (otherwise we could decrease the energy by choosing another state, without changing the density).
We split the rest of the argument into several steps.

\subsubsection*{Step 1. Approximation by a state with $\rho_{\Gamma_{n_0}}\geq\alpha\rho$}
In this first step we slightly modify $\Gamma$ in order to guarantee that  $\rho_{\Gamma_{n_0}}\geq\alpha\rho$ for some $\alpha>0$ and some $n_0$, a property which will play a role in the next step. To this end we remark that there exists a state $\widetilde\Gamma$ with density $\rho_{\widetilde\Gamma}=\rho$ and which only lives over the $N$ and $(N+1)$--particle subspaces, where $N$ is the integer part of $\int_{\R^d}\rho$. If $\int_{\R^d}\rho\in\N$ we just take $\widetilde\Gamma$ to minimize $F_\text{LL}[\rho]$. Otherwise, we can write $\int_{\R^d}\rho=N+\kappa$ with $\kappa\in(0,1)$ and we consider the state
$$\widetilde\Gamma=(1-\kappa)\Gamma_N+\kappa\Gamma_{N+1}$$
where $\Gamma_N$ optimizes $F_\text{LL}[N\rho/(N+\kappa)]$ and $\Gamma_{N+1}$ optimizes $F_\text{LL}[(N+1)\rho/(N+\kappa)]$. The total density is then
$$\rho_{\widetilde\Gamma}=\left((1-\kappa)\frac{N}{N+\kappa}+\kappa\frac{N+1}{N+\kappa}\right)\rho=\rho,$$
as desired. With the trial state $\tilde\Gamma$ at hand, we consider the new state $\widetilde\Gamma_\eps=(1-\eps)\Gamma+\eps\widetilde\Gamma$. This state has the exact density $\rho$ and its energy converges to that of $\Gamma$ when $\eps\to0$. This new state has the desired property that
$$\rho_{(\widetilde\Gamma_\eps)_N}\geq \eps\rho_{\widetilde\Gamma_N}=\frac{\eps N(1-\kappa)}{N+\kappa}\rho$$
and an energy very close to that of $\Gamma$.

Without loss of generality, we can thus assume for the rest of the proof that we have a state $\Gamma$ with the exact density $\rho_\Gamma=\rho$ and such that $\rho_{\Gamma_{n_0}}\geq \alpha\rho$ for some $n_0$ and $\alpha>0$.

\subsubsection*{Step 2. Approximation by a state with compact support in $n$}
In case that $\Gamma_n$ does not vanish for large $n$, we replace the state $\Gamma=(\Gamma_n)_{n\geq0}$ by a new state $\Gamma'=(\Gamma'_n)$ so that $\Gamma'_n\equiv0$ for $n$ large enough, at the expense of a small error in the energy. Although it is possible to keep the exact density, we will here allow the density to vary a little.

First note that since $\rho\neq0$, we have $F_{\rm GC}[\rho]>0$ and therefore $\Gamma_0<1$ (otherwise the energy would vanish). This allows us to define $\Gamma'$ by
$$\begin{cases}
\Gamma'_0=\Gamma_0+\sum_{n\geq K+1}\tr(\Gamma_n)&\text{for $n=0$,}\\
\Gamma_n'=\Gamma_n & \text{ for $1\leq n\leq K$,}\\
\Gamma_n'=0 & \text{ for $n\geq K+1$.}
\end{cases}$$
In other words, we truncate the state and compensate the missing mass in the vacuum.
The energy of $\Gamma'$ is equal to $\sum_{n=1}^K\tr(H^{0,w}_n\Gamma_n)$ and it converges to $F_{\rm GC}[\rho]$ as $K\to \infty$. Similarly, its density is equal to $\sum_{n=1}^K\rho_{\Gamma_n}$ and it converges to $\rho$ in $L^{1}(\R^d)$. In addition, we have for the one-particle density matrix $\tr(-\Delta)\gamma'\leq \tr(-\Delta)\gamma$ which proves that
\begin{equation}
\int_{\R^d}|\nabla\sqrt{\rho_{\Gamma'}}|^2\leq \sum_{n\geq1}\int_{\R^d}|\nabla\sqrt{\rho_{\Gamma_n}}|^2\leq \tr(-\Delta)\gamma
\label{eq:estim_HO_truncate}
\end{equation}
by the Hoffmann-Ostenhof inequality~\eqref{eq:Hoffmann-Ostenhof_GC}. This gives the strong convergence of $\rho_{\Gamma'}$ to $\rho$ in $L^p(\R^d)$ when $K\to\ii$ for $1\leq p<p^*/2$ where $p^*$ is the critical Sobolev exponent.
Finally, we would like to prove the convergence
\begin{equation}
 \nabla\sqrt{\sum_{n=1}^K\rho_{\Gamma_n}}\to \nabla\sqrt{\sum_{n=1}^\ii\rho_{\Gamma_n}}=\nabla\sqrt\rho,
 \label{eq:CV_gradient}
\end{equation}
strongly in $L^2(\R^d)$ and this is where the first step helps. Indeed, we have the following lemma.

\begin{lemma}\label{lem:CV_gradients}
Let $(\rho_j)$ be a sequence such that $\alpha\rho\leq\rho_j\leq \rho$ for some $\alpha>0$ and $\rho_j(x)\to \rho(x)$ a.e., where $\sqrt\rho\in H^1(\R^d)$. If $\nabla\sqrt{\rho-\rho_j}\to0$ strongly in $L^2(\R^d)$, then $\nabla\sqrt{\rho_j}\to\nabla\sqrt\rho$ strongly in $L^2(\R^d)$.
\end{lemma}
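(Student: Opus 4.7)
The plan is to parametrize $\rho_j$ relative to $\rho$ by the pair of ratios
\[ f_j := \sqrt{\rho_j/\rho}, \qquad g_j := \sqrt{(\rho-\rho_j)/\rho}, \]
defined on $\{\rho>0\}$ (on the complement $\rho_j$ vanishes as well, so nothing is lost). The hypothesis $\alpha\rho\leq\rho_j\leq\rho$ supplies the key uniform bounds $\sqrt\alpha\leq f_j\leq 1$ and $0\leq g_j\leq\sqrt{1-\alpha}$, together with the pointwise identity $f_j^2+g_j^2\equiv 1$ and the factorizations $\sqrt{\rho_j}=f_j\sqrt\rho$, $\sqrt{\rho-\rho_j}=g_j\sqrt\rho$. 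The hypothesis $\rho_j\to\rho$ a.e. translates to $f_j\to 1$ and $g_j\to 0$ a.e.

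The first step will be to establish, via the standard Sobolev chain rule applied on each open set $\{\rho>\delta\}$ (where $\sqrt\rho$ is uniformly bounded below) and a limiting argument, that $f_j$ and $g_j$ are weakly differentiable on $\{\rho>0\}$ with the expected product-rule identities. In particular,
\[ \sqrt\rho\,\nabla g_j = \nabla\sqrt{\rho-\rho_j} - g_j\nabla\sqrt\rho. \]
The right-hand side converges to $0$ in $L^2(\R^d)$: the first term by hypothesis, the second by dominated convergence, since $|g_j\nabla\sqrt\rho|\leq\sqrt{1-\alpha}\,|\nabla\sqrt\rho|\in L^2$ and $g_j\to 0$ a.e. Thus $\sqrt\rho\,\nabla g_j\to 0$ in $L^2$.

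Next, differentiating the constraint $f_j^2+g_j^2=1$ yields $f_j\nabla f_j=-g_j\nabla g_j$; the uniform bounds above then give
\[ \bigl|\sqrt\rho\,\nabla f_j\bigr| = \frac{g_j}{f_j}\bigl|\sqrt\rho\,\nabla g_j\bigr|\leq\sqrt{\tfrac{1-\alpha}{\alpha}}\,\bigl|\sqrt\rho\,\nabla g_j\bigr|, \]
so that $\sqrt\rho\,\nabla f_j\to 0$ in $L^2$ as well. Finally, I would expand
\[ \nabla\bigl(\sqrt{\rho_j}-\sqrt\rho\bigr) = (f_j-1)\nabla\sqrt\rho + \sqrt\rho\,\nabla f_j. \]
The second term is already controlled; the first tends to $0$ in $L^2$ by one more application of dominated convergence, using $|f_j-1|\leq 1-\sqrt\alpha$, $f_j\to 1$ a.e., and $|\nabla\sqrt\rho|^2\in L^1$. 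This yields the desired strong convergence $\nabla\sqrt{\rho_j}\to\nabla\sqrt\rho$ in $L^2(\R^d)$. I expect the main technical obstacle to be the rigorous justification of the chain rule for the ratios $f_j,g_j$ near $\{\rho=0\}$; everything else is an essentially one-line estimate once the parametrization is in place, the two uniform bounds $f_j\geq\sqrt\alpha$ and $g_j\leq\sqrt{1-\alpha}$ being exactly what makes the argument go through.
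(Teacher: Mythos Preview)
Your proof is correct and follows essentially the same approach as the paper's: your $f_j$ is exactly the paper's $u_j=\sqrt{\rho_j}/\sqrt{\rho}$, your $g_j$ is $\sqrt{1-u_j^2}$, and the key inequality $|\sqrt{\rho}\,\nabla f_j|\leq\sqrt{(1-\alpha)/\alpha}\,|\sqrt{\rho}\,\nabla g_j|$ is precisely the paper's final estimate $\int\rho|\nabla u_j|^2\leq\frac{1-\alpha}{\alpha}\int\frac{\rho u_j^2|\nabla u_j|^2}{1-u_j^2}$ in disguise. Your presentation is arguably a bit cleaner in that introducing $g_j$ as a separate variable avoids writing the potentially singular quotient $u_j\nabla u_j/\sqrt{1-u_j^2}$ explicitly, but the underlying argument is identical.
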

\begin{proof}[Proof of Lemma~\ref{lem:CV_gradients}]
We write $u_j=\sqrt{\rho_j}/\sqrt\rho$, which satisfies $\sqrt\alpha\leq u_j\leq1$ by assumption and which converges almost everywhere to $1$. Then the assumption means that
$$\nabla\sqrt{\rho-\rho_j}=\sqrt{1-u_j^2}\;\nabla\sqrt{\rho}-\sqrt{\rho}\frac{u_j\nabla u_j}{\sqrt{1-u_j^2}}\to0$$
strongly in $L^2(\R^d)$. The first term goes to 0 by dominated convergence, hence we conclude that the second term tends to $0$ in $L^2(\R^d)$. Then we have
\begin{align*}
\int_{\R^d}|\nabla\sqrt\rho-\nabla\sqrt{\rho_j}|^2&=\int_{\R^d}\left|(1-u_j)\nabla\sqrt\rho-\sqrt\rho\nabla u_j\right|^2\\
&\leq 2\int_{\R^d}(1-u_j)^2|\nabla\sqrt\rho|^2+2\int_{\R^d}\rho|\nabla u_j|^2.
\end{align*}
The first term converges again to 0, whereas the second can be estimated by
$$\int_{\R^d}\rho|\nabla u_j|^2\leq \frac{1-\alpha}{\alpha}\int_{\R^d}\frac{\rho u_j^2|\nabla u_j|^2}{1-u_j^2}\to0,$$
and the lemma follows.
\end{proof}

In our case, the inequality~\eqref{eq:estim_HO_truncate} implies that $\sum_{n\geq1}\int_{\R^d}|\nabla\sqrt{\rho_{\Gamma_n}}|^2$ is a convergent series. In addition, by the Cauchy-Schwarz inequality for series,
$$\int_{\R^d}\left|\nabla\sqrt{\sum_{n\geq K+1}\rho_{\Gamma_n}}\right|^2\leq \sum_{n\geq K+1}\int_{\R^d}|\nabla\sqrt{\rho_{\Gamma_n}}|^2\underset{K\to\ii}{\longrightarrow}0.$$
Since we have for $K\geq n_0$
$$\sum_{n=1}^K\rho_{\Gamma_n}\geq \alpha\rho$$
for some $\alpha>0$, the lemma implies the convergence~\eqref{eq:CV_gradient}.

At this step we have replaced $\Gamma=(\Gamma_n)_{n\geq0}$ by a new state $\Gamma'=(\Gamma'_n)$ with compact support in $n$, and a close energy. In addition, $\norm{\rho_{\Gamma'}-\rho_\Gamma}_{L^{1}(\R^d)}$ and  $\norm{\nabla\sqrt{\rho_{\Gamma'}}-\nabla\sqrt{\rho_\Gamma}}_{L^{2}(\R^d)}$ are small. To simplify our exposition we assume henceforth that $\Gamma$ itself satisfies $\Gamma_n\equiv0$ for $n\geq K+1$.

\subsubsection*{Step 3. Approximation by a state with compact support in space}
Next we localize the state $\Gamma$ in order to make it have a compact support in space. Let $\chi_R:=\chi(\cdot/R)$ for some $\chi\in C^\ii_c(\R^d)$ satisfying $\chi(0)=1$ and $0\leq\chi\leq1$. We consider the localized state $\Gamma_{|\chi_R}$ which has the density $\chi_R^2\rho$ and the energy
$$\frac12 \tr((-\Delta) \chi_R\Gamma^{(1)}\chi_R)+\frac12 \tr (w_{12}(\chi^2_R)^{\otimes 2}\Gamma^{(2)}).$$
Note that the space localization does not modify the support in $n$~\cite{Lewin-11}. That is, the state $\Gamma_{|\chi_R}$ satisfies $(\Gamma_{|\chi_R})_n\equiv0$ for $n\geq K+1$ (the same value as for $\Gamma$). The energy and the density converge strongly as $R\to\ii$ to that of $\Gamma$. Hence we can assume in the following that $\Gamma=(\Gamma_n)_{n\geq0}$ has both a compact support in $n$ and in space.

\subsubsection*{Step 4. Construction of the canonical sequence}
In the previous approximations we have replaced the initial grand-canonical state by a new state $\Gamma$ which has an energy close to the minimal energy $F_{\rm GC}[\rho]$ and a density close to the initial density, in all the appropriate function spaces. In this step we finally construct the sequence $\rho_j$ but it will only be close to $\rho_\Gamma$ in the spaces mentioned in the statement of the theorem. We recall that $\Gamma_n\equiv0$ for $n\geq K+1$.

Let $\phi_\ell\in C^\ii_c(B_1)$ be $K$ orthonormal functions in the unit ball $B_1\subset\R^d$ and define $\phi_{j,\ell}(\br)=j^{-d/2}\phi_\ell(\br/j-\bv)$ where $\bv\neq0$ is any fixed vector in $\R^d$. These are  $K$ orthonormal functions in the translated and dilated ball $j\bv +j B_1$. We then introduce the following $K$-particle mixed state
$$\Upsilon_j:=\Gamma_0 |S_{j,0}\rangle\langle S_{j,0}|+\sum_{n=1}^{K-1} \Gamma_n\wedge |S_{j,n}\rangle\langle S_{j,n}|+\Gamma_K$$
where
$$S_{j,n}:=\phi_{j,n+1}\wedge\cdots \wedge\phi_{j,K}$$
is an $(K-n)$-particle Slater determinant.
For $j$ large enough the $\Gamma_n$'s and $\phi_{j,\ell}$ have disjoint support, hence
$$\tr(\Upsilon_j)=\Gamma_0+\sum_{n=1}^N\tr(\Gamma_n)=1,$$
as required. After a lengthy but straightforward calculation, one finds that $\Upsilon_j$ has the energy
\begin{multline*}
\tr(H^{0,w}_{K}\Upsilon_j)=\sum_{n=1}^K\tr(H^{0,w}_n\Gamma_n)+\sum_{n=0}^{K-1}\pscal{S_{j,n}H^{0,w}_{K-n}S_{j,n})}\tr(\Gamma_n)\\
+\sum_{n=0}^{K-1} \int_{\R^d}\int_{\R^d}w(\br-\br')\rho_{\Gamma_n}(\br)\,\rho_{S_{j,n}}(\br')\,\rd\br\,\rd\br'.
\end{multline*}
Under our assumptions on $w$ the last two terms converge to $0$ in the limit $j\to\ii$, hence the energy of $\Upsilon_j$ converges to that of $\Gamma$. In particular,
$$\limsup_{j\to\ii} F_{\rm L}[\rho_{\Upsilon_j}]\leq \lim_{j\to\ii} \tr(H^{0,w}_{K}\Upsilon_j)=\sum_{n=1}^K\tr(H^{0,w}_n\Gamma_n).$$
In addition, the density is
$$\rho_{\Upsilon_j}=\rho_{\Gamma}+\sum_{n=0}^{K-1}\tr(\Gamma_n)\sum_{\ell=n+1}^{K}|\phi_{j,\ell}|^2$$
and its square root converges strongly to $\sqrt{\rho_{\Gamma}}$ in $\dot{H}^1(\R^d)\cap L^{p}(\R^d)$ for all $2<p<p^*$, but not for $p=2$.

\subsubsection*{Step 5. Conclusion}
Using an $\eps/2$ argument to justify the approximations made in Steps 1--3, we have managed to construct the sequence $\rho_j$ mentioned in the statement and proved that it satisfies
\begin{equation}
 \limsup_{j\to\ii}F_{\rm L}[\rho_j]\leq F_{\rm GC}[\rho].
\end{equation}
Next we notice that the lower bound
$$\liminf_{j\to\ii}F_{\rm L}[\rho_j]\geq \liminf_{j\to\ii}F_{\rm GC}[\rho_j] \geq F_{\rm GC}[\rho]$$
follows from $(i)$. Therefore we obtain the stated limit
$$\lim_{j\to\ii}F_{\rm L}[\rho_j]= F_{\rm GC}[\rho]$$
and this completes the proof of the theorem. \qed

% \bibliographystyle{siam}
% \bibliography{biblio}

\end{document}